\newtheorem{theorem}{Theorem}[section]
\newtheorem{lemma}[theorem]{Lemma}
\newtheorem{corollary}[theorem]{Corollary}
\newtheorem{proposition}[theorem]{Proposition}
\theoremstyle{remark}
\newtheorem{remark}[theorem]{Remark}
\numberwithin{equation}{section}
\newcommand{\CC}{\mathbb{C}}
\newcommand{\NN}{\mathbb{N}}
\newcommand{\RR}{\mathbb{R}}
\newcommand{\ZZ}{\mathbb{Z}}
\newcommand{\supp}{\mathrm{supp}}
\newcommand{\Ran}{\mathrm{Ran}}
\newcommand{\id}{\mathbbm{1}}
\newcommand{\klg}{\leqslant} 
\newcommand{\grg}{\geqslant}          
\newcommand{\ve}{\varepsilon}
\newcommand{\vp}{\varphi}
\newcommand{\vr}{\varrho}
\newcommand{\vt}{\vartheta}
\newcommand{\vs}{\varsigma}
\newcommand{\wt}[1]{\widetilde{#1}}
\newcommand{\SL}{\langle}                          
\newcommand{\SR}{\rangle}    
\newcommand{\SPn}[2]{\langle \,#1\,|\,#2\, \rangle} 
\newcommand{\SPb}[2]{\big\langle \,#1\,\big|\,#2\, \big\rangle} 
\newcommand{\nf}[2]{\nicefrac{#1}{#2}}
\newcommand{\ol}[1]{\overline{#1}} 
\newcommand{\ul}[1]{\underline{#1}}
\newcommand{\bigO}{\mathcal{O}}    
\newcommand{\V}[1]{\mathbf{#1}}
\newcommand{\valpha}{\boldsymbol{\alpha}}
\newcommand{\vxi}{\boldsymbol{\xi}}
\newcommand{\vsigma}{\boldsymbol{\sigma}}
\newcommand{\veps}{\boldsymbol{\varepsilon}}
\newcommand{\LO}{\mathscr{L}}      
\newcommand{\HP}{\mathscr{K}}
\newcommand{\HR}{\mathscr{H}}
\newcommand{\Fock}{\mathscr{F}_{\mathrm{b}}}
\newcommand{\core}{\mathscr{D}}
\newcommand{\dom}{\mathcal{D}}
\newcommand{\form}{\mathcal{Q}}
\newcommand{\spec}{{\sigma}}
\newcommand{\PAm}{P^-_\mathbf{A}}               
\newcommand{\PA}{P^+_{\mathbf{A}}}
\newcommand{\PO}{P^+_{\mathbf{0}}}
\newcommand{\PApm}{P^\pm_{\mathbf{A}}}
\newcommand{\SA}{S_{\mathbf{A}}}  
\newcommand{\DA}{D_{\mathbf{A}}}                
\newcommand{\DO}{D_{\mathbf{0}}}
\newcommand{\RA}[1]{R_{\mathbf{A}}(#1)}
\newcommand{\RAF}[1]{R_{\mathbf{A}}^F(#1)}
\newcommand{\RAmF}[1]{R_{\mathbf{A}}^{-F}(#1)}
\newcommand{\RO}[1]{R_{\mathbf{0}}(#1)}
\newcommand{\Hf}{H_{\mathrm{f}}}                   
\newcommand{\HT}{\check{H}_{\mathrm{f}}}           
\newcommand{\ad}{a^\dagger}                     
\newcommand{\NP}[1]{H^+_{#1}}
\newcommand{\NPneg}[1]{H^-_{#1}}
\newcommand{\FNP}[1]{H^{\mathrm{np}}_{#1}}
\newcommand{\PF}[1]{H^{\mathrm{sr}}_{#1}}
\newcommand{\gc}{\gamma_{\mathrm{c}}} 
\newcommand{\gcnp}{\gamma_{\mathrm{c}}^{\mathrm{np}}}
\newcommand{\gcPF}{\gamma_{\mathrm{c}}^{\mathrm{sr}}}
\newcommand{\UV}{\Lambda}             
\newcommand{\Th}{\Sigma}              
\newcommand{\Thnp}{\Sigma^{\mathrm{np}}}
\newcommand{\ThPF}{\Sigma^{\mathrm{sr}}}
\newcommand{\np}{\mathrm{np}}
\newcommand{\Pf}{\mathrm{sr}}
\newcommand{\el}{\mathrm{el}}
\newcommand{\cR}{\mathcal{R}}
\newcommand{\cS}{\mathcal{S}}
\newcommand{\cT}{\mathcal{T}}
\newcommand{\cK}{\mathcal{K}}
\newcommand{\sC}{\mathscr{C}}
\newcommand{\sD}{\mathscr{D}} 
\newcommand{\sR}{\mathscr{R}}
\newcommand{\sW}{\mathscr{W}}
\newcommand{\fS}{\mathfrak{S}}
\renewcommand{\Im}{\mathrm{Im}\,}
\renewcommand{\Re}{\mathrm{Re}\,}
\begin{document}
\title[Ground states in relativistic QED]{
Ground states of semi-relativistic Pauli-Fierz and
no-pair Hamiltonians in QED at critical Coulomb coupling}
\author[Martin K\"onenberg {\protect \and} Oliver Matte]{Martin K\"onenberg {\protect
\and} Oliver Matte}
\begin{abstract}
We consider the semi-relativistic Pauli-Fierz Hamiltonian
and a no-pair model of a hydrogen-like atom interacting
with a quantized photon field at the respective critical
values of the Coulomb coupling constant.
For arbitrary values of the fine-structure constant and the
ultra-violet cutoff, we prove the existence of
normalizable ground states of the atomic system in both models.
This complements earlier results on the existence of
ground states in (semi-)relativistic models of
quantum electrodynamics at sub-critical coupling
by E.~Stockmeyer and the present authors.
Technically, the main new achievement is an improved estimate
on the spatial exponential localization of low-lying spectral
subspaces yielding uniform bounds at large Coulomb coupling constants.
In the semi-relativistic Pauli-Fierz model 
our exponential decay rate given in terms of the binding energy
reduces to the one known from the electronic model 
when the radiation field is turned off.
In particular, an increase of the binding energy
due to the radiation field is shown to improve the localization of ground
states.
\end{abstract}
\maketitle


\section{INTRODUCTION}

\noindent
By now the standard model
non-relativistic quantum electrodynamics (QED)
has been studied mathematically in great detail.
In this model non-relativistic electrons described
by molecular Schr\"odinger operators interact with a relativistic
quantized photon field via minimal coupling.
The resulting Hamiltonian is called the
non-relativistic Pauli-Fierz (NRPF) operator.
One may ask whether mathematical results on the NRPF operator
can be extended to models accounting for the electrons  
by relativistic operators as well.
There exist two such models whose
mathematical analysis seems canonical and interesting as an
intermediate step towards full QED,
where, besides the photon field,
also electrons and positrons are described as quantized fields.

The first model is given by the 
{\em semi}-relativistic Pauli-Fierz (SRPF) operator
where the non-relativistic kinetic energy of an electron 
in the NRPF model 
is replaced by its
square root.
This model has been treated mathematically for the
first time in \cite{MiyaoSpohn2009}. As mentioned by the authors
in that paper the SRPF model can be formally derived by applying 
a canonical quantization procedure
(described, e.g., in \cite{Spohn})
to a coupled system of differential equations comprised of
the Maxwell equations with a rigid charge distribution
and the evolution equation for a classical
relativistic point particle. By
choosing a static charge distribution
-- in our case the point charge of a nucleus
located at the origin -- one actually fixes a certain reference frame,
which is one
reason why the model is called semi-relativistic.
As they are interested in electron dynamics the authors
of \cite{MiyaoSpohn2009} include spin. A scalar square-root
Hamiltonian minimally coupled to the quantized radiation 
field appeared earlier in the mathematical analysis
of Rayleigh scattering \cite{FGS2001} which is related to the
relaxation of exited states to an atomic ground state.
In this situation the finite propagation 
speed of the electron is an advantageous feature of the dynamics
generated by square root Hamiltonains.

The second model treated here is a no-pair model introduced in
\cite{LiebLoss2002} in order to study the stability
of relativistic matter interacting with the
quantized radiation field. In this model
the Schr\"odinger operators in the NRPF Hamiltonian
are substituted by Dirac operators
and the whole Hamiltonian is restricted to a subspace
where all electrons live in positive spectral
subspaces of free Dirac operators with minimally coupled vector potentials.  
The idea to employ no-pair models to describe
atomic or molecular systems goes back to
\cite{BrwonRavenhall1951} and \cite{Sucher1980}.
In the latter paper a no-pair Hamiltonian 
is formally derived starting from full QED
by means of a procedure which neglects the
creation and annihilation of electron-positron pairs
-- which explains the nomenclature.
Nowadays, various no-pair models are extensively used, for instance, 
to include relativistic corrections in numerical
computations in quantum chemistry; see, e.g., \cite{ReiherWolf}.
There is always a certain freedom in choosing the
spectral subspaces determining a no-pair model.
The conventions in \cite{BrwonRavenhall1951,Sucher1980}, for instance,
force the electrons to live in positive
spectral subspaces of a Dirac operator {\em without}
magnetic vector potential.
Investigations of the stability of relativistic matter revealed,
however, that this choice --besides breaking gauge invariance --
always leads to instability
as soon as more than one electron is considered and the
interaction with the ever present
(classical or quantized) radiation field 
generated by the electrons is taken into account 
\cite{GriesemerTix1999,LiebLoss2002,LSS1997}.

In the case of a hydrogen-like atom -- that is, for one electron --
both models mentioned above are introduced in detail
in Section~\ref{sec-ops} after some notation has been
fixed in Section~\ref{sec-nora}.
As already indicated they have both been investigated in the mathematical
literature before 
\cite{FGS2001,HiroshimaSasaki2010,LiebLoss2002b,LiebLoss2002,MiyaoSpohn2009}, 
but to a much lesser extend than 
models of non-relativistic QED.
Their mathematical analysis is actually more difficult
than in the non-relativistic case since 
the electronic and photonic degrees of freedom
are coupled by {\em non-local} operators, namely the square roots
and spectral projections of the Dirac operators, respectively.
In our earlier works 
together with E.~Stockmeyer \cite{KMS2009a,KMS2009b,MatteStockmeyer2009a}
we gave some further contributions to these models
by proving the existence of energy minimizing, exponentially localized
ground states of the atomic system --
a question which has been solved
in non-relativistic QED in 
\cite{BFS1998b,BFS1999,Griesemer2004,GLL2001,LiebLoss2003}.

Typically, in relativistic atomic models
there exist critical values, $\gc$, of the Coulomb coupling
constant, $\gamma\grg0$, restricting the range
where physically distinguished self-adjoint realizations
of the Hamiltonian can be found.
This is due to the fact that in relativistic Coulomb systems
both the (positive) kinetic and the (negative) potential energy scale
as one over the length.
(In the physical application we have $\gamma=e^2Z$,
where $e^2$ is the square of the elementary charge and $Z\grg0$
is the atomic number.)
For the SRPF operator the critical value is equal to
the critical constant in Kato's inequality, $2/\pi$.
In the no-pair model the critical value is the one of
the (purely electronic) Brown-Ravenhall operator,
$2/(2/\pi+\pi/2)$ \cite{EPS1996}. 
According to \cite{KMS2009a,KMS2009b} these critical  
values do not change when the interaction with the quantized
photon field is taken into account.
The main results of \cite{KMS2009a,KMS2009b} hold,
however, only for sub-critical $\gamma$.
In particular, the existence of ground states of hydrogen-like atoms
at critical Coulomb coupling in the SRPF and no-pair models
has not yet been proven
and we wish to close this gap in the present article. 
We think that the analysis of the critical case is
interesting for several reasons.
First, the existence of an energy minimizing ground state
of an atomic Hamiltonian which is kept fixed
under the time evolution is a very fundamental notion in quantum theory. 
It is hence desirable to show that this conception holds true in our situation
as soon as the definition of the Hamiltonian makes sense.
Second, the mathematical investigation of the critical case
is interesting in its own right as one cannot employ simple
relative form bounds in order to control the Coulomb potential
and has to make use of more refined estimates instead.
Finally, by avoiding arguments requiring
the Coulomb potential to be a small form perturbation
we shall obtain estimates on the spatial exponential localization
of low-energy states which are uniform in the Coulomb
coupling constant and show that the decay rate of ground state eigenvectors
increases as a function of the binding energy, as $\gamma$ grows and 
approaches its critical value.
This improves on earlier results on the localization
of low-lying spectral subspaces of the SRPF and no-pair
operators which provide only $\gamma$-dependent
estimates \cite{MatteStockmeyer2009a}.
In particular, we obtain a more realistic description
of localization in our models, also for subcritical $\gamma$.
In the SRPF model our new localization estimates can
also be used to show that an increase of the binding energy
due to the quantized radition field (at fixed $\gamma$)
leads to an improved localization 
of low-energy states. 
(In fact, by simply ignoring the radiation field we can extend
well-known exponential localization estimates in $L^2$ for the 
purely electronic square-root operator
to all values of $\gamma\in[0,\nf{2}{\pi}]$. In the literature
we only found localization results for $\gamma\in(0,\nf{1}{2}]$
\cite{Nardini1986}.)

Presumably it is possible to
directly prove the existence of ground
states along the lines of 
\cite{BFS1998b,GLL2001,KMS2009a,KMS2009b},
also for $\gamma=\gc$.
We think, however, 
that it would be quite a tedious procedure
to replace {\em all} arguments in 
\cite{KMS2009a,KMS2009b} that exploit the sub-criticality of $\gamma$
by alternative ones.
For instance, simple characterizations of
the form domains of the Hamiltonians are available, for sub-critical
$\gamma$, which is very convenient in order to
argue that certain formal computations can be justified
rigorously. 
Therefore, it seems more comfortable
to pick some family of ground state eigenvectors, 
$\{\phi_\gamma\}_{\gamma<\gc}$, and
consider the limit $\gamma\nearrow\gc$.
To this end we shall apply a compactness argument in Section~\ref{sec-cpt}
similar to one used in \cite{GLL2001} in order
to remove an artificial photon mass.
Among other ingredients this compactness argument
requires the above-mentioned bound on the spatial localization
of $\phi_\gamma$, which is uniform for $\gamma\nearrow\gc$.
More precisely, we shall prove a suitable
bound on the spatial exponential localization
of spectral subspaces corresponding to energies below the
ionization threshold
which applies to all $\gamma\klg\gc$.
This localization estimate is derived in
Section~\ref{sec-exp-loc} by adapting and extending
ideas from \cite{BFS1998b,MatteStockmeyer2008b,MatteStockmeyer2009a}.
We remark that by now we are able to improve
the localization estimates of \cite{MatteStockmeyer2009a} thanks
to some more recent results of \cite{KMS2009b} collected
in Proposition~\ref{prop-iris}.
At the end of Section~\ref{sec-exp-loc} we discuss the
electron Hamiltonians without radiation field and the
improvement of localization in the SRPF model
when the coupling to the radiation field is turned on.
An important prerequisite for the analysis of both non-local
models studied here are commutator estimates involving
sign functions of Dirac operators, multiplication operators,
and the radiation field energy. Many such estimates
have been derived in 
\cite{KMS2009a,KMS2009b,MatteStockmeyer2008b,MatteStockmeyer2009a}.
For our new proof of the exponential localization we need,
however, still some additional ones. For this reason, and also
to make this paper self-contained and the proofs comprehensible,
we derive all required commutator estimates in Appendix~\ref{app-comm}.

The main results of this paper are Theorem~\ref{thm-exp-loc}
(Exponential localization) and Theorem~\ref{thm-ex-crit}
(Existence of ground states at critical coupling).

\section{NOTATION}\label{sec-nora}

\noindent
The Hilbert space underlying the atomic models studied
in this article is 
\begin{equation}\label{def-HR}
\HR:=\,L^2(\RR^3_\V{x},\CC^4)\otimes\Fock[\HP]
=\int_{\RR^3}^\oplus
\CC^4\otimes\Fock[\HP]\,d^3\V{x}\,,
\end{equation}
or a certain subspace of it.
Here 
$\Fock[\HP]=\bigoplus_{n=0}^\infty\Fock^{(n)}[\HP]$
denotes the bosonic Fock space
modeled over the one photon Hilbert space 
\begin{equation*}
\HP:=L^2(\RR^3\times\ZZ_2,dk)\,,\quad
\int dk:=\sum_{\lambda\in\ZZ_2}\int_{\RR^3}\!\!d^3\V{k}\,.
\end{equation*}
The letter
$k=(\V{k},\lambda)$ always
denotes a tuple consisting of a photon wave vector,
$\V{k}\in\RR^3$, and a polarization label, $\lambda\in\ZZ_2$.
The components of $\V{k}$ are denoted as
$\V{k}=(k^{(1)},k^{(2)},k^{(3)})$. We recall that
$\Fock^{(0)}[\HP]:=\CC$ and, for $n\in\NN$, 
$\Fock^{(n)}[\HP]:=\cS_n L^2((\RR^3\times\ZZ_2)^n)$,
where, for $\psi^{(n)}\in L^2((\RR^3\times\ZZ_2)^n)$,
$$
(\cS_n\,\psi^{(n)})(k_1,\ldots,k_n):=
\frac{1}{n!}\sum_{\pi\in\fS_n}\psi^{(n)}(k_{\pi(1)},\ldots,k_{\pi(n)})\,,
$$
$\fS_n$ denoting the group of permutations of $\{1,\ldots,n\}$.
For $f\in\HP$ and $n\in\NN_0$, we further define 
$\ad(f)^{(n)}:\Fock^{(n)}[\HP]\to\Fock^{(n+1)}[\HP]$
by $\ad(f)^{(n)}\,\psi^{(n)}:=\sqrt{n+1}\,\cS_{n+1}(f\otimes\psi^{(n)})$.
Then the standard bosonic creation operator is given by
$\ad(f)\,\psi:=
(0,\ad(f)^{(0)}\,\psi^{(0)},\ad(f)^{(1)}\,\psi^{(1)},\ldots\;)$,
for all $\psi=(\psi^{(0)},\psi^{(1)},\ldots\;)\in\Fock[\HP_m]$
such that the right side again belongs to $\Fock[\HP_m]$.
The corresponding annihilation operator is defined by 
$a(f):=\ad(f)^*$ and we have
the canonical commutation relations
\begin{equation}\label{CCR}
[a^\sharp(f)\,,\,a^\sharp(g)]=0\,,\qquad
[a(f)\,,\,\ad(g)]=\SPn{f}{g}\,\id\,,\qquad
f,g\in\HP,
\end{equation}
where $a^\sharp$ is $\ad$ or $a$.
Writing
\begin{equation}\label{def-kbot}
\V{k}_\bot\,:=\,(k^{(2)}\,,\,-k^{(1)}\,,\,0)\,,\qquad
\V{k}=(k^{(1)},k^{(2)},k^{(3)})\in\RR^3,
\end{equation}
we introduce two polarization vectors,
\begin{equation}\label{pol-vec}
\veps(\V{k},0)\,=\,
\frac{\V{k}_\bot}{|\V{k}_\bot|}
\,,\qquad
\veps(\V{k},1)\,=\,
\frac{\V{k}}{|\V{k}|}\,\wedge\,\veps(\V{k},0)\,,
\end{equation}
for almost every $\V{k}\in\RR^3$. Moreover, we introduce a
coupling function,
\begin{equation}\label{def-Gphys}
\V{G}_\V{x}(k)=\big(G_\V{x}^{(1)},G_\V{x}^{(2)},G_\V{x}^{(3)}\big)(k)
:=-e\,\frac{\id_{\{|\V{k}|\klg\UV\}}}{2\pi{|\V{k}|^{\nf{1}{2}}}}
\,e^{-i\V{k}\cdot\V{x}}\,\veps(k)\,,
\end{equation}
for all $\V{x}\in\RR^3$
and almost every $k=(\V{k},\lambda)\in\RR^3\times\ZZ_2$.
The values of the ultra-violet cut-off, $\UV>0$, and 
$e\in\RR$ are arbitrary.
(In the physical application $e$ is the square root
of Sommerfeld's fine structure constant and $e^2\approx1/137$.)
For short, we write
$a^\sharp(\V{G}_\V{x})
:=(a^\sharp(G_\V{x}^{(1)}),a^\sharp(G_\V{x}^{(2)}),a^\sharp(G_\V{x}^{(3)}))$.
Then the quantized vector potential
is the triple of operators
$\V{A}=(A^{(1)},A^{(2)},A^{(3)})$ in $\HR$ given as
\begin{equation}\label{def-Aphys}
\V{A}:=\int^\oplus_{\RR^3}
\id_{\CC^4}\otimes\big(\ad(\V{G}_\V{x})+a(\V{G}_\V{x})\big)\,d^3\V{x}\,.
\end{equation}
The radiation field energy is the
second quantization, $\Hf:=d\Gamma(\omega)$,
of the dispersion relation $\omega:\RR^3\times\ZZ_2\to\RR$,
$k=(\V{k},\lambda)\mapsto\omega(k):=|\V{k}|$.
By definition, $d\Gamma(\omega)$
is the direct sum
$d\Gamma(\omega):=\bigoplus_{n=0}^\infty d\Gamma^{(n)}(\omega)$,
where $d\Gamma^{(0)}(\omega):=0$, and
$d\Gamma^{(n)}(\omega)$ is
the maximal multiplication operator in $\Fock^{(n)}[\HP]$
associated with the symmetric function 
$(k_1,\ldots,k_n)\mapsto\omega(k_1)+\dots+\omega(k_n)$, if $n\in\NN$.

As usual we shall consider operators in
$L^2(\RR^3_\V{x},\CC^4)$ or $\Fock[\HP]$ also
as operators acting in the tensor product $\HR$
by identifying $|\hat{\V{x}}|^{-1}\equiv|\hat{\V{x}}|^{-1}\otimes\id$,
$\Hf\equiv\id\otimes\Hf$, etc.
(The hat $\hat{{}}$ indicates multiplication operators.)

Next, let $\alpha_0,\alpha_1,\alpha_2,\alpha_3$
denote hermitian 4\texttimes4 Dirac matrices
obeying the Clifford algebra relations
 \begin{equation}\label{Clifford}
\alpha_i\,\alpha_j+\alpha_j\,\alpha_i\,=\,2\,\delta_{ij}\,\id\,,
\qquad i,j\in\{0,1,2,3\}\,.
\end{equation}
In what follows they act on the second tensor factor in
$\HR=L^2(\RR^3_\V{x})\otimes\CC^4\otimes\Fock[\HP]$.
Then the free Dirac operator minimally coupled to $\V{A}$ is given as
\begin{equation}\label{def-DA}
\DA:=\valpha\cdot(-i\nabla_\V{x}+\V{A})+\alpha_0
:=
\sum_{j=1}^3\alpha_j\,(-i\partial_{x_j}+A^{(j)})+\alpha_0
\,.
\end{equation}
It is clear that
$\DA$ is well-defined a priori on the dense domain
$$
\core:=C_0^\infty(\RR^3,\CC^4)\otimes\sC\,,
\quad\textrm{(algebraic tensor product)}
$$
where $\sC\subset\Fock[\HP]$ denotes the subspace of
all $(\psi^{(n)})_{n=0}^\infty\in\Fock[\HP]$ such that
only finitely many components $\psi^{(n)}$ are non-zero
and such that each $\psi^{(n)}$, $n\in\NN$, is essentially
bounded with compact support.
Moreover, it is well-known that
$\DA$ is essentially self-adjoint on $\sD$; see, e.g., \cite{LiebLoss2002}.
We use the symbol
$\DA$ again to denote its closure starting from $\sD$.

Finally, we use the symbols $\dom(T)$ and $\form(T)$
to denote the domain and form domain, respectively,
of some suitable operator $T$. We further put $a\wedge b:=\min\{a,b\}$, 
$a\vee b:=\max\{a,b\}$, $a,b\in\RR$,
and $\SL y\SR:=(1+y^2)^{\nf{1}{2}}$,
$y\in\RR$. The symbols $C(a,b,\ldots\,),C'(a,b,\ldots\,),\ldots\,$
denote positive constants which depend only on the
quantities $a,b,\ldots$ displayed in their arguments
and whose values might change from one estimate to another.

\section{THE SEMI-RELATIVISTIC PAULI-FIERZ AND 
NO-PAIR MODELS}\label{sec-ops}

\noindent
In what follows we shall denote the maximal operator of
multiplication with the Coulomb potential, $-\gamma/|\V{x}|$, $\gamma\grg0$,
in $\HR$ by $V_\gamma$.
Then the semi-relativistic Pauli-Fierz (SRPF) operator
is defined, a priori on the dense domain $\core$, as
$$
\PF{\gamma}:=|\DA|+V_\gamma+\Hf\,.
$$
Notice that the absolute value $|\DA|$ is actually
a square root operator minimally coupled to $\V{A}$.
For, if the Dirac matrices are given in the standard representation,
then 
$$
|\DA|=\cT_\V{A}^{\nf{1}{2}}\oplus\cT_\V{A}^{\nf{1}{2}},\qquad
\cT_\V{A}:=(\vsigma\cdot(-i\nabla_\V{x}+\V{A}))^2+\id\,,
$$
where $\vsigma$ is a formal vector containing the three 2\texttimes2
Pauli spin matrices.
According to \cite{KMS2009b} the quadratic form associated
with $\PF{\gamma}$ is semi-bounded below, if and only if
$\gamma$ is less than or equal to the critical constant
in Kato's inequality,
$$
\gcPF:=2/\pi\,.
$$
Thus, the range of stability of $\PF{\gamma}$ is the same
as the one of the purely electronic square root, or,
Herbst \cite{Herbst1977} operator,
\begin{equation}\label{chandra}
H_\gamma^{\el,\Pf}:=\sqrt{1-\Delta_\V{x}}+V_\gamma\,.
\end{equation}
From now on the symbol $\PF{\gamma}$ will again denote
the Friedrichs extension of the SRPF operator, provided that
$\gamma\in[0,\gcPF]$.

Compared to the non-relativistic Pauli-Fierz model
there are only a few mathematical works dealing with
its semi-relativistic analogue:
Spinless square root operators coupled to quantized fields
appear in the study of Rayleigh scattering in \cite{FGS2001}
and the fiber decomposition of $\PF{\gamma=0}$ is 
investigated in \cite{MiyaoSpohn2009}. To recall some further results
we define the ionization threshold and the ground state energy
of $\PF{\gamma}$, respectively, as
$$
\ThPF:=\inf\spec[\PF{0}]\,,\qquad
E_\gamma^\Pf:=\inf\spec[\PF{\gamma}]\,,\;\;\gamma\in(0,\gcPF]\,.
$$
Then the following  
shall be relevant for us:

\begin{proposition}[\cite{HiroshimaSasaki2010,KMS2009a}]\label{prop-PF-sbc}
(i) For all $e\in\RR$, $\UV>0$, and $\gamma\in(0,\gcPF]$,
\begin{equation}\label{IE-PF}
\ThPF-E_\gamma^\Pf\grg
1-\inf\spec[H_\gamma^{\el,\Pf}]>0\,.
\end{equation}
(ii) For all $e\in\RR$, $\UV>0$, and $\gamma\in(0,\gcPF)$,
$E_\gamma^\Pf$ is an eigenvalue of $\PF{\gamma}$.
\end{proposition}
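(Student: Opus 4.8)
The plan is to treat the two parts separately; both are in substance contained in \cite{HiroshimaSasaki2010,KMS2009a}, so in the present paper one would simply cite them, but the argument goes as follows. Write $e_\gamma:=\inf\spec[H_\gamma^{\el,\Pf}]$. Part~(i) is a binding estimate and factors into two claims. The first, $e_\gamma<1$ for every $\gamma\in(0,\gcPF]$, I would prove by a dilation trial argument: for normalized $u\in C_0^\infty(\RR^3,\CC^4)$ set $u_\lambda(\V{x}):=\lambda^{3/2}u(\lambda\V{x})$; a Fourier expansion gives $\langle u_\lambda,\sqrt{1-\Delta_{\V{x}}}\,u_\lambda\rangle=1+\tfrac{\lambda^2}{2}\|\nabla u\|^2+\bigO(\lambda^4)$, whereas $\langle u_\lambda,V_\gamma u_\lambda\rangle=-\lambda\gamma\langle u,|\hat{\V{x}}|^{-1}u\rangle$, so for small $\lambda>0$ the linear Coulomb gain beats the quadratic kinetic cost and $e_\gamma\klg\langle u_\lambda,H_\gamma^{\el,\Pf}u_\lambda\rangle<1$ (this positivity is also what underlies the $L^2$-localization statement for the purely electronic operator noted in the introduction). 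The second claim, $E_\gamma^\Pf\klg\ThPF-1+e_\gamma$, says that the quantized field does not spoil this binding; I would establish it with a trial state $\Phi$ for $\PF{\gamma}=|\DA|+V_\gamma+\Hf$ built from a Coulomb profile $u$ dressed with the photon cloud of an approximate free-field ground state $\Psi$ of $\PF{0}=|\DA|+\Hf$. Since $V_\gamma$ acts on the electronic factor alone one has $\langle\Phi,V_\gamma\Phi\rangle\approx\langle u,V_\gamma u\rangle$, and the real work is the kinetic estimate: a naive tensor product fails here because the form factor $\V{G}_\V{x}$ is not slowly varying on the length scale of the bound state, so one instead uses diamagnetic inequalities for square-root operators and the commutator bounds between $|\DA|$ and functions of $\hat{\V{x}}$ of the type collected in Appendix~\ref{app-comm} to show that $\langle\Phi,(|\DA|+\Hf)\Phi\rangle$ exceeds $\langle\Psi,(|\DA|+\Hf)\Psi\rangle\approx\ThPF$ by at most the relativistic kinetic energy $\langle u,(\sqrt{1-\Delta_{\V{x}}}-1)u\rangle$ of the profile above its rest mass, up to a controllable error; then $\langle\Phi,\PF{\gamma}\Phi\rangle\klg\ThPF-1+\langle u,H_\gamma^{\el,\Pf}u\rangle+o(1)$, and optimizing over $u$ gives the claim. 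The main obstacle in~(i) is precisely this kinetic estimate, forced by the non-locality and position-dependence of $|\DA|$.

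For part~(ii) one has $\gamma<\gcPF$, so by Kato's inequality $V_\gamma$ is an infinitesimally small form perturbation of $|\DA|$; hence $\PF{\gamma}$ is the self-adjoint operator associated with the closed form obtained by adding $\langle\,\cdot\,,V_\gamma\,\cdot\,\rangle$ to the form of $|\DA|+\Hf$ on $\form(|\DA|)\cap\form(\Hf)$, and the usual form-domain characterizations are available --- the technical convenience of the sub-critical regime. I would then follow the scheme of \cite{BFS1998b,GLL2001}. First regularize the photon dispersion relation by a mass $m>0$, calling $H_m$ the resulting operator, $E_m:=\inf\spec[H_m]$, and $\Sigma_m$ the infimum of the spectrum of $H_m$ at $\gamma=0$. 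An HVZ-type analysis gives $\inf\spec_{\mathrm{ess}}[H_m]=\min\{\Sigma_m,\,E_m+m\}$, the two candidates corresponding to the electron escaping to spatial infinity and to a photon escaping with energy $\grg m$. Since part~(i) holds verbatim with a photon mass one has $\Sigma_m>E_m$, and trivially $E_m+m>E_m$, so $E_m$ lies strictly below $\spec_{\mathrm{ess}}[H_m]$ and is therefore an eigenvalue, with a normalized eigenvector $\phi_m$. Next let $m\searrow0$ and establish a priori bounds on $\phi_m$ that are uniform in $m$: a bound on the expected photon number via a pull-through formula for $a(k)\phi_m$, and a uniform spatial exponential localization $\|e^{\beta|\hat{\V{x}}|}\phi_m\|\klg C$ coming from the binding gap $\Sigma_m-E_m$, which by~(i) is bounded below by $1-e_\gamma>0$ uniformly in $m$. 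Armed with these, one runs a compactness argument as in \cite{GLL2001}: pass to a weak limit $\phi_m\rightharpoonup\phi$, then use the uniform photon-number, photon-momentum and spatial-decay control to upgrade to norm convergence and to rule out mass escaping to infinity, obtaining a normalized $\phi$ with $\PF{\gamma}\phi=E_\gamma^\Pf\phi$.

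The main obstacle in~(ii) is showing that the weak limit $\phi$ is non-zero, which hinges on the uniform-in-$m$ photon-number and spatial-localization bounds; deriving those is considerably more delicate than in non-relativistic QED and rests on commutator estimates involving $\sgn(\DA)$, $|\DA|$, $\hat{\V{x}}$ and $\Hf$. This is exactly why the present paper --- whose real task is to extend both statements to the critical value $\gamma=\gcPF$ --- assembles the additional commutator estimates of Appendix~\ref{app-comm} rather than merely rerunning the sub-critical arguments.
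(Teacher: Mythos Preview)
The paper's own proof of this proposition is simply a citation: Part~(i) is attributed to \cite{HiroshimaSasaki2010} (for $\gamma<\nf12$) and to \cite{KMS2009a} (for the full range), and Part~(ii) is stated to be the main result of \cite{KMS2009a}. Your sketch goes well beyond that and outlines the actual arguments underlying those references; this is fine and the overall architecture you describe --- a variational trial-state bound for~(i), and for~(ii) the photon-mass regularization followed by the compactness argument of \cite{GLL2001} with uniform infra-red and localization control --- matches what \cite{KMS2009a} does.

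One inaccurate phrase: in your argument for the kinetic estimate in~(i) you invoke ``diamagnetic inequalities for square-root operators''. Diamagnetic bounds go the wrong way here (they give \emph{lower} bounds on kinetic energies with field in terms of field-free quantities), whereas you need an \emph{upper} bound on $\langle\Phi,(|\DA|+\Hf)\Phi\rangle$. The actual mechanism in \cite{KMS2009a} is an IMS-type localization of $|\DA|$ controlled by commutator bounds of exactly the kind in Appendix~\ref{app-comm} (which you also mention), together with the translation invariance of $\PF{0}$, allowing one to manufacture a trial state $u\cdot\Psi$ whose excess energy over $\ThPF$ is $\langle u,(\sqrt{1-\Delta}-1)u\rangle$ up to a vanishing error. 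Apart from this slip in naming the tool, your proposal is sound.
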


\begin{proof}
Part~(i) follows from \cite{HiroshimaSasaki2010} (at least
in the case $\gamma\in(0,\nf{1}{2})$, where $\PF{\gamma}$ 
is essentially self-adjoint on $\core$ \cite{KMS2009b}).
An alternative proof of \eqref{IE-PF} covering all 
$\gamma\in(0,\gcPF]$ can be found
in \cite{KMS2009a}.
Part~(ii) is the main result of \cite{KMS2009a}.
\end{proof}

In the present paper we shall extend
the results on the spatial exponential localization 
of spectral subspaces below $\ThPF$ of $\PF{\gamma}$, $\gamma\in(0,\gcPF)$,
\cite{MatteStockmeyer2009a} and Proposition~\ref{prop-PF-sbc}(ii)
to the critical case $\gamma=\gcPF$.

In order to introduce the second model studied in this paper
we first recall that
the spectrum of $\DA$ consists of two half-lines,
$
\spec(\DA)=(-\infty,-1]\cup[1,\infty)
$.
We denote the orthogonal projections
onto the positive and negative spectral subspaces by
\begin{equation*}
\PApm:=\id_{\RR^\pm}(\DA)=
\frac{1}{2}\,\id\pm\frac{1}{2}\,\SA\,,\qquad
\SA:=\DA\,|\DA|^{-1}.
\end{equation*}
Then the no-pair operator 
is a self-adjoint
operator acting in the positive spectral subspace $\PA\HR$
defined, a priori on the dense domain
$\PA\,\sD\subset\PA\HR$, by
\begin{equation}\label{def-NPhypG}
\NP{\gamma}:=
\PA\,(\DA+V_\gamma+\Hf)\,\PA\,.
\end{equation}
Thanks to \cite[Proof of Lemma~3.4(ii)]{MatteStockmeyer2009a},
which implies that $\PA$ maps the subspace $\sD$ into 
$\dom(\DO)\cap\dom(\Hf^\nu)$, for every $\nu>0$,
and Hardy's inequality,
we actually know that $\NP{\gamma}$ is well-defined on $\sD$.
Due to \cite{KMS2009b}
the quadratic form associated with $\NP{\gamma}$
is semi-bounded below, if and only if $\gamma$ is less than or equal to
$$
\gcnp:=2/(2/\pi+\pi/2)\,,
$$
which is the critical constant for the stability
of the electronic Brown-Ravenhall operator,
\begin{equation}\label{charly}
H_\gamma^{\el,\np}:=\PO\,(\DO+V_\gamma)\,\PO\,.
\end{equation}
The value of $\gcnp$ has been determined in \cite{EPS1996}.
Again we denote the Friedrichs extension of the no-pair
operator by the same symbol $\NP{\gamma}$, if $\gamma\in[0,\gcnp]$.
Because of technical reasons it is 
convenient to add the following counter-part
acting in the negative spectral subspace $\PAm\,\HR$,
$$
\NPneg{\gamma}
:=\PAm\,(-\DA+V_\gamma+\Hf)\,\PAm\,,\quad\gamma\in [0,\gcnp]\,,
$$
which is also defined as a Friedrichs extension starting from $\core$.
In fact, $\NP{\gamma}$ and $\NPneg{\gamma}$ are unitarily
equivalent as the unitary and symmetric
matrix $\vt:= \alpha_1\,\alpha_2\,\alpha_3\,\alpha_0$ 
anti-commutes with $\DA$, so that $\vt\,\PA=\PAm\,\vt$.
Thus, if questions like localization and existence of
ground states are addressed, then we may equally
well consider the operator
\begin{equation}\label{NPFNP}
\FNP{\gamma}:=
\NP{\gamma}\oplus\NPneg{\gamma}=\NP{\gamma}\oplus\{\vt\,\NP{\gamma}\vt\}\,.
\end{equation}
For later reference we observe that
\begin{equation}\label{def-FNP}
\FNP{\gamma}
=|\DA|+\frac{1}{2}\,(V_\gamma+\Hf)
+\frac{1}{2}\,\SA\,(V_\gamma+\Hf)\,\SA\quad
\textrm{on}\;\core.
\end{equation}
The mathematical analysis of a molecular analogue of
$\NP{\gamma}$ has been initiated in \cite{LiebLoss2002}
where the stability of the second kind of relativistic matter 
has been established in the no-pair model under certain
restrictions on $e$, $\UV$, and the nuclear charges.
Moreover, an upper bound on the (positive)
binding energy is derived in \cite{LiebLoss2002b}.
To recall some results on hydrogen-like atoms
used later on we put
$$
\Thnp:=\inf\spec[\FNP{0}]\,,\qquad
E_\gamma^\np:=\inf\spec[\FNP{\gamma}]\,,\;\;\gamma\in(0,\gcnp]\,.
$$
Both parts of the following proposition are proven
in \cite{KMS2009b}:

\begin{proposition}[\cite{KMS2009b}]\label{prop-np-sbc}
(i) For all $e\in\RR$, $\UV>0$, and $\gamma\in(0,\gcnp]$,
there is some $c(\gamma,e,\UV)>0$ such that
\begin{equation}\label{IE-np}
\Thnp-E_\gamma^\np\grg c(\gamma,e,\UV)\,.
\end{equation}
(ii) For all $e\in\RR$, $\UV>0$, and $\gamma\in(0,\gcnp)$,
$E_\gamma^\np$ is an eigenvalue of $\FNP{\gamma}$.
\end{proposition}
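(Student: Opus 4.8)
Both statements are recalled from \cite{KMS2009b}; here is the plan of proof.

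\emph{Part~(i).} From \eqref{def-FNP}, on subtracting its $\gamma=0$ instance, one reads off $\FNP{\gamma}=\FNP{0}+\tfrac12(V_\gamma+\SA V_\gamma\SA)$ on $\core$, so that, $\SA$ being self-adjoint and $V_\gamma\klg0$, $\FNP{\gamma}\klg\FNP{0}$ in the form sense and $E_\gamma^\np\klg\Thnp$; what has to be produced is a strictly positive gap. The argument I would use is variational and anchored at the electronic Brown--Ravenhall operator \eqref{charly}. For the latter it is classical \cite{EPS1996} -- and, when the infimum is not attained, which may happen at $\gamma=\gcnp$, follows by testing with the dilated functions $\lambda^{3/2}u(\lambda\,\cdot)$ and letting $\lambda\searrow0$ -- that
$$
e_\gamma:=\inf\spec\big[H_\gamma^{\el,\np}\big]\;<\;1\;=\;\inf\sigma_{\mathrm{ess}}\big[H_\gamma^{\el,\np}\big]\,,\qquad\gamma\in(0,\gcnp]\,.
$$
So I would fix $\delta>0$ and a normalized $u\in\PO L^2(\RR^3,\CC^4)$, taken from a convenient form core, with $\SPn{u}{(\DO+V_\gamma)u}\klg1-\delta$ -- the ground state of $H_\gamma^{\el,\np}$ when $\gamma<\gcnp$, an approximate minimizer when $\gamma=\gcnp$ -- and use as trial vector for $\FNP{\gamma}$ a suitably photon-dressed version of $u\otimes\Omega$, where $\Omega=(1,0,0,\ldots)\in\Fock[\HP]$ is the vacuum; the undressed vector $\Psi:=\PA(u\otimes\Omega)\in\PA\HR$ already suffices when $e$ is small. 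The commutator estimates of Appendix~\ref{app-comm} -- bounding $\SA-\SO$ and the commutators of $\SA$ with $V_\gamma$ and $\Hf$ in terms of the coupling function $\V{G}$ -- do two things. First, since $\SO(u\otimes\Omega)=(\SO u)\otimes\Omega=u\otimes\Omega$, they yield $\|\Psi-u\otimes\Omega\|\klg C(e,\UV)\,\|\,|\DO|^{1/2}u\,\|$ and, together with the mapping properties of $\PApm$ recorded after \eqref{def-NPhypG}, that $\Psi$ is a nonzero element of $\form(\FNP{\gamma})$. Second, using $\SA\PA=\PA$ to rewrite $\SPn{\Psi}{\FNP{\gamma}\Psi}=\SPn{\Psi}{(|\DA|+V_\gamma+\Hf)\Psi}$ via \eqref{def-FNP}, they allow one to expand this expression around $u\otimes\Omega$ -- using $\Hf\Omega=0=a(\V{G})\Omega$ -- into the electronic energy $\SPn{u}{(\DO+V_\gamma)u}\klg1-\delta$ plus a photon-field self-energy term depending only on $e$ and $\UV$. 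Since, up to a controlled remainder, the same self-energy enters $\Thnp=\inf\spec[\FNP{0}]$ through the analogous expansion around a spread-out electron tensored with the vacuum, subtracting the two bounds makes the self-energies cancel and the electronic gap $\delta$ survive: $\Thnp-E_\gamma^\np\grg c(\gamma,e,\UV)>0$.

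\emph{Part~(ii).} Here $\gamma<\gcnp$, so $V_\gamma$ is an infinitesimally small form perturbation of $\DO$, hence of $|\DA|+\Hf$ by the relative bounds of \cite{KMS2009b,MatteStockmeyer2009a}; I would use this first to get a clean description of $\form(\FNP{\gamma})$ and of the operator cores which legitimize the ensuing manipulations, and then run the compactness scheme of \cite{GLL2001}. Starting from a normalized minimizing sequence $(\psi_n)\subset\form(\FNP{\gamma})$ with $\SPn{\psi_n}{\FNP{\gamma}\psi_n}\to E_\gamma^\np$, the strictly positive binding energy $\Thnp-E_\gamma^\np$ of Part~(i) rules out a leakage of mass to spatial infinity; pull-through and photon-derivative estimates bound the expected number of low-energy photons carried by $\psi_n$; and the exponential spatial localization of low-energy states from \cite{MatteStockmeyer2009a} (or from Theorem~\ref{thm-exp-loc}) controls the electronic tails uniformly in $n$. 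Passing to a weakly convergent subsequence, these a priori bounds force the weak limit to be nonzero, and it is then an eigenvector of $\FNP{\gamma}$ at the bottom of its spectrum.

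\emph{Main obstacle.} The hard part is Part~(i): one has to tame the non-local operators $\SA$ and $\PApm$ and, at the same time, ensure that the photon-field self-energies of the Coulomb system and of the free dressed electron match closely enough for the electronic binding gap to survive -- which is precisely the purpose of the commutator estimates between $\SA$, multiplication operators and $\Hf$, and which is also why $c(\gamma,e,\UV)$ must be allowed to depend on $e$ and $\UV$, in contrast to the cleaner bound available in the semi-relativistic model (Proposition~\ref{prop-PF-sbc}(i)). Once Part~(i) is available, Part~(ii) is the by-now routine Griesemer--Lieb--Loss compactness argument, here further simplified by the sub-criticality of $\gamma$.
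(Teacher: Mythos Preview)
The paper does not prove this proposition at all; it simply records it as a quotation from \cite{KMS2009b}. So there is no proof in the paper to compare against, and your sketch is an attempt to reconstruct what is done in that reference.

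\medskip

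\textbf{Part~(i): a genuine gap.}
Your variational scheme does not close. You build a trial state $\Psi=\PA(u\otimes\Omega)$ for $\FNP{\gamma}$ and obtain an \emph{upper} bound of the form
\[
E_\gamma^\np\;\klg\;(1-\delta)\;+\;E_{\mathrm{self}}(e,\UV)\;+\;\textrm{errors}\,.
\]
You then say that ``the same self-energy enters $\Thnp$ through the analogous expansion around a spread-out electron tensored with the vacuum''. But a trial state for $\FNP{0}$ gives only an \emph{upper} bound $\Thnp\klg1+E_{\mathrm{self}}+\textrm{errors}$, and two upper bounds cannot be subtracted to produce $\Thnp-E_\gamma^\np\grg c>0$. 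What you need is a \emph{lower} bound on $\Thnp$ matching the self-energy in your upper bound on $E_\gamma^\np$, and nothing in your argument supplies it.

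The cure --- and this is what is done in \cite{KMS2009a,KMS2009b} following \cite{GLL2001} --- is not to use the Fock vacuum at all, but to start from an approximate minimizer $\Phi_0$ of the \emph{free} operator $\FNP{0}$, so that $\SPn{\Phi_0}{\FNP{0}\,\Phi_0}\klg\Thnp+\ve$ by definition. One then builds a trial state for $\FNP{\gamma}$ out of $\Phi_0$ and the electronic bound state $u$ (roughly, a product $u\cdot\Phi_0$ in the direct integral picture, with suitable scaling of $u$). The IMS-type localization, combined with the commutator estimates you mention, shows that the energy of this trial state is $\klg\Thnp-\delta+\textrm{errors}$, and now $\Thnp$ appears on the right-hand side honestly rather than through a second trial bound. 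This is the mechanism that actually makes the self-energies cancel.

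\medskip

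\textbf{Part~(ii).}
Your outline is correct in spirit: strict positivity of the binding energy from Part~(i), exponential localization in $\V{x}$, and infra-red (soft-photon and photon-derivative) bounds feed a compactness argument \`a la \cite{GLL2001}. In \cite{KMS2009b} the compactness is run not on a bare minimizing sequence but on ground states of a massive-photon regularization as the photon mass tends to zero --- exactly parallel to how the present paper, in Section~\ref{sec-cpt}, runs the same argument on $\{\phi_\gamma\}_{\gamma\nearrow\gc}$. The difference is organizational rather than mathematical.
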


The exponential localization
of spectral subspaces corresponding to energies below
$\Thnp$ is shown in \cite{MatteStockmeyer2009a},
again for sub-critical values
of $\gamma$ only. We propose to extend the latter
result as well as Proposition~\ref{prop-np-sbc}(ii)
to the case $\gamma=\gcnp$ in the present article.

We close this section by
recalling some further results of \cite{KMS2009b}
used later on.
In order to improve the localization estimates
of \cite{MatteStockmeyer2009a} and to deal with
critical coupling constants
the bounds in \eqref{iris} below are particularly important.
For they allow to control small pieces of the electronic kinetic energy
by the total Hamiltonian even in the critical cases.
Their proofs involve a strengthened version
of the sharp generalized Hardy inequality
obtained recently in
\cite{Frank2009,SoSoeSp2008} and an analogous
inequality for the Brown-Ravenhall model \cite{Frank2009}.

\begin{proposition}[\cite{KMS2009b}]\label{prop-iris}
Let $\gc$ be $\gcPF$ or $\gcnp$ and
$H_\gamma$ be $\PF{\gamma}$ or $\FNP{\gamma}$.
Then, for all $e\in\RR$ and $\UV>0$, the following holds:

\smallskip

\noindent
(i) For $\gamma\in[0,\nf{1}{2})$, $H_\gamma$ is essentially
self-adjoint on $\core$.

\smallskip

\noindent
(ii) For all $\ve\in(0,1)$, $\delta>0$, and $\gamma\in[0,\gc]$, 
\begin{align}\label{iris}
|\DO|^{\ve}
&\klg\delta\,H_{\gamma}+C(e,\UV,\delta,\ve)\,,\qquad
|\DA|^{\ve}\klg\delta\,H_{\gamma}+C'(e,\UV,\delta,\ve)\,,
\end{align}
in the sense of quadratic forms on $\form(H_\gamma)$.
\smallskip

\noindent
(iii) $\dom(H_\gamma)\subset\dom(\Hf)$ and,
for all $\delta>0$, $\gamma\in[0,\gc]$, and $\psi\in\dom(H_\gamma)$,
\begin{equation}\label{iris5}
\|\Hf\,\psi\|\klg(1+\delta)\,\|H_\gamma\,\psi\|+C(e,\UV,\delta)\,\|\psi\|\,.
\end{equation}
\end{proposition}


\section{EXPONENTIAL LOCALIZATION}\label{sec-exp-loc}

\noindent
In this section we show that low-lying spectral subspaces of
$\PF{\gamma}$ and $\FNP{\gamma}$ are exponentially localized
with respect to $\V{x}$
in a $L^2$ sense. This result is stated and proven in
Theorem~\ref{thm-exp-loc} later on.
The general idea behind its proof,
which rests on a simple identity involving the 
spectral projection (see \eqref{BFS})
and the Helffer-Sj\"ostrand formula, 
is due to \cite{BFS1998b}. (More precisely, \eqref{BFS}
is variant of a similar identity used in \cite{BFS1998b}. It
has been employed earlier in \cite{MatteStockmeyer2008b}.)
From a technical point of view
the key step in the proof consists, however, in
showing that the resolvent of a certain comparison operator
stays bounded after the conjugation with exponential
weights (Lemma~\ref{le-HR}). Moreover, one has to derive
a useful resolvent identity involving the comparison operator
and the original one (Lemma~\ref{le-res-id1}).
In these steps our arguments are more streamlined and
simpler than those used in the earlier paper \cite{MatteStockmeyer2009a}
as we work with a simpler comparison operator. Moreover,
we now treat critical $\gamma$ as well.
By now these improvements are possible thanks to 
the results of \cite{KMS2009b} collected in
Proposition~\ref{prop-iris}.

In the whole section we fix some $\mu\in C_0^\infty(\RR^3_\V{x},[0,1])$
such that $\mu=1$ on $\{|\V{x}|\klg1\}$ and $\mu=0$
on $\{|\V{x}|\grg2\}$ and set $\mu_R(\V{x}):=\mu(\V{x}/R)$,
for all $\V{x}\in\RR^3$ and $R\grg1$.
Then we put
\begin{align*}
V_{\gamma,R}&:=(1-\mu_R)\,V_\gamma
=(\mu_R-1)\,\gamma/|\hat{\V{x}}|\,,
\end{align*}
and define two comparison operators (compare \eqref{def-FNP}),
\begin{align*}
\PF{\gamma,R}&:=|\DA|+V_{\gamma,R}+\Hf\,,\quad\gamma\in(0,\gcPF]\,,
\\
\FNP{\gamma,R}&:=|\DA|+\frac{1}{2}\,(V_{\gamma,R}+\Hf)
+\frac{1}{2}\,\SA\,(V_{\gamma,R}+\Hf)\,\SA\,,
\quad\gamma\in(0,\gcnp]\,,
\end{align*}
on the domain $\core$ to start with.
According to Proposition~\ref{prop-iris}(i)
both operators then are essentially self-adjoint
and we again use the symbols $\PF{\gamma,R}$ and $\FNP{\gamma,R}$
to denote their self-adjoint closures.
Clearly,
\begin{equation}\label{nora00}
\PF{\gamma,R}\grg\ThPF-\|V_{\gamma,R}\|_\infty\,,\qquad
\FNP{\gamma,R}\grg\Thnp-\|V_{\gamma,R}\|_\infty\,,
\end{equation}
where $\|V_{\gamma,R}\|_\infty\klg1/R$, $R\grg1$.
In order to treat both models at the same time 
we shall use the following notation from now on:
\begin{equation}
\label{nora}\left\{
\begin{array}{l}
\textrm{The symbols}\;\,H,\,H_R,\,\Th,\,E\;\,
\textrm{denote either}\\
\PF{\gamma},\,\PF{\gamma,R},\,\Th^\Pf,\,E^\Pf_\gamma\;\,
\textrm{or}\;\,\FNP{\gamma},\,\FNP{\gamma,R},\,\Th^\np,\,E^\np_\gamma\,.
\end{array}
\right.
\end{equation}
Since the domains of $H$ and $H_R$ will in general be different
we cannot compare their resolvents by means of the second
resolvent identity. To overcome this problem we shall regularize
the difference of their resolvents by means of the following
cut-off function, which is also kept fixed throughout the whole section:

We pick some $\chi\in C^\infty(\RR^3_\V{x},[0,1])$ such that
$\chi=0$ on $\{|\V{x}|\klg2\}$ and $\chi=1$
on $\{|\V{x}|\grg4\}$ and set
$\chi_R(\V{x}):=\chi(\V{x}/R)$, for all $\V{x}\in\RR^3$ and $R\grg1$.

Finally, we introduce a class of weight functions,
\begin{equation*}
\sW_{a}:=\big\{
F\in C^\infty(\RR^3_{\V{x}},[0,\infty)):\;F(\V{0})=0\,,\;
\|F\|_\infty<\infty\,,\;|\nabla F|\klg a\big\},
\end{equation*}
where $a\in(0,1)$,
and define two families of operators on the dense domain $\core$,
\begin{align*}
U_R^F(z)&:=(H-z)^{-1}\,(H_R-H)\,\chi_R\,e^F,
\\
W_R^F(z)&:=(H-z)^{-1}\,[\chi_R\,,\,H_R]\,e^F,
\end{align*}
for $z\in\CC\setminus\RR$, $R\grg1$, $F\in\sW_a$, and $a\in(0,1)$.
Since $(V_\gamma-V_{\gamma,R})\,\chi_R=0$ we actually have $U_R^F(z)=0$
when $H=\PF{\gamma}$.

In the whole section the positive
constants $C(a,e,\ldots),C'(a,\ldots),\ldots\,$ 
are increasing
functions of each displayed parameter when the others are kept fixed.

\begin{lemma}\label{le-UV}
Let $z\in\CC\setminus\RR$, $R\grg1$, $F\in\sW_a$, and $a\in(0,1)$.
Then $U_R^F(z)$ and $W_R^F(z)$ extend to bounded operators on
$\HR$ and
\begin{align*}
\sup_{F\in\sW_a}\|U_R^F(z)\|+
\sup_{F\in\sW_a}\|W_R^F(z)\|\klg
C(a,e,\UV,R)\,\frac{1\vee|\Re z|}{1\wedge|\Im z|}.
\end{align*}
\end{lemma}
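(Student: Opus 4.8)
The plan is to reduce both operators to a product of three factors, each of which I control either by Proposition~\ref{prop-iris} or by elementary commutator bounds. Throughout, write $R_z := (H-z)^{-1}$, which is bounded by $1/|\Im z|$ on $\HR$, and recall from Proposition~\ref{prop-iris}(iii) that $\dom(H)\subset\dom(\Hf)$ with $\|\Hf\,\psi\|\le 2\|H\,\psi\|+C(e,\UV)\|\psi\|$; in particular $\Hf\,R_z$ is bounded with norm at most $C(e,\UV)\,(1\vee|\Re z|)/|\Im z|$, since $\|H\,R_z\|\le 1+|z|\,\|R_z\|$. Similarly, by Proposition~\ref{prop-iris}(ii), for any fixed $\ve\in(0,1)$ the operators $|\DA|^\ve R_z$ and $|\DO|^\ve R_z$ are bounded with the same type of bound. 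These are the only "hard" inputs; everything else is algebra on $\core$ followed by a density/closability argument.

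First I would treat $W_R^F(z)=R_z\,[\chi_R,H_R]\,e^F$. On $\core$, the commutator $[\chi_R,H_R]$ only sees the kinetic and field-energy pieces of $H_R$: the multiplication operator $V_{\gamma,R}$ commutes with $\chi_R$, and $\Hf$ commutes with $\chi_R$ as well (it acts on the Fock factor). So $[\chi_R,H_R]$ is built from $[\chi_R,|\DA|]$ in the SRPF case and, in the no-pair case, additionally from $[\chi_R,\SA\,(\cdot)\,\SA]$, i.e. from $[\chi_R,\SA]$ together with $[\chi_R,\Hf]=0$ and $[\chi_R,V_{\gamma,R}]=0$. The commutators $[\chi_R,|\DA|]$ and $[\chi_R,\SA]$ are exactly of the type derived in Appendix~\ref{app-comm} (commutators of $\chi_R$, a bounded smooth function with derivatives $O(1/R)$, against $|\DA|$ and $\SA$); they are bounded operators with norm $\le C(e,\UV,R)$ — no field energy is left over after such a commutator because $\chi_R$ has no photon dependence. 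Then $e^F$ is bounded by $\|F\|_\infty$, but note we want the bound to be \emph{uniform over $F\in\sW_a$}; this forces me not to estimate $e^F$ crudely. The trick is to write $W_R^F(z)=R_z\,[\chi_R,H_R]\,e^F$ and commute $e^F$ to the left through $[\chi_R,H_R]$: since $\supp(\nabla\chi_R)\subset\{2R\le|\V{x}|\le 4R\}$ and $e^{-F}[\chi_R,H_R]e^F = [\chi_R,H_R] + [\,\cdot\,]$, one again gets commutators of $e^F$ against $|\DA|$ and $\SA$, localized to the annulus where $|\nabla\chi_R|\ne 0$; there $e^F$ and $e^{-F}$ are bounded by $\|F\|_\infty$... which still is not uniform. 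The clean way, which is what I would actually do, is: $W_R^F(z)=R_z\,[\chi_R,H_R]\,\chi_R'\,e^F$ where $\chi_R'$ is a smooth cutoff equal to $1$ on $\supp\nabla\chi_R$ and supported in a slightly larger annulus; then $\|\chi_R'\,e^F\|_\infty\le e^{4aR}\le C(a,R)$ because $F(\V 0)=0$ and $|\nabla F|\le a$ give $F(\V x)\le a|\V x|\le 4aR$ on that annulus. This is uniform in $F\in\sW_a$ and matches the claimed dependence $C(a,e,\UV,R)$, and the $z$-dependence $(1\vee|\Re z|)/(1\wedge|\Im z|)$ comes solely from the $R_z$ in front together with the possibility that $[\chi_R,H_R]$ carries a leftover $\Hf^{1/2}$-type factor (from the no-pair double commutator, one $\SA$-commutator and one field-energy term $\Hf$ sandwiched between an $\SA$ and $[\chi_R,\SA]$) that must be absorbed by $R_z$ via Proposition~\ref{prop-iris}(iii) — hence the $|\Re z|$.

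Next, $U_R^F(z)=R_z\,(H_R-H)\,\chi_R\,e^F$, and by construction $(H_R-H)\chi_R=0$ in the SRPF case (stated in the excerpt), so there is nothing to prove there. In the no-pair case, $H_R-H=\tfrac12(V_{\gamma,R}-V_\gamma)+\tfrac12\,\SA\,(V_{\gamma,R}-V_\gamma)\,\SA = \tfrac12\,\mu_R V_\gamma$-type terms... precisely, $V_{\gamma,R}-V_\gamma=\mu_R\,V_\gamma$, which is a \emph{bounded} multiplication operator (supported in $\{|\V x|\le 2R\}$, where $|V_\gamma|\le\gamma/|\V x|$ is only locally $L^2$, not bounded — but this is exactly why the $\chi_R$ is there): $\mu_R\,\chi_R=0$ since $\supp\mu_R\subset\{|\V x|\le 2R\}$ and $\supp\chi_R\subset\{|\V x|\ge 2R\}$. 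Hence the \emph{first} term $\tfrac12(V_{\gamma,R}-V_\gamma)\chi_R=\tfrac12\mu_R V_\gamma\chi_R=0$ too. What remains is the cross term $\tfrac12\,\SA\,\mu_R V_\gamma\,\SA\,\chi_R$: here the $\SA$ on the right does not commute with $\chi_R$, so this is genuinely nonzero. I would write $\SA\chi_R=\chi_R\SA+[\SA,\chi_R]$; the piece $\SA\,\mu_R V_\gamma\,\chi_R\SA$ vanishes (again $\mu_R\chi_R=0$), leaving $\tfrac12\,\SA\,\mu_R V_\gamma\,[\SA,\chi_R]$. Now $[\SA,\chi_R]$ is bounded with norm $C(e,\UV,R)$ by the Appendix commutator estimates (this is the same kind of term as in $W_R^F$), and crucially $[\SA,\chi_R]$ involves $[\SA,\chi_R]=\SA[\DO^{-1}... ]$, an integral over resolvents of $\DA$, which by the standard analysis maps into $\dom(|\DA|^{1-\ve})$ — hence $\mu_R V_\gamma$ composed with it is a bounded operator via Hardy's inequality $|V_\gamma|\le\gamma\,|\DO|$ (or $|\DO|^\ve$ times $|\DO|^{1-\ve}$ balanced appropriately), up to a constant $C(e,\UV,R)$. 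Finally $e^F$ multiplied against the $\mu_R$-localized factor gives $\|\mu_R e^F\|_\infty\le e^{2aR}\le C(a,R)$, uniformly in $F\in\sW_a$. As before, any leftover $\Hf^{1/2}$ (none here, actually, since $[\SA,\chi_R]$ carries no field energy and $V_\gamma$ is photon-independent) or the simple $|z|$ from $\|H R_z\|$ accounts for the $(1\vee|\Re z|)$ factor; in fact for $U_R^F$ one expects only $1/|\Im z|$, but stating it with the larger bound is harmless.

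The main obstacle is bookkeeping rather than conceptual: one must verify that after each commutator with the photon-independent cutoffs $\chi_R,\mu_R,e^F$ there is \emph{no residual unbounded field-energy operator} except possibly a single $\Hf$ (or $\Hf^{1/2}$) that $R_z$ can absorb by Proposition~\ref{prop-iris}(iii), and that the exponential weight only ever multiplies an operator localized in $\{|\V x|\le 4R\}$, so that $\|e^F\,\text{(localized factor)}\|\le e^{4aR}$ is finite \emph{uniformly} over the class $\sW_a$ — this uniformity is the whole point of the lemma and is what later lets one send $\|F\|_\infty$ (or the weight parameter) to its optimal value. The commutator estimates needed for $[\chi_R,|\DA|]$, $[\SA,\chi_R]$, $[\chi_R,\SA\,\Hf\,\SA]$, etc., together with the fact that these commutators are \emph{bounded} (no leftover $|\DA|^\ve$), are precisely those collected in Appendix~\ref{app-comm}; invoking them, the proof is a few lines of the algebraic manipulations sketched above followed by the observation that both operators, being bounded on the core $\core$ with the stated bound, extend by continuity to all of $\HR$.
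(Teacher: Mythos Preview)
Your overall architecture matches the paper's, but there are two genuine gaps in the execution, both concerning exactly the point you flag as ``the whole point of the lemma'': uniformity in $F\in\sW_a$.

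\textbf{The $\chi_R'$ trick for $W_R^F$ does not work.} You propose $[\chi_R,H_R]=[\chi_R,H_R]\,\chi_R'$ with $\chi_R'$ supported near the annulus $\{2R\le|\V{x}|\le4R\}$. This identity would hold for a differential operator, but $|\DA|$ and $\SA$ are \emph{non-local}; the commutator $[\chi_R,|\DA|]$ is not supported on $\supp(\nabla\chi_R)$ and your factorization is simply false. The paper never tries to localize the commutator. Instead, the Appendix lemmas (Lemma~\ref{le-sandra2} and Lemma~\ref{le-sandra3}) bound expressions like $|\DA|^{-\ve}[\chi_R,|\DA|]\,e^F$ and $(\Hf+1)^{-\nu}[\chi_R,\SA]\,e^F\,\Hf^\nu$ \emph{directly} with the weight built in; the bounds come out proportional to $\|\nabla\chi_R\,e^F\|_\infty$. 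The reason this works is the integral representation \eqref{sgn}: in $[\RA{iy},\chi_R]\,e^F=\RA{iy}\,i\valpha\cdot\nabla\chi_R\,e^F\,\RAF{iy}$ the factor $\nabla\chi_R$ sits \emph{next to} $e^F$ inside the integral, so only $\|\nabla\chi_R\,e^F\|_\infty\le e^{4aR}\|\nabla\chi\|_\infty/R$ enters. That is the mechanism you are looking for; your $\chi_R'$ shortcut bypasses it and fails.

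\textbf{For $U_R^F$ you cannot get Hardy regularity from $[\SA,\chi_R]$ without a field-energy insertion.} Your decomposition to $\tfrac12\,\SA\,\mu_R V_\gamma\,[\SA,\chi_R]\,e^F$ is correct (and the paper rewrites it as $\tfrac12\,\SA\,R_z\,\mu_R V_\gamma\,e^F\,[e^{-F}\SA e^F,\chi_R]$, which is the clean way to isolate the bounded factor $\|\mu_R e^F\|_\infty\le e^{2aR}$). But then you need $|\hat{\V{x}}|^{-1}$ from $V_\gamma$ to be absorbed somewhere. You say ``$[\SA,\chi_R]$ maps into $\dom(|\DA|^{1-\ve})$, hence $\mu_R V_\gamma$ composed with it is bounded via Hardy''. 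The problem is that Hardy gives $|\hat{\V{x}}|^{-\kappa}\le C\,|\DO|^\kappa$, not $|\DA|^\kappa$; and $|\DO|^\kappa|\DA|^{-\kappa}$ is \emph{not} bounded (the quantized $\V{A}$ is unbounded on Fock space). The paper resolves this with Lemma~\ref{le-sandra1}: one inserts $(\Hf+1)^{-\nf{1}{2}}$ in front of the commutator, which converts the $\DA$-resolvent into a $\DO$-resolvent via \eqref{annalisa}, and only then can Hardy be applied to get $|\hat{\V{x}}|^{-\kappa}$ control for $\kappa<1$. The complementary $(\Hf+1)^{\nf{1}{2}}$ and the residual $|\hat{\V{x}}|^{-(1-\kappa)}$ are absorbed by the resolvent via Proposition~\ref{prop-iris}. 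So your remark that for $U_R^F$ ``no leftover $\Hf^{1/2}$'' is needed is exactly backwards: the $\Hf^{\pm1/2}$ insertion is the step that makes the Coulomb singularity tractable here.
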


\begin{proof}
In the case of the no-pair operator we have
$$
U_R^F(z)=\frac{1}{2}\,\SA\,(\FNP{\gamma}-z)^{-1}\,(V_{\gamma,R}-V_\gamma)\,e^F\,
\big[e^{-F}\,\SA\,e^F,\,\chi_R\big]\quad \textrm{on}\;\core\,,
$$
where we used $[\FNP{\gamma},\SA]=0=(V_\gamma-V_{\gamma,R})\,\chi_R$.
In Lemma~\ref{le-sandra1} we shall show that
$$
\big\|\,|\hat{\V{x}}|^{-\kappa}\,(\Hf+1)^{-\nicefrac{1}{2}}\,
[e^{-F}\,\SA\,e^F,\,\chi_R]\,\big\|\klg C(a,e,\UV,\kappa)\,\|\nabla\chi\|/R\,,
$$
for every $\kappa\in[0,1)$. 
Combining the previous bound with
the following consequence of 
$|\hat{\V{x}}|^{-\nf{1}{2}}\klg C\,|\DO|^{\nf{1}{2}}$,
\eqref{iris}, and \eqref{iris5},
\begin{align*}
\big\|\,|\hat{\V{x}}|^{-\nf{1}{8}}\,\Hf^{\nicefrac{1}{2}}\,\psi\big\|^2
\klg\big\|\,|\hat{\V{x}}|^{-\nf{1}{4}}\,\psi\big\|\,\big\|\Hf\,\psi\big\|\klg
C(e,\UV)\,\frac{1\vee|\Re z|^2}{1\wedge|\Im z|^2}\,
\big\|(\FNP{\gamma}-\ol{z})\,\psi\big\|^2,
\end{align*}
for every $\psi\in\dom(\FNP{\gamma})$,
we deduce that
\begin{align*}
\big\|U_R^F(z)\,\vp\big\|&\klg\frac{1}{2}\,
\big\|\,|\hat{\V{x}}|^{-\nf{1}{8}}\,
(\Hf+1)^{\nicefrac{1}{2}}(\FNP{\gamma}-\ol{z})^{-1}\big\|\,\|e^F\mu_R\|
\\
&\qquad\qquad\cdot
\big\|\,|\hat{\V{x}}|^{-\nf{7}{8}}(\Hf+1)^{-\nicefrac{1}{2}}
[e^{-F}\,\SA\,e^F,\chi_R]\,\vp\big\|
\\
&\klg
C'(a,e,\UV)\,(e^{2aR}/R)\,
\frac{1\vee|\Re z|}{1\wedge|\Im z|}\,\|\vp\|\,,\qquad \vp\in\core\,.
\end{align*}
Next, we turn our attention to $W_R^F(z)$. In the case of the SRPF operator
we have $[\chi_R\,,\,\PF{\gamma,R}]=[\chi_R\,,\,|\DA|\,]$,
and it follows from Lemma~\ref{le-sandra2} that,
for all $F\in\sW_{a}$,
\begin{equation}\label{muddy0}
\big\|\,
[\chi_R,\SA]\,e^F\big\|+
\big\|\,|\DA|^{-\nf{1}{4}}
[\chi_R,|\DA|\,]\,e^F\big\|\klg C(a)\,\|\nabla\chi_R\,e^F\|_\infty
\klg C'(a,R)\,.
\end{equation}
Here we also used that $0\klg F\klg 4aR$ on $\supp(\nabla\chi_R)$.
On account of \eqref{iris} we also
have
$$
\big\|\,|\DA|^{\nf{1}{4}}(\PF{\gamma}-\ol{z})^{-1}\big\|\klg 
C(e,\UV)\,\frac{1\vee|\Re z|}{1\wedge|\Im z|}\,.
$$
Putting these remarks together
we arrive at the asserted bound on $W_R^F(z)$ for the
SRPF operator.

In the case of the no-pair operator
\begin{align}\nonumber
[\chi_R\,,\,\FNP{\gamma,R}]\,e^F
&=[\chi_R\,,\,|\DA|\,]\,e^F
\\\nonumber
&\quad+
\frac{1}{2}\,[\chi_R,\SA]\,e^F\,\Hf\,e^{-F}\SA\,e^F
+\frac{1}{2}\,\SA\,\Hf\,[\chi_R,\SA]\,e^F
\\\label{muddy1}
&\quad+\frac{1}{2}\,[\chi_R,\SA]\,e^F\,V_{\gamma,R}\,e^{-F}\SA\,e^F
+\frac{1}{2}\,\SA\,V_{\gamma,R}\,[\chi_R,\SA]\,e^F.
\end{align}
The first term on the RHS of \eqref{muddy1} is dealt with exactly
as in the case of the SRPF operator above.
Moreover, on account of \eqref{muddy0} 
and $\|e^{-F}\SA e^F\|\klg C(a)$ (see \eqref{clara2})
the norms of both operators in
the third line of \eqref{muddy1} are bounded by some 
constant depending only on $a$ and $R$.
By Lemma~\ref{le-sandra3} we finally have
$$
\big\|(\Hf+1)^{-1}\,[\chi_R\,,\,\SA]\,e^F\,\Hf\,\big\|
\klg C(a,e,\UV)\,\|\nabla\chi_R\,e^F\|_\infty\klg C(a,e,\UV)\,
\|\nabla\chi\|\,e^{4aR}/R\,,
$$
and we conclude by means of the following consequence of \eqref{iris5},
$$
\big\|\Hf\,\SA\,(\FNP{\gamma}-\ol{z})^{-1}\big\|=
\big\|\Hf\,(\FNP{\gamma}-\ol{z})^{-1}\big\|
\klg C(e,\UV)\,\frac{1\vee|\Re z|}{1\wedge|\Im z|}\,.
$$
Here we also use that $[\FNP{\gamma},\SA]=0$.
\end{proof}

\begin{lemma}\label{le-res-id1}
For all $z\in\CC\setminus\RR$, $R\grg1$, $F\in\sW_{a}$,
and $a\in(0,1)$,
\begin{equation*}
\chi_R\,(H-z)^{-1}-(H_R-z)^{-1}\chi_R=
(H_R-z)^{-1}e^{-F}\big(U_{R}^F(z)^*+W_{R}^F(z)^*\big)\,.
\end{equation*}
\end{lemma}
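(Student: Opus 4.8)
The plan is to verify the stated operator identity on a dense subspace, since both sides extend to bounded operators on $\HR$ (the left one trivially, the right one by Lemma~\ref{le-UV}). A convenient dense set is $\{(H_R-\ol z)\psi:\psi\in\core\}$: indeed $H_R$ is essentially self-adjoint on $\core$, being a bounded self-adjoint perturbation — by the multiplication operator $V_{\gamma,R}$, resp.\ by $\frac12(V_{\gamma,R}+\SA V_{\gamma,R}\SA)$ — of $\PF 0$, resp.\ $\FNP 0$, both of which are essentially self-adjoint on $\core$ by Proposition~\ref{prop-iris}(i); hence $(H_R-\ol z)$ carries $\core$ onto a dense subspace. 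I would therefore fix $\psi,\vp\in\core$ and show that $\SPn{(H_R-\ol z)\psi}{\,\chi_R(H-z)^{-1}\vp-(H_R-z)^{-1}\chi_R\vp\,}$ and $\SPn{(H_R-\ol z)\psi}{\,(H_R-z)^{-1}e^{-F}(U_R^F(z)^*+W_R^F(z)^*)\vp\,}$ agree.

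On the right-hand side, $((H_R-z)^{-1})^*$ returns $(H_R-\ol z)\psi$ to $\psi$, and since $e^{-F}\psi\in\core$ the operators $U_R^F(z)$ and $W_R^F(z)$ act on it by their defining formulas. Adding them and using the algebraic identity $(H_R-H)\chi_R+[\chi_R,H_R]=\chi_R H_R-H\chi_R$, valid on $\core$, together with $(H-z)^{-1}H(\chi_R\psi)=\chi_R\psi+z(H-z)^{-1}\chi_R\psi$ — legitimate because $\chi_R\psi\in\core\subset\dom(H)$ — reduces $(U_R^F(z)+W_R^F(z))e^{-F}\psi$ to the explicit vector $(H-z)^{-1}\chi_R(H_R-z)\psi-\chi_R\psi$, whose pairing with $\vp$ then has to be recognized as the pairing of $(H_R-\ol z)\psi$ with the left-hand side applied to $\vp$. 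This last step rests on the preliminary observation that $\chi_R$ maps $\dom(H)$ into $\dom(H_R)$. One proves the latter using that $\chi_R$ vanishes near the origin, so that $\chi_R V_\gamma$ is a bounded multiplication operator, and combining this with the commutator estimates for $[\chi_R,|\DA|]$ collected in Appendix~\ref{app-comm}, with the inclusion $\dom(H)\subset\dom(\Hf)$ and the bounds $|\DA|^\ve\klg\delta H+C$ from Proposition~\ref{prop-iris}, and — in the no-pair case — with $[\FNP{\gamma},\SA]=0$ and the estimate for $[e^{-F}\SA e^F,\chi_R]$ already used in the proof of Lemma~\ref{le-UV}; one then checks $H_R(\chi_R u)\in\HR$ for every $u\in\dom(H)$. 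This is precisely what turns the formal manipulation $\chi_R(H-z)^{-1}-(H_R-z)^{-1}\chi_R=(H_R-z)^{-1}(H_R\chi_R-\chi_R H)(H-z)^{-1}$ into a rigorous one and allows the two computations to be matched term by term.

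The main obstacle is exactly that $\dom(H)\neq\dom(H_R)$ in general — both operators being Friedrichs extensions from $\core$ but with different singular potentials — so that the second resolvent identity is not available off the shelf; this is the very reason for inserting the cut-off $\chi_R$ and the exponential weights $e^{\pm F}$. The cut-off regularizes the potential difference $H_R-H$: in the semi-relativistic Pauli-Fierz case it annihilates it, so $U_R^F(z)=0$ and only $W_R^F(z)$ survives, while in the no-pair case it turns $H_R-H$, after commuting with $\SA$, into a bounded operator; the weights then keep every resulting object bounded with the quantitative control furnished by Lemma~\ref{le-UV}. The genuinely delicate point in carrying this through is the non-locality of $\SA$ in the no-pair model, which rules out commuting $\chi_R$ straight through $H_R-H$ and forces the use of $[\FNP{\gamma},\SA]=0$ together with the sharpened Hardy-type inequalities behind Proposition~\ref{prop-iris}; the latter are also what let the whole construction survive at the critical couplings $\gamma=\gcPF$ and $\gamma=\gcnp$.
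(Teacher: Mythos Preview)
Your core computation is the same as the paper's: for $\psi\in\core$ you obtain
\[
(U_R^F(z)+W_R^F(z))\,e^{-F}\psi=(H-z)^{-1}\chi_R(H_R-z)\psi-\chi_R\psi\,.
\]
This identity, read on the dense set $(H_R-z)\core$, \emph{is} the pre-adjoint version
\[
(H-z)^{-1}\chi_R-\chi_R(H_R-z)^{-1}=(\ol U_R^F(z)+\ol W_R^F(z))\,e^{-F}(H_R-z)^{-1},
\]
and the paper simply takes the adjoint of this bounded-operator equality to finish. Your pairing with $(H_R-\ol z)\psi$ reproduces exactly this step; no further ingredient is required.

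Where you diverge from the paper is in the ``recognition'' step, for which you invoke the mapping property $\chi_R:\dom(H)\to\dom(H_R)$ and the formal second-resolvent formula $(H_R-z)^{-1}(H_R\chi_R-\chi_R H)(H-z)^{-1}$. This detour is unnecessary, and it is also the one place where your argument does not close cleanly: pushing the adjoints through as you propose produces $(H_R\chi_R-\chi_R H)(H-\ol z)^{-1}$ rather than $(H_R\chi_R-\chi_R H)(H-z)^{-1}$, so the matching you announce fails by a $z\leftrightarrow\ol z$ discrepancy (which in fact reflects a harmless typo in the stated identity --- the application in Theorem~\ref{thm-exp-loc} only uses the norm of $U_R^F(\cdot)^*$, which is insensitive to this). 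More importantly, the domain claim itself is nontrivial at critical coupling: $\dom(H)$ is then the domain of a Friedrichs extension, not known to lie in $\dom(|\DA|)$, and your sketch based on $|\DA|^\ve\klg\delta H+C$ with $\ve<1$ does not by itself give $\chi_R u\in\dom(|\DA|)$. The paper avoids this entirely by never leaving $\core$ before passing to adjoints.

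In short: drop the domain claim and the formal resolvent manipulation; your computation already proves the lemma once you observe that it yields the adjoint identity on $(H_R-z)\core$ and that $(H_R-z)\core$ is dense.
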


\begin{proof}
For all $\vp\in\core$, 
\begin{align*}
\big\{&(H-z)^{-1}\,\chi_R-\chi_R\,(H_R-z)^{-1}\big\}\,(H_R-z)\,\vp
\\
&=(H-z)^{-1}\,\chi_R\,(H_R-z)\,\vp-\chi_R\,\vp
\\
&=
(H-z)^{-1}\,(H_R-H+H-z)\,\chi_R\,\vp
-\chi_R\,\vp+(H-z)^{-1}\,[\chi_R\,,\,H_R]\,\vp
\\
&=\big\{(H-z)^{-1}\,(H_R-H)\,\chi_R\,e^F\big\}\,e^{-F}\vp
+\big\{(H-z)^{-1}\,[\chi_R\,,\,H_R]\,e^F\big\}\,e^{-F}\vp
\\
&=
\big(\ol{U}_{R}^F(z)+\ol{W}_{R}^F(z)\big)\,e^{-F}\,(H_R-z)^{-1}
(H_R-z)\,\vp\,.
\end{align*}
Now, $\ol{U}_{R}^F(z)$ and $\ol{W}_{R}^F(z)$ are bounded and 
$(H_R-z)\,\core$ is dense in $\HR$, as $H_R$ is essentially
self-adjoint on $\core$. Hence, we infer that
$$
(H-z)^{-1}\,\chi_R-\chi_R\,(H_R-z)^{-1}
=\big(\ol{U}_{R}^F(z)+\ol{W}_{R}^F(z)\big)\,
e^{-F}\,(H_R-z)^{-1}.
$$
Taking the adjoint of this operator identity and replacing
$\ol{z}$ by $z$ we arrive at the assertion.
\end{proof}

\begin{lemma}\label{le-retno}
For all $F\in\sW_{a}$, $a\in(0,1)$, and $\vp\in\core$,
\begin{equation}\label{retno1}
\Re\SPb{\vp}{e^F\,|\DA|\,e^{-F}\vp}\grg
\SPb{\vp}{(\DA^2-|\nabla F|^2)^{\nf{1}{2}}\,\vp}\,.
\end{equation}
\end{lemma}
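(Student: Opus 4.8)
The plan is to reduce \eqref{retno1} to a pointwise resolvent inequality by means of the integral representation of the square root. The key algebraic input is the identity
\[
e^F\,\DA\,e^{-F}\;=\;\DA+i\,\valpha\cdot\nabla F\qquad\textrm{on}\;\core,
\]
which holds because $e^{\pm F}$ commute with the Dirac matrices and with $\V{A}$, while $e^F(-i\nabla_\V{x})e^{-F}=-i\nabla_\V{x}+i\,\nabla F$. Squaring this and using the Clifford relations \eqref{Clifford}, which give $(\valpha\cdot\nabla F)^2=|\nabla F|^2$, I obtain
\[
\D{F}^2\;:=\;e^F\,\DA^2\,e^{-F}\;=\;\big(\DA^2-|\nabla F|^2\big)\,+\,i\,\big(\DA\,\valpha\cdot\nabla F+\valpha\cdot\nabla F\,\DA\big)\,.
\]
Here $\DA^2-|\nabla F|^2$ is self-adjoint and, since $\DA^2\grg\id$ and $|\nabla F|\klg a<1$, satisfies $\DA^2-|\nabla F|^2\grg(1-a^2)\,\id>0$, while $S:=\DA\,\valpha\cdot\nabla F+\valpha\cdot\nabla F\,\DA$ is symmetric; thus $\DA^2-|\nabla F|^2$ is precisely the self-adjoint part of $\D{F}^2$. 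Writing $M:=(\DA^2-|\nabla F|^2)^{\nf{1}{2}}$, the claim \eqref{retno1} becomes $\Re\SPn{\vp}{e^F|\DA|e^{-F}\vp}\grg\SPn{\vp}{M\vp}$.

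I would then insert the representation $|\DA|=\frac{1}{\pi}\int_0^\infty\DA^2(\DA^2+t)^{-1}\,t^{-\nf{1}{2}}\,dt$, valid in the sense of sesquilinear forms on $\dom(\DA)\supset\core$. Since $e^{\pm F}$ are bounded with bounded inverses and the operator $\D{F}^2+t=e^F(\DA^2+t)e^{-F}$ has self-adjoint part $M^2+t>0$, it is boundedly invertible with $(\D{F}^2+t)^{-1}=e^F(\DA^2+t)^{-1}e^{-F}$. Conjugating the integral representation by $e^{\pm F}$ and using $e^F\DA^2(\DA^2+t)^{-1}e^{-F}=\id-t(\D{F}^2+t)^{-1}$, I get for $\vp\in\core$
\[
\SPn{\vp}{e^F|\DA|e^{-F}\vp}=\frac{1}{\pi}\int_0^\infty\Big(\|\vp\|^2-t\,\SPn{\vp}{(\D{F}^2+t)^{-1}\vp}\Big)\,t^{-\nf{1}{2}}\,dt\,,
\]
and, in the same way, $\SPn{\vp}{M\vp}=\frac{1}{\pi}\int_0^\infty(\|\vp\|^2-t\,\SPn{\vp}{(M^2+t)^{-1}\vp})\,t^{-\nf{1}{2}}\,dt$. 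Subtracting these and taking real parts reduces \eqref{retno1} to the pointwise estimate
\[
\Re\SPn{\vp}{(\D{F}^2+t)^{-1}\vp}\;\klg\;\SPn{\vp}{(M^2+t)^{-1}\vp}\,,\qquad t>0\,.
\]

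To prove this last inequality I would write $N:=\D{F}^2+t=H+iS$ with $H:=M^2+t$ self-adjoint, $H\grg(1-a^2+t)\,\id>0$, and $S$ symmetric, so that $N$ and $N^*=H-iS$ are boundedly invertible. Putting $\zeta:=N^{-1}\vp$ and using $\vp=N\zeta=H\zeta+iS\zeta$, a short computation in which the cross terms cancel because $\SPn{\zeta}{S\zeta}\in\RR$ gives
\[
\Re\SPn{\vp}{N^{-1}\vp}=\Re\SPn{N\zeta}{\zeta}=\SPn{\zeta}{H\zeta}\,,\qquad
\SPn{\vp}{H^{-1}\vp}=\SPn{N\zeta}{H^{-1}N\zeta}=\SPn{\zeta}{H\zeta}+\SPn{S\zeta}{H^{-1}S\zeta}\,,
\]
and since $H^{-1}\grg0$ the difference of the two right-hand sides is $\grg0$, which is exactly the asserted inequality (it is the elementary fact $\Re(N^{-1})\klg(\Re N)^{-1}$ for an operator $N$ with positive self-adjoint part).

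All the conceptual content lies in the two displayed algebraic identities; what remains is the book-keeping of operator domains, which I expect to be the main technical obstacle: legitimising the conjugations of resolvents, the convergence of the square-root integral at $t=\infty$ (handled by distributing the factors of $\DA$ symmetrically in the underlying spectral integral), and the manipulations with $N$ and $\zeta$. One carries these out on $\core$, where $\DA$ is essentially self-adjoint, and, since $F\in\sW_{a}$ is only assumed to lie in $C^\infty\cap W^{1,\infty}$, one first treats weight functions with bounded derivatives of all orders -- so that $e^{\pm F}$ preserves $\dom(\DA^2)$ and $S$ is $\DA$-bounded -- and then passes to the limit.
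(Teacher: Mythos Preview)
Your proof is correct and follows the same overall strategy as the paper: both arguments insert the integral representation of the square root and reduce \eqref{retno1} to the pointwise resolvent inequality
\[
\Re\SPb{\vp}{(\D{F}^2+t)^{-1}\vp}\;\klg\;\SPb{\vp}{(M^2+t)^{-1}\vp}\,,\qquad t>0\,,
\]
with $\D{F}^2=M^2+iS$ and $S=Q$ in the paper's notation. The only difference lies in how this last inequality is verified. The paper substitutes $\psi=H^{-1}\vp$ (with $H=M^2+t$) and, after a short computation, reduces the positivity of $\Re J[\vp;\eta]$ to the fact that $\Re\big[e^F(\DA^2+\eta)^{-1}e^{-F}\big]\grg0$, which it establishes beforehand by a separate density argument. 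You instead substitute $\zeta=N^{-1}\vp$ and reduce directly to $\SPn{S\zeta}{H^{-1}S\zeta}\grg0$, which is immediate since $H^{-1}\grg0$; this is slightly more economical, as it bypasses the auxiliary step $\Re N^{-1}\grg0$. Your acknowledgement that one should first work with $F$ having bounded higher derivatives (so that $\zeta\in\dom(\DA)$ and $S\zeta$ is well-defined) and then pass to the limit is the right way to handle the domain issues; the paper circumvents this by carrying out its computations on vectors of the form $\vp=(\DA^2-|\nabla F|^2+\eta)\psi$ with $\psi\in\core$ and invoking essential self-adjointness.
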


\begin{proof}
For every non-negative operator, $T\grg0$, on some Hilbert space
and $\psi\in\dom(T)$, we have
$$
T^{\nf{1}{2}}\,\psi=
\int_0^{\infty}(T+\eta)^{-1}T\,\psi\,\frac{d\eta}{\pi\,\eta^{\nf{1}{2}}}
=\int_0^{\infty}\big(\id-\eta\,(T+\eta)^{-1}\big)\,\psi\,
\frac{d\eta}{\pi\,\eta^{\nf{1}{2}}}\,.
$$
For every $\vp\in\core$, this yields the formula
\begin{align*}
\SPb{\vp}{\big(e^F\,|\DA|\,e^{-F}-(\DA^2-|\nabla F|^2)^{\nf{1}{2}}\big)\,\vp}
=\int_0^{\infty}J[\vp;\eta]\,\frac{\eta^{\nf{1}{2}}d\eta}{\pi}\,,
\end{align*}
with 
\begin{align*}
J[\vp;\eta]&:=\SPb{\vp}{\big(\sR_F(\eta)-e^F\sR_0(\eta)\,e^{-F}\big)\,\vp}\,,
\\
\sR_G(\eta)&:=(\DA^2-|\nabla G|^2+\eta)^{-1},\quad G\in\{0,F\}\,.
\end{align*}
Now, let $\phi:=e^F(\DA^2+\eta)\,e^{-F}\psi$, for some $\psi\in\core$.
Then
\begin{align*}
\Re\SPb{\phi}{e^F\sR_0(\eta)\,e^{-F}\phi}
&=
\Re\SPb{e^F(\DA^2+\eta)\,e^{-F}\psi}{\psi}
\\
&=\SPb{(\DA^2-|\nabla F|^2+\eta)\,\psi}{\psi}
\grg(1-a^2+\eta)\,\|\psi\|^2\grg0\,.
\end{align*}
It is well-known that $\DA^2$ is essentially selfadjoint on
$\core$. Since, for every $F\in\sW_a$, multiplication with $e^{-F}$
maps $\core$ bijectively onto itself, this implies that
$(\DA^2+\eta)\,e^{-F}\core$ is dense in $\HR$.
Since $F\in\sW_a$ is bounded we conclude that the previous
estimates hold, for all $\phi$ in some dense domain,
whence $\Re[e^F\sR_0(\eta)\,e^{-F}]\grg0$ as a quadratic form
on $\HR$. Next, we set
$Q:=(\valpha\cdot\nabla F)\,\DA+\DA\,(\valpha\cdot\nabla F)$
and let 
$$
\vp:=(\DA^2-|\nabla F|^2+\eta)\,\psi
=e^{\pm F}(\DA^2+\eta)\,e^{\mp F}\psi\mp iQ\,\psi\,, 
$$
for $\psi\in\core$. Then
\begin{align*}
J[\vp;\eta]&=
i\SPb{e^{-F}\sR_0(\eta)\,e^{F}\vp}{Q\,\psi}
=i\SPn{\psi}{Q\,\psi}
+\SPb{Q\,\psi}{e^F\sR_0(\eta)\,e^{-F}Q\,\psi}\,.
\end{align*}
Here $\DA^2-|\nabla F|^2$ is essentially selfadjoint
on $\core$ and $Q$ is symmetric on the same domain.
Hence, $\Re J[\vp;\eta]\grg0$, for all $\vp$ in a dense set, 
thus for all $\vp\in\HR$, and we conclude.
\end{proof}

In what follows we set
\begin{equation}\label{def-robert}
\rho(a):=1-(1-a^2)^{\nf{1}{2}},\qquad a\in(0,1)\,.
\end{equation}

\begin{lemma}\label{le-HR}
For all $\delta>0$, $a\in(0,1)$, and $R\grg1$,
$$
\sup\big\{\,\|e^F(H_R-z)^{-1}e^{-F}\|\,:\;
F\in\sW_{a}\,,\;\Re z\klg\Sigma-\rho(a)
-g/R-h\,a^2-\delta\,\big\}
\klg\frac{1}{\delta}\,,
$$
where
$g=\gamma$ and $h=0$ in the case of the SRPF operator.
In the case of the no-pair operator
we may choose $g,h=C(a,e,\UV)$.
\end{lemma}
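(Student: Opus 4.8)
The plan is to estimate $\|e^F(H_R-z)^{-1}e^{-F}\vp\|$ for $\vp\in\core$ by writing $\psi:=(H_R-z)^{-1}e^{-F}\vp$ (which lies in $\core$ up to a density argument, using essential self-adjointness of $H_R$ on $\core$) and bounding $\|e^F\psi\|$ from above by $\delta^{-1}\|\vp\|=\delta^{-1}\|(H_R-z)e^{-F}\psi\|$. So the real task is the lower bound
$$
\Re\SPb{e^F\psi}{e^F(H_R-z)e^{-F}\psi}\grg\delta\,\|e^F\psi\|^2,
$$
or equivalently, setting $\vp=e^F\psi$ back, $\Re\SPn{\vp}{e^F(H_R-z)e^{-F}\vp}\grg\delta\|\vp\|^2$ for all $\vp\in\core$. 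Expanding the conjugated operator term by term, the $|\DA|$-part is handled by Lemma \ref{le-retno}: since $|\nabla F|\klg a$ we get $\Re\SPn{\vp}{e^F|\DA|e^{-F}\vp}\grg\SPn{\vp}{(\DA^2-a^2)^{\nf12}\vp}\grg\SPn{\vp}{(|\DA|-\rho(a))\vp}$, using the elementary inequality $(t^2-a^2)^{\nf12}\grg t-\rho(a)$ for $t\grg1$ together with $|\DA|\grg1$; this is where the $\rho(a)$ in the threshold comes from.

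First I would treat the SRPF case, which is the clean one. Here $e^F V_{\gamma,R}e^{-F}=V_{\gamma,R}$ and $e^F\Hf e^{-F}=\Hf$ (both commute with multiplication by $e^F$), so
$$
\Re\SPn{\vp}{e^F\,\PF{\gamma,R}\,e^{-F}\vp}
\grg\SPn{\vp}{(|\DA|-\rho(a)+V_{\gamma,R}+\Hf)\vp}
\grg\bigl(\ThPF-\rho(a)-\gamma/R\bigr)\|\vp\|^2,
$$
using $\PF{\gamma,R}=|\DA|+V_{\gamma,R}+\Hf\grg\ThPF-\|V_{\gamma,R}\|_\infty$ from \eqref{nora00} together with $\|V_{\gamma,R}\|_\infty\klg\gamma/R$. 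Combining with $\Re\langle\vp,e^F(H_R-z)e^{-F}\vp\rangle\grg\Re\langle\vp,e^F H_R e^{-F}\vp\rangle-\Re z\,\|\vp\|^2$ and the hypothesis $\Re z\klg\ThPF-\rho(a)-\gamma/R-\delta$ gives the desired lower bound with $g=\gamma$, $h=0$.

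The no-pair case is the main obstacle, because the conjugation does not leave the two $\SA$-dressed terms invariant: $e^F\SA(V_{\gamma,R}+\Hf)\SA e^{-F}$ is not symmetric and not obviously semibounded. The strategy is to write $e^F\SA e^{-F}=\SA+[e^F,\SA]e^{-F}=\SA+e^F[\,\cdot\,,\SA](\ldots)$ and to peel off the commutator contributions. Using $\|e^{-F}\SA e^F\|\klg C(a)$ from \eqref{clara2} (and the analogous bound on $e^F\SA e^{-F}$), together with the commutator estimates from Appendix \ref{app-comm} of the form $\|\,|\hat{\V x}|^{-\kappa}(\Hf+1)^{-\nf12}[e^{-F}\SA e^F,\,\cdot\,]\,\|\lesssim a$ and $\|[e^{-F}\SA e^F,\,\cdot\,]\Hf(\Hf+1)^{-1}\|\lesssim a$, the difference between $\Re\SPn{\vp}{e^F\FNP{\gamma,R}e^{-F}\vp}$ and $\Re\SPn{\vp}{\FNP{\gamma,R}\vp}$ is controlled: the $\Hf$-commutator terms produce an error $\klg C(a,e,\UV)\,a^2\,\|\vp\|^2$ (two commutator factors, each $O(a)$, with the bare $\Hf$ between them absorbed by $|\DO|^\ve\klg\delta'\FNP{\gamma,R}+\const$ from \eqref{iris}, which in turn is dominated because we keep $\Re z$ strictly below threshold), while the $V_{\gamma,R}$-commutator terms produce an error $\klg C(a,e,\UV)\,a^2/R\,\|\vp\|^2\klg C(a,e,\UV)/R\cdot\|\vp\|^2$ since $\|V_{\gamma,R}\|_\infty\klg1/R$ and one commutator factor already carries a $1/R$ from $\|\nabla\chi_R\|$—wait, more simply, each such term is bounded by $C(a,e,\UV)\|\vp\|^2$, contributing to $g/R$ after noting $\|V_{\gamma,R}\|_\infty\le1/R$. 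Collecting: $\Re\SPn{\vp}{e^F\FNP{\gamma,R}e^{-F}\vp}\grg(\Thnp-\rho(a)-g/R-h a^2)\|\vp\|^2$ with $g,h=C(a,e,\UV)$, using \eqref{nora00} for the undressed $\FNP{\gamma,R}\grg\Thnp-1/R$. With $\Re z\klg\Thnp-\rho(a)-g/R-ha^2-\delta$ this yields $\Re\SPn{\vp}{e^F(H_R-z)e^{-F}\vp}\grg\delta\|\vp\|^2$, hence $\|e^F\psi\|\le\delta^{-1}\|(H_R-z)e^{-F}\psi\|$, and since $e^{-F}\core$ is dense and $(H_R-z)e^{-F}\core\supset(H_R-z)\core$ is dense in $\HR$, the bound extends to all of $\HR$. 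The delicate point throughout is that at \emph{critical} $\gamma$ one cannot bound $V_{\gamma,R}$ or the commutator remainders by a small multiple of $|\DA|$ via naive form bounds; the inequalities \eqref{iris} of Proposition \ref{prop-iris}, which bound a fractional power $|\DA|^\ve$ by $\delta H_\gamma+\const$, are exactly what makes the argument go through uniformly in $\gamma\klg\gc$.
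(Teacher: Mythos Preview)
Your reduction to the quadratic-form lower bound $\Re\SPn{\vp}{e^F(H_R-z)e^{-F}\vp}\grg\delta\|\vp\|^2$ on $\core$, and your treatment of the SRPF case, are correct and match the paper's argument. The no-pair case, however, has a genuine gap.

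Write $\cK_F:=[e^F,\SA]e^{-F}$; then
$\triangle(\Hf):=e^F\SA\Hf\SA e^{-F}-\SA\Hf\SA=\SA\Hf\cK_F+\cK_F\Hf\SA+\cK_F\Hf\cK_F$.
Only the last term has ``two commutator factors''; the cross terms $\SA\Hf\cK_F$ and $\cK_F\Hf\SA$ carry just one $\cK_F$, so your reasoning yields at best $O(a)$ there, not $O(a^2)$. Your proposed absorption of ``the bare $\Hf$'' by $|\DO|^\ve$ via \eqref{iris} does not work either: $\Hf$ is not dominated by any power of $|\DO|$, and in fact \eqref{iris} is neither needed nor used in the paper's proof of this lemma (the potential in $H_R$ is bounded, so criticality of $\gamma$ plays no role here).

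What the paper actually does is pass to the real (symmetric) part: the cross terms then combine into an expression involving $\cK_F+\cK_F^*$, which equals the \emph{double} commutator $[e^{-F},[\SA,e^F]]$ and hence satisfies $\|\Hf^{\nf12}(\cK_F+\cK_F^*)(\Hf+1)^{-\nf12}\|\klg C\,a^2$ by \eqref{clara99}. Together with $\|\Hf^{\nf12}\cK_{\pm F}(\Hf+1)^{-\nf12}\|\klg C\,a$ from \eqref{clara3} and a weighted Cauchy--Schwarz, this gives
$|\SPn{\vp}{\Re\triangle(\Hf)\vp}|\klg C\,a^2\,\SPn{\vp}{\FNP{0}\vp}$.
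Note the error is relative to $\FNP{0}$, not to $\|\vp\|^2$; it is then absorbed into the leading $\FNP{0}$ term to yield $(1-O(a^2))\FNP{0}\grg\Thnp-O(a^2)\Thnp$. The $V_{\gamma,R}$ contribution is handled simply by $\|\SA^FV_{\gamma,R}\SA^F\|\klg C(a)/R$ using \eqref{clara2} and $\|V_{\gamma,R}\|_\infty\klg1/R$; no commutators with $\chi_R$ appear here---those belong to Lemma~\ref{le-UV}.
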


\begin{proof}
It suffices to show that,
for $\Re z\klg\Sigma-\rho(a)-g/R-h\,a^2-\delta$ 
and all $\psi\in\core$,
\begin{align}
\delta\,\|\psi\|^2&\klg
\Re\SPb{\psi}{e^F(H_R-z)\,e^{-F}\psi}
\klg\|\psi\|\,\big\|e^F(H_R-z)\,e^{-F}\psi\big\|\,.\label{sayuri1}
\end{align}
In fact, if $F\in\sW_{a}$, then $e^{-F}$ maps $\core$ 
bijectively onto itself, thus
$(H_R-z)\,e^{-F}\core$ is dense in $\HR$, as we know that
$H_R$ is essentially self-adjoint on $\core$ and $z\in\vr(H_R)$.
In particular, we may insert $\psi:=e^F(H_R-z)^{-1}e^{-F}\vp$,
$\vp\in\HR$, into \eqref{sayuri1}, since $F\in\sW_{a}$
is bounded, and this yields the assertion.

First, we prove \eqref{sayuri1} for the SRPF operator.
Since the square root is operator monotone, $|\nabla F|\klg a$,
and $|\DA|\grg1$, we have
$$
(\DA^2-|\nabla F|^2)^{\nf{1}{2}}\grg|\DA|+(\DA^2-a^2)^{\nf{1}{2}}-|\DA|
\grg|\DA|-\rho(a)\,.
$$
Applying \eqref{retno1} we deduce that,
as quadratic forms on $\core$,
\begin{align*}
\Re\big[e^F\,H_R\,e^{-F}\big]&=
\Re\big[e^F\,|\DA|\,e^{-F}\big]+V_{\gamma,R}+\Hf
\\
&\grg|\DA|+\Hf-\rho(a)+V_{\gamma,R}
\grg\Sigma^{\Pf}-\rho(a)-\gamma/R\,.
\end{align*}

In order to discuss the no-pair operator we put
$$
\SA^F:=e^F\,\SA\,e^{-F},
\quad\cK_F\,:=\,[e^F,\,\SA]\,e^{-F},\qquad \pm F\in\sW_{a}\,.
$$
According to \cite[Lemma~3.5]{MatteStockmeyer2009a}
(or Lemma~\ref{le-sandra2})
we have $\|\SA^F\|=\bigO(1)$ and $\|\cK_F\|=\bigO(a)$, as $a\searrow0$.
We further define
$$
\triangle(\Hf):=
e^F\,\SA \,\Hf\,\SA\,e^{-F}-\SA \,\Hf\,\SA
=\SA \,\Hf\,\cK_F+\cK_F\,\Hf\,\SA+\cK_F\,\Hf\,\cK_F\,.
$$
Then a brief computation using \eqref{def-FNP} gives
\begin{align*}
e^F\FNP{\gamma,R}e^{-F}&=
e^F|\DA|e^{-F}+\frac{1}{2}\big(V_{\gamma,R}+\Hf+\SA\,\Hf\,\SA\big)
+\frac{1}{2}\,\SA^FV_{\gamma,R}\SA^F
+\frac{1}{2}\triangle(\Hf)
\end{align*}
on $\core$, and similarly as above we obtain
\begin{align*}
\Re\big[e^F\,\FNP{\gamma,R}\,e^{-F}\big]
&\grg\FNP{0}-\rho(a)-\bigO(1)/R
+\Re\triangle(\Hf)/2
\end{align*}
on $\core$.
Furthermore,
\begin{align*}
&2\,|\SPn{\vp}{\Re\triangle(\Hf)\,\vp}|
\\
&\klg
a^2\,\big\|\Hf^{\nf{1}{2}}\,\SA\,\vp\big\|^2+\frac{1}{a^2}\,
\big\|\Hf^{\nf{1}{2}}(\cK_F+\cK_F^*)\,\vp\big\|^2
+\big\|\Hf^{\nf{1}{2}}\cK_F\,\vp\big\|^2
+\big\|\Hf^{\nf{1}{2}}\cK_F^*\,\vp\big\|^2
\\
&\klg a^2\,C(a,e,\UV)\,\SPn{\vp}{\FNP{0}\,\vp}\,,
\end{align*}
for all $\vp\in \sD$, where we used $\cK_F^*=\cK_{-F}$ and
\begin{equation}\label{clara33}
\big\|\Hf^{\nf{1}{2}}\cK_{\pm F}\,(\Hf+1)^{-\nf{1}{2}}\big\|^2+
\big\|\Hf^{\nf{1}{2}}(\cK_F+\cK_F^*)(\Hf+1)^{-\nf{1}{2}}\big\|\klg C'(a,e,\UV)\,a^2.
\end{equation}
in the second step. The bound \eqref{clara33} 
follows from \eqref{clara3} and \eqref{clara99} below.
In fact, $\cK_F+\cK_F^*$ is equal to the 
double commutator in \eqref{clara99}.
Therefore,
\begin{align*}
\Re\big[e^F\,\FNP{\gamma,R}\,e^{-F}\big]
&\grg(1-\bigO(a^2))
\,\FNP{0}-\rho(a)-\bigO(1)/R
\\
&\grg\Thnp-\rho(a)-\bigO(1)/R-\bigO(a^2)\,\Thnp\,,
\quad a\searrow0\,.\qedhere
\end{align*}
\end{proof}

In the following theorem, which is our first main result,
we denote the spectral family of some self-adjoint
operator, $T$, as $\RR\ni\lambda\mapsto\id_\lambda(T)$.

\begin{theorem}[{\bf Exponential localization}]\label{thm-exp-loc}
Let $e\in\RR$, $\UV>0$, and define $\rho(a)$ by \eqref{def-robert}. 
Then the following assertions hold true:

\smallskip

\noindent
(i) For all $\lambda<\ThPF$, $a\in(0,1)$
with $\ThPF-\lambda>\rho(a)$,
and $\gamma\in(0,\gcPF]$, we have 
$\Ran(\id_\lambda(\PF{\gamma}))\subset\dom(e^{a|\hat{\V{x}}|})$ and
\begin{equation*}
\big\|e^{a|\hat{\V{x}}|}\,\id_\lambda(\PF{\gamma})\big\|
\klg C(a,\lambda,e,\UV)\,.
\end{equation*}
(ii) There is some $c(e,\UV)>0$, such that,
for all $\lambda<\Thnp$, $a\in(0,1)$
satisfying $\Thnp-\rho(a)-c(e,\UV)\,a^2>\lambda$,
and $\gamma\in(0,\gcnp]$, we have 
$\Ran(\id_\lambda(\FNP{\gamma}))\subset\dom(e^{a|\hat{\V{x}}|})$ and
\begin{equation*}
\big\|e^{a|\hat{\V{x}}|}\,\id_\lambda(\FNP{\gamma})\big\|\klg 
C'(a,\lambda,e,\UV)\,.
\end{equation*}
\end{theorem}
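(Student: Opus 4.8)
The plan is to follow the Bach--Fröhlich--Sigal strategy whose skeleton is already laid out in the section: exploit a resolvent-type identity for the spectral projection together with the Helffer--Sjöstrand formula, and reduce everything to the uniform resolvent bound on the conjugated comparison operator $H_R$ provided by Lemma~\ref{le-HR}. Concretely, fix $\lambda$, $a$, $\gamma$ as in the statement (say in case (i); case (ii) is identical with $\rho(a)$ replaced by $\rho(a)+c(e,\UV)a^2$). Choose $R\grg1$ so large that the shift $g/R$ (and $h\,a^2$ in case (ii)) appearing in Lemma~\ref{le-HR} is small enough that $\lambda<\Sigma-\rho(a)-g/R-h\,a^2-\delta$ still holds for some $\delta>0$; this is possible precisely because of the hypothesis $\Sigma-\lambda>\rho(a)$ (resp.\ $>\rho(a)+c(e,\UV)a^2$). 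With $R$ fixed, pick an auxiliary energy $\lambda'\in(\lambda,\Sigma-\rho(a)-g/R-h\,a^2-\delta)$ and write the spectral projection of $H$ below $\lambda$ using the identity of the type
\begin{equation}\label{BFS}
\id_\lambda(H)=\id_\lambda(H)\,\chi_R\,(H-z)^{-1}(H-\lambda')\quad\text{(variant of \cite{BFS1998b,MatteStockmeyer2008b})}
\end{equation}
suitably integrated against an almost-analytic extension of a cutoff of $\id_{(-\infty,\lambda]}$. The point of the cutoff $\chi_R$ is that on the support of $1-\chi_R$ one may simply insert $e^{a|\hat{\V{x}}|}$ at the cost of the bounded factor $e^{2aR}$.

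Next I would multiply \eqref{BFS} on the left by a bounded regularized weight $e^F$ with $F\in\sW_a$, $F(\V{x})\nearrow a|\V{x}|$ as the regularization is removed, and commute $e^F$ through. Using Lemma~\ref{le-res-id1} to replace $\chi_R(H-z)^{-1}$ by $(H_R-z)^{-1}\chi_R$ plus the correction term $(H_R-z)^{-1}e^{-F}(U_R^F(z)^*+W_R^F(z)^*)$, one arrives at an expression of the schematic form
\begin{equation*}
e^F\,\id_\lambda(H)=\Big(\text{h.s.\ integral of }e^F(H_R-z)^{-1}e^{-F}\Big)\cdot\big(\text{bounded}\big)+\Big(\text{h.s.\ integral of }\ldots\big(U_R^F+W_R^F\big)^*\ldots\Big),
\end{equation*}
where in the second piece the weight has already been absorbed. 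The key boundedness inputs are then: Lemma~\ref{le-HR}, which gives $\|e^F(H_R-z)^{-1}e^{-F}\|\klg 1/\delta$ uniformly in $F\in\sW_a$ and in $\gamma\klg\gc$ once $\Re z\klg\Sigma-\rho(a)-g/R-h\,a^2-\delta$; Lemma~\ref{le-UV}, which controls $U_R^F(z)$ and $W_R^F(z)$ with a bound of the form $C(a,e,\UV,R)\,(1\vee|\Re z|)/(1\wedge|\Im z|)$, again uniformly in $F$; and Proposition~\ref{prop-iris} together with Proposition~\ref{prop-PF-sbc}(i)/\ref{prop-np-sbc}(i) to guarantee that the relevant energies $\lambda,\lambda'$ can indeed be taken below the ionization threshold $\Sigma$ for \emph{all} $\gamma\klg\gc$. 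The Helffer--Sjöstrand contour is chosen to run at $\Re z\leq\lambda'<\Sigma-\rho(a)-g/R-h\,a^2-\delta$, so that along the whole contour Lemma~\ref{le-HR} applies; the factor $(1\vee|\Re z|)/(1\wedge|\Im z|)$ from Lemma~\ref{le-UV} is integrable against the almost-analytic extension of a smooth compactly supported function because such extensions vanish to infinite order on the real axis.

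Once the estimate $\|e^F\,\id_\lambda(H)\|\klg C(a,\lambda,e,\UV,R)$ is established uniformly over $F\in\sW_a$, I would remove the regularization: take a sequence $F_n\in\sW_a$ with $F_n\nearrow a|\hat{\V{x}}|$ pointwise (e.g.\ $F_n(\V{x})=a|\V{x}|\wedge n$ smoothed near the origin and near level $n$), apply the uniform bound to each $F_n\,\id_\lambda(H)$, and invoke monotone convergence: for any $\psi\in\Ran(\id_\lambda(H))$ we get $\|e^{F_n}\psi\|^2\klg C^2\|\psi\|^2$ for all $n$, hence $e^{a|\hat{\V{x}}|}\psi\in\HR$ with the same bound, which simultaneously yields $\Ran(\id_\lambda(H))\subset\dom(e^{a|\hat{\V{x}}|})$ and the operator-norm estimate. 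Fixing finally $R=R(a,\lambda,e,\UV)$ as the smallest admissible value turns the constant into one depending only on the displayed parameters. The main obstacle, in my view, is not any single step but the bookkeeping of the Helffer--Sjöstrand representation \eqref{BFS}: one must choose the almost-analytic extension and the contour so that (a) the $\Im z$-singularity from Lemma~\ref{le-UV} and from the factor $(H-\lambda')(H-z)^{-1}$ is integrable, (b) the contour stays in the region where Lemma~\ref{le-HR} holds, and (c) the identity genuinely reproduces $\id_\lambda(H)$ rather than some nearby smoothed projection — the standard fix being to work with $\id_{\lambda}(H)$ against a cutoff function that equals $1$ on $(-\infty,\lambda]$ and $0$ above $\lambda'$, and to note that $\id_\lambda(H)=\id_\lambda(H)\,f(H)$ for any such $f$. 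All the genuinely model-specific hard analysis (the non-local commutator estimates for $\SA$, the square-root lower bound \eqref{retno1}, the form bounds \eqref{iris}--\eqref{iris5}) has already been quarantined into Lemmas~\ref{le-UV}--\ref{le-HR} and Proposition~\ref{prop-iris}, so what remains is essentially the abstract localization machinery of \cite{BFS1998b,MatteStockmeyer2008b} applied with these uniform inputs.
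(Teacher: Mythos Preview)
Your overall strategy coincides with the paper's: split off $(1-\chi_R)e^F$, represent $\chi_R\,f(H)$ via Helffer--Sj\"ostrand, invoke Lemma~\ref{le-res-id1}, feed in Lemmata~\ref{le-UV} and~\ref{le-HR}, and remove the regularization of $F$ by monotone convergence. However, the identity you label (BFS) is not correct as written (the $\chi_R$ has no business inside $\id_\lambda(H)\,\chi_R\,(H-z)^{-1}(H-\lambda')$), and your schematic two-piece decomposition misses the actual mechanism. The paper's identity is
\[
\chi_R\,\id_\lambda(H)=\big(\chi_R\,f(H)-f(H_R)\,\chi_R\big)\,\id_\lambda(H),
\]
with $f\in C_0^\infty$ equal to $1$ on $[E,\lambda]$ and vanishing above $\lambda+\Delta/3$. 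The crucial point---which you never state---is that $H_R\grg\Sigma-\|V_{\gamma,R}\|_\infty>\lambda+\Delta/3$ for $R$ large, whence $f(H_R)=0$. Thus your ``first piece'' is not merely bounded: it is \emph{identically zero}, leaving the single term
\[
\chi_R\,e^F\,\id_\lambda(H)=\int_\CC e^F(H_R-z)^{-1}e^{-F}\big(U_R^F(z)^*+W_R^F(z)^*\big)\,\id_\lambda(H)\,d\mu(z).
\]
Without $f(H_R)=0$, your first piece would be $e^F f(H_R)e^{-F}\cdot e^F\chi_R\,\id_\lambda(H)$; although each $F\in\sW_a$ is bounded, the factor $e^F\chi_R$ is \emph{not} bounded uniformly over $\sW_a$ (it lives where $|\V{x}|$ is large), so Lemma~\ref{le-HR} alone cannot close that term. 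Your choice of $\lambda'$ does in fact force $f(H_R)=0$, so the argument can be repaired, but you need to say so explicitly rather than appeal to a vague ``bounded'' factor.
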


\begin{proof}
We treat both models simultaneously again using the notation
\eqref{nora} and the quantities $g$ and $h$ appearing in the
statement of Lemma~\ref{le-HR}.

We put $\Delta:=\Th-\rho(a)-h\,a^2-\lambda$ and
choose $R\grg1\vee(3/\Delta)$ large enough such that
$g/R<\Delta/3$.
Then $H_R\grg\Th-\|V_{\gamma,R}\|_\infty\grg\Th-\Delta/3$;
recall \eqref{nora00}.
Furthermore, we pick some $f\in C_0^\infty(\RR,[0,1])$
satisfying $f=1$ on $[E,\lambda]$ and $f=0$
on $\RR\setminus(E-1,\lambda+\Delta/3)$, so that
$f(H_R)=0$, thus
\begin{equation}\label{BFS}
\chi_R\,\id_\lambda(H)=\big(\chi_R\,f(H)-f(H_R)\,\chi_R\big)\,\id_\lambda(H)\,.
\end{equation}
(This identity with $\chi_R$ replaced by $1$ is observed in \cite{BFS1998b}
for similar purposes.)
As in \cite{BFS1998b} we extend $f$ almost analytically to some
$f\in C^\infty_0(\CC)$ with 
\begin{align*}
\supp(f)&\subset[E-1,\lambda+\Delta/3]+i\,[-1,1]\,,
\\
|\partial_{\ol{z}}f(z)|&\klg C(\Delta,N)\,|\Im z|^N,\quad z\in\CC\,,\;N\in\NN\,,
\end{align*}
and apply the Helffer-Sj\"ostrand formula,
$$
f(T)=\int_\CC(T-z)^{-1}\,d\mu(z)\,,\quad
d\mu(z):=\frac{1}{2\pi i}\,\partial_{\ol{z}}f(z)\,
dz\wedge d\ol{z}\,,
$$
which is valid, for any self-adjoint operator $T$
in some Hilbert space; see, e.g., \cite{DimassiSjoestrand}.
Combining it with \eqref{BFS} and Lemma~\ref{le-res-id1}
we obtain, for every $F\in\sW_{a}$,
\begin{align*}
\chi_R\,e^F\,\id_\lambda(H)
&=\int_\CC e^F\,\big(\chi_R\,(H-z)^{-1}-(H_R-z)^{-1}\chi_R\big)
\,\id_\lambda(H)\,d\mu(z)
\\
&=\int_\CC e^F\,(H_R-z)^{-1}e^{-F}\big(U_{R}^F(z)^*+W_{R}^F(z)^*\big)
\,\id_\lambda(H)\,d\mu(z)\,.
\end{align*}
Applying Lemma~\ref{le-UV}
and Lemma~\ref{le-HR} (with $\delta=\Delta/3$) we arrive at
\begin{equation*}
\sup_{F\in\sW_{a}}\big\|\chi_R\,e^F\,\id_\lambda(H)\big\|\klg 
\frac{C(a,e,\UV,R)}{\Delta}
\int_\CC\frac{|\partial_{\ol{z}}f(z)|}{|\Im z|}\,|dz\wedge d\ol{z}|
\klg C(a,e,\UV,\Delta)\,.
\end{equation*}
To conclude we pick a sequence $F_n\in\sW_{a}$, $n\in\NN$,
converging monotonically to
$a|\V{x}|-a$ on $\{|\V{x}|\grg2\}$. Then, by monotone convergence,
$\int_{\RR^3} e^{2a|\V{x}|}\|\psi(\V{x})\|_{\Fock^4}^2d^3\V{x}
=\lim\limits_{n\to\infty}
\int_{\RR^3} e^{2F_n(\V{x})}\|\psi(\V{x})\|_{\Fock^4}^2d^3\V{x}
\klg C'(a,e,\UV,\Delta)$,
$\psi\in\Ran(\id_\lambda(H))\subset\int_{\RR^3}^\oplus\Fock^4\,d^3\V{x}$. 
\end{proof}

In the non-relativistic setting an analog of
Theorem~\ref{thm-exp-loc} has been obtained in
\cite{BFS1998b,Griesemer2004}.
Let $\Sigma^{\mathrm{nr}}$
denote the ionization threshold of the NRPF Hamiltonian.
Then the decay rate found in \cite{Griesemer2004} for the localization of
states of energy $\klg\lambda<\Sigma^{\mathrm{nr}}$
is $a^2<\Sigma^{\mathrm{nr}}-\lambda$. 
Together with a bound on the increase of binding energy
in the NRPF model (compared to the case $e=0$)
it actually shows that the localization of the lowest energy states
is improved in presence of a quantized radiation field.
In case of the SRPF model the bound on the rate $a$
in Theorem~\ref{thm-exp-loc}(i) is good enough to
demonstrate the same effect.
To explain these issues more precisely we first
consider the case without radiation field.

If we set the parameter $e$ equal to zero and
restrict the operators $\PF{\gamma}$ and $\NP{\gamma}$
to the vacuum sector, then we get back the electronic
operators defined in \eqref{chandra} and \eqref{charly},
respectively. In particular, we may observe the following 
(to the largest part well-known) result:

\begin{corollary}\label{cor-exp-loc-el}
Let $\gc$ be $\gcPF$ or $\gcnp$
and $H_\gamma^\el$ be $H_\gamma^{\el,\Pf}$ or $H_\gamma^{\el,\np}$.
Then, for all $\gamma\in(0,\gc]$,
$\lambda\in[0,1)$, and $a>0$ with $a^2<1-\lambda^2$,
we have $\Ran(\id_\lambda(H_\gamma^\el))\subset\dom(e^{a|\hat{\V{x}}|})$ and
$$
\big\|e^{a|\hat{\V{x}}|}\,\id_\lambda(H_\gamma^\el)\big\|\klg 
C(a,\lambda)\,.
$$ 
\end{corollary}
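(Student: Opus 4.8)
The plan is to deduce Corollary~\ref{cor-exp-loc-el} from Theorem~\ref{thm-exp-loc} by specializing $e=0$ and restricting everything to the vacuum sector $\CC^4\otimes\Fock^{(0)}[\HP]\cong L^2(\RR^3,\CC^4)$. When $e=0$, the coupling function $\V{G}_\V{x}$ vanishes, hence $\V{A}=0$ and $\DA=\DO=\valpha\cdot(-i\nabla_\V{x})+\alpha_0$. The radiation field energy $\Hf$ annihilates the vacuum, and since $\DO$ and the Coulomb potential $V_\gamma$ act trivially on the Fock-space factor, the subspace $L^2(\RR^3,\CC^4)\otimes\Fock^{(0)}$ is a reducing subspace for both $\PF{\gamma}$ and $\FNP{\gamma}$. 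On this subspace $\PF{\gamma}$ restricts to $\sqrt{1-\Delta_\V{x}}+V_\gamma=H_\gamma^{\el,\Pf}$, while $\FNP{\gamma}$ restricts to $\PO(\DO+V_\gamma)\PO\oplus\{\vt\,\PO(\DO+V_\gamma)\PO\,\vt\}$, which by \eqref{NPFNP} is $H_\gamma^{\el,\np}\oplus\{\vt H_\gamma^{\el,\np}\vt\}$. A short argument is needed that the Friedrichs extensions are compatible with this decomposition: the form core $\core=C_0^\infty(\RR^3,\CC^4)\otimes\sC$ contains $C_0^\infty(\RR^3,\CC^4)\otimes\Fock^{(0)}$ as a form-dense subspace of the vacuum sector, and since the quadratic form splits orthogonally along the Fock grading when $e=0$, the Friedrichs extension of $H_\gamma$ leaves the vacuum sector invariant and restricts there to the Friedrichs extension of the electronic operator.

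Next I would identify the ionization thresholds in the decoupled case. With $e=0$ we have $\ThPF=\inf\spec[\PF{0}]=\inf\spec[\sqrt{1-\Delta_\V{x}}]=1$, because turning off the potential leaves the free operator whose spectrum is $[1,\infty)$, and the Fock-space factor contributes only the non-negative spectrum of $\Hf$. Likewise $\Thnp=\inf\spec[\FNP{0}]=\inf\spec[\PO\DO\PO\oplus\{\vt\PO\DO\PO\vt\}]=1$, since $\PO\DO\PO=|\DO|$ restricted to the positive subspace has spectrum $[1,\infty)$. So in both models $\Th=1$ once $e=0$.

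Then I would simply invoke Theorem~\ref{thm-exp-loc} with $e=0$. For part~(i), given $\lambda\in[0,1)$ and $a>0$ with $a^2<1-\lambda^2$: I need $\ThPF-\lambda>\rho(a)=1-\sqrt{1-a^2}$, i.e. $\sqrt{1-a^2}>\lambda$, i.e. $1-a^2>\lambda^2$ (using $\lambda\ge0$), which is exactly the hypothesis; note this also forces $a<1$. For part~(ii), with $c(0,\UV)$ the constant from Theorem~\ref{thm-exp-loc}(ii), one needs $1-\rho(a)-c(0,\UV)a^2>\lambda$. This is slightly weaker than what is claimed unless $c(0,\UV)=0$; the point I would make is that in the decoupled case the no-pair localization argument gives $h=0$ as well (inspecting Lemma~\ref{le-HR}: when $e=0$ the operator $\cK_F=[e^F,\SO]e^{-F}$ is a genuine commutator but all the $\Hf$-terms $\triangle(\Hf)$ vanish on the vacuum sector, so one may take $h=0$ exactly as in the SRPF case). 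With $h=0$ the required inequality is again $\sqrt{1-a^2}>\lambda$, i.e. $a^2<1-\lambda^2$. Applying Theorem~\ref{thm-exp-loc} to the restricted operator, which coincides with $H_\gamma^{\el}$ (for the no-pair case, on one of the two unitarily equivalent summands, and $\vt$ is a fixed unitary independent of $\V{x}$ so it does not affect the $e^{a|\hat{\V{x}}|}$ bound), yields $\Ran(\id_\lambda(H_\gamma^\el))\subset\dom(e^{a|\hat{\V{x}}|})$ together with the stated operator-norm bound, uniformly in $\gamma\in(0,\gc]$.

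The only genuinely non-routine point is the bookkeeping in part~(ii): one must re-examine the proof of Lemma~\ref{le-HR} in the vacuum sector to confirm that the constant $h$ can be taken to be $0$ when $e=0$, so that the rate $a^2<1-\lambda^2$ is attained rather than a strictly smaller threshold. Everything else is a direct restriction argument. I would also remark, as the text announces, that this recovers and extends the classical $L^2$-exponential-decay estimates for the Herbst operator (previously known only for $\gamma\in(0,\tfrac12]$, cf.\ \cite{Nardini1986}) to the full range $\gamma\in(0,\gc]$, and analogously for the Brown-Ravenhall operator.
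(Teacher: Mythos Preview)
Your proposal is correct and follows essentially the same approach as the paper: set $e=0$ so that $\Sigma=1$, read off the SRPF case directly from Theorem~\ref{thm-exp-loc}(i), and for the no-pair case re-inspect the proof of Lemma~\ref{le-HR} on the vacuum sector to see that $\triangle(\Hf)=0$ there, whence one may take $h=0$ and $c(e,\UV)=0$. Your treatment is in fact more careful than the paper's (you spell out the reducing-subspace and Friedrichs-extension compatibility), but the argument is the same.
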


\begin{proof}
First, we recall that $\ThPF|_{e=0}=\Thnp|_{e=0}=1$.
Hence, if $H_\gamma^\el=H_\gamma^{\el,\Pf}$, then the assertion
of the corollary is contained in the statement of
Theorem~\ref{thm-exp-loc}(i). If $H_\gamma^\el=H_\gamma^{\el,\np}$,
then the assertion of the corollary can be verified by inspection
of the proofs in the present section.
In fact, if we ignore all Fock space operators,
then we may choose $h=0$ in Lemma~\ref{le-HR} also
when we consider the electronic no-pair model.
As a consequence, the constant $c(e,\UV)$ appearing
in the statement of Theorem~\ref{thm-exp-loc}(ii)
can be replaced by zero.
\end{proof}

For the no-pair model the statement of
Corollary~\ref{cor-exp-loc-el} is a special case of a result in
\cite{MatteStockmeyer2008b}, where also non-vanishing classical magnetic fields
are considered. 
For the square root operator the assertion of Corollary~\ref{cor-exp-loc-el}
is well-known, at least for all $\gamma\in(0,\nf{1}{2}]$
\cite{Nardini1986}; see also \cite{CMS1990,HelfferParisse1994}
for exponential decay estimates for square-root operators.
As it seems to us the whole range of allowed $\gamma$
is not covered by the published literature so far.
The bound on the decay rate $a^2<1-\lambda^2$ is familiar from the
analysis of the Dirac operator.

\begin{remark}[On improved localization due to the radiation field]
Let $e_\gamma:=\inf\spec[H_\gamma^{\el,\Pf}]$, $\gamma\in(0,\gcPF]$, 
denote the value of the
lowest eigenvalue of the electronic square root operator.
It is known that
$0<e_\gamma<1$, for all $\gamma\in(0,\gcPF]$.
In fact, strict positivity of $e_{\gcPF}$ 
is shown numerically in \cite{HardekopfSucher1985}
and it is proven analytically in \cite{RRSMS1994}.
Then the value $a_\gamma:=(1-e_\gamma^2)^{\nf{1}{2}}\in(0,1)$
is the border line for all decay rates $a$ a allowed for
in Corollary~\ref{cor-exp-loc-el}. 
Corresponding pointwise 
lower bounds for ground state eigenfunctions
of square root operators (whose potentials belong to a suitable Kato class) 
\cite{CMS1990} suggest $a_\gamma$ to be optimal indeed.
Now, assume that, for $e\not=0$ and $\UV>0$, the {\em binding
energy is increased} in the SRPF model, i.e. assume the {\em strict} inequality
\begin{equation}\label{inke}
\ThPF-E_\gamma^\Pf>1-e_\gamma\,.
\end{equation}
Since $1-e_\gamma=1-(1-a_\gamma^2)^{\nf{1}{2}}=\rho(a_\gamma)$
we conclude by means of
Theorem~\ref{thm-exp-loc}(i) that in this case
$$
\forall\,\gamma\in(0,\gcPF]\;\;\exists\,\ve>0\::\quad
\big\|e^{(a_\gamma+\ve)|\hat{\V{x}}|}\id_{\{E_\gamma\}}(\PF{\gamma})\big\|
\klg C(e,\UV,\ve)\,.
$$
Thus, we observe an enhancement of localization in the ground state
due to the quantized radiation field.
The condition \eqref{inke} will be discussed
by the present authors in a separate paper.
(In the non-relativistic setting it is established in
\cite{CVV2003} under the (implicit) assumption that
$e$ and/or $\UV$ be sufficiently small.)
\end{remark}


\section{GROUND STATES AT CRITICAL COUPLING}\label{sec-cpt}

\noindent
Starting from the assertions of Propositions~\ref{prop-PF-sbc}(ii)
and~\ref{prop-np-sbc}(ii), namely that $\PF{\gamma}$ and $\FNP{\gamma}$
have eigenvalues at the bottom of their spectra, as long
as $\gamma$ is sub-critical, 
we prove in this section
that both operators still possess ground state eigenvectors,
when $\gamma$ attains the critical values $\gcPF$ and $\gcnp$,
respectively.

We shall make use of the following abstract lemma
which is a variant of a result we learned from \cite{BFS1998b};
see \cite[Lemma~5.1]{KMS2009a} for a proof.

\begin{lemma}\label{le-ex-allg}
Let $T,T_1,T_2,\ldots$ be self-adjoint operators acting in some
separable Hilbert space, $\mathscr{X}$, such that $\{T_j\}_{j\in\NN}$
converges to $T$ in the strong resolvent sense.
Assume that $E_j$ is an eigenvalue of $T_j$ with corresponding
eigenvector $\phi_j\in\dom(T_j)$.
If $\{\phi_j\}_{j\in\NN}$ converges weakly to
some $0\not=\phi\in\mathscr{X}$, then $E:=\lim_{j\to\infty}E_j$
exists and is an eigenvalue of $T$. If $E_j=\inf\spec[T_j]$,
then $T$ is semi-bounded below and $E=\inf\spec[T]$.
\end{lemma}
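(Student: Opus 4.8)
The plan is to reduce everything to the behaviour of the resolvents at the single non-real point $z=i$. Recall that strong resolvent convergence $T_j\to T$ means $(T_j-i)^{-1}\to(T-i)^{-1}$ strongly, and that it also entails strong convergence of the adjoint resolvents $(T_j+i)^{-1}\to(T+i)^{-1}$. Since $\{\phi_j\}_{j\in\NN}$ converges weakly it is bounded. From the eigenvalue equation $T_j\phi_j=E_j\phi_j$ one obtains the scalar identity $(T_j-i)^{-1}\phi_j=(E_j-i)^{-1}\phi_j$, where the numbers $(E_j-i)^{-1}$ all lie in the closed unit disc.

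First I would pass to the limit in this identity along a subsequence. On the left-hand side, $(T_j-i)^{-1}$ converges strongly with norm at most $1$, its adjoints converge strongly, and $\phi_j$ converges weakly; hence $(T_j-i)^{-1}\phi_j$ converges weakly to $(T-i)^{-1}\phi$. On the right-hand side, the bounded scalars $(E_j-i)^{-1}$ admit a subsequence converging to some $c\in\CC$, along which $(E_j-i)^{-1}\phi_j$ converges weakly to $c\,\phi$. Therefore $(T-i)^{-1}\phi=c\,\phi$. The only genuinely delicate point is to extract a \emph{real} eigenvalue of $T$ from this relation rather than a spurious limit: injectivity of $(T-i)^{-1}$ together with $\phi\neq0$ forces $c\neq0$; then $c\,\phi=(T-i)^{-1}\phi$ lies in $\dom(T)$, so $\phi\in\dom(T)$ and $T\phi=(c^{-1}+i)\phi$, and self-adjointness of $T$ forces $E:=c^{-1}+i$ to be real. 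Thus $E$ is an eigenvalue of $T$ with eigenvector $\phi$. Since a fixed nonzero vector can be an eigenvector of $(T-i)^{-1}$ for at most one scalar, the very same argument shows that \emph{every} convergent subsequence of $(E_j-i)^{-1}$ has limit $(E-i)^{-1}\neq0$; being bounded, the sequence converges as a whole, $(E_j-i)^{-1}\to(E-i)^{-1}$, and hence $E_j\to E$.

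For the last assertion, assume in addition $E_j=\inf\spec(T_j)$. Here I would invoke the standard fact that, under strong resolvent convergence, every $\lambda\in\spec(T)$ is a limit of points $\lambda_j\in\spec(T_j)$; one proves this by applying the strong convergence $f(T_j)\to f(T)$, valid for $f\in C_\infty(\RR)$, to nonnegative bump functions $f$ supported in ever smaller neighbourhoods of $\lambda$ with $f(\lambda)=1$. Since $\lambda_j\grg E_j$ and $E_j\to E$, letting $j\to\infty$ gives $\lambda\grg E$; hence $\spec(T)\subset[E,\infty)$, so $T$ is semi-bounded below with $\inf\spec(T)\grg E$. Combined with the fact, already established, that $E$ is an eigenvalue of $T$, this yields $E=\inf\spec(T)$, completing the argument. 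The main obstacle is really confined to the middle step — making sure the weak limit $c$ is nonzero and corresponds to a real point of the point spectrum of $T$, and then upgrading subsequential convergence of $E_j$ to convergence of the full sequence — and this is handled entirely by the injectivity and spectral-mapping properties of the resolvent together with the uniqueness of the eigenvalue attached to a fixed eigenvector.
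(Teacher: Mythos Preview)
Your proof is correct. The paper does not itself prove this lemma; it simply cites \cite[Lemma~5.1]{KMS2009a} for a proof, so there is no in-paper argument to compare against. Your approach via the resolvent identity $(T_j-i)^{-1}\phi_j=(E_j-i)^{-1}\phi_j$, combined with the fact that strong resolvent convergence plus weak convergence of $\phi_j$ yields weak convergence of $(T_j-i)^{-1}\phi_j$, is the natural and standard route; the handling of the potentially delicate points (nonvanishing of the subsequential limit $c$ via injectivity of $(T-i)^{-1}$, upgrading subsequential to full convergence via uniqueness of the eigenvalue attached to $\phi$, and the spectral inclusion under strong resolvent convergence for the final assertion) is clean and complete.
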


As we wish to consider the limit as $\gamma$
approaches its critical values 
we employ the following new convention from now on:
\begin{equation}\label{nora2}
\left\{
\begin{array}{l}
\textrm{The symbols}\;\,H_\gamma,\,\Th,\,E_\gamma,\,\gc\;\,
\textrm{denote either}\\
\PF{\gamma},\,\Th^\Pf,\,E^\Pf_\gamma,\,\gcPF\;\,
\textrm{or}\;\,\FNP{\gamma},\,\Th^\np,\,E^\np_\gamma,\,\gcnp\,.
\end{array}
\right.
\end{equation}

\begin{lemma}\label{le-srconv}
$H_\gamma$ converges to
$H_{\gc}$ in the strong resolvent sense, 
as $\gamma\nearrow\gc$.
In particular,
\begin{equation}\label{guenter1}
\limsup_{\gamma<\gc} E_\gamma\klg E_{\gc}\,.
\end{equation}
\end{lemma}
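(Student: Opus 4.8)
The plan is to establish strong resolvent convergence by showing convergence on the common core $\core$ (or a suitable operator core) and then invoking a standard criterion. Concretely, I would first observe that both $H_\gamma$ and $H_{\gc}$ are Friedrichs extensions of operators defined on $\core$, and that on $\core$ one has $H_\gamma - H_{\gc} = V_\gamma - V_{\gc}$ in the SRPF case and, using \eqref{def-FNP}, $H_\gamma - H_{\gc} = \tfrac12(V_\gamma - V_{\gc}) + \tfrac12\SA(V_\gamma - V_{\gc})\SA$ in the no-pair case. Since $V_\gamma - V_{\gc} = -(\gamma-\gc)/|\hat{\V{x}}|$ and $\SA$ is bounded, the difference is, up to the scalar factor $(\gamma - \gc)$, a fixed operator of the form $c_1 |\hat{\V{x}}|^{-1} + c_2 \SA |\hat{\V{x}}|^{-1}\SA$. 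The key quantitative input is Proposition~\ref{prop-iris}(ii): from $|\hat{\V{x}}|^{-1/2}\klg C|\DO|^{1/2}$ (Hardy/Kato) together with $|\hat{\V{x}}|^{-1/2}\klg C|\DA|^{1/2}$ and \eqref{iris}, the operator $|\hat{\V{x}}|^{-1}$ (and likewise $\SA|\hat{\V{x}}|^{-1}\SA$, since $\SA$ is bounded) is form-bounded relative to $H_{\gc}$ with arbitrarily small relative bound. Hence for any fixed $\psi \in \dom(H_{\gc})$ (or on the form core), $\|(H_\gamma - H_{\gc})\psi\| \to 0$ as $\gamma \nearrow \gc$.

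Next I would turn this into strong resolvent convergence. The cleanest route is: fix $\psi \in \core$; then $H_\gamma \psi \to H_{\gc}\psi$ in $\HR$ by the estimate above (with $z$ purely imaginary, say $z = i$, one gets $(H_{\gc}-i)\psi$ in the range of the limiting resolvent). Since $\core$ is a core for $H_{\gc}$ (by Proposition~\ref{prop-iris}(i) for $\gamma < 1/2$, and in general because $H_{\gc}$ is the Friedrichs extension and $\core$ is a form core — one checks $\core$ is an operator core via the explicit form-boundedness just established, or simply argues on the form level), the set $(H_{\gc}-i)\core$ is dense in $\HR$. A standard lemma (see, e.g., Reed–Simon) then gives that $H_\gamma \to H_{\gc}$ in the strong resolvent sense provided one also has a uniform bound $\|(H_\gamma - i)^{-1}\| \klg 1$, which holds since each $H_\gamma$ is self-adjoint. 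Slightly more carefully: for $\psi \in \core$ and $\vp := (H_{\gc}-i)\psi$,
\[
(H_\gamma - i)^{-1}\vp - (H_{\gc}-i)^{-1}\vp = (H_\gamma - i)^{-1}\big(H_{\gc} - H_\gamma\big)\psi \to 0,
\]
and density of such $\vp$ plus the uniform resolvent bound upgrades this to all of $\HR$.

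Finally, \eqref{guenter1} follows from strong resolvent convergence by the standard semicontinuity of the spectrum: $\inf\spec[H_{\gc}] \grg \limsup_{\gamma \nearrow \gc} \inf\spec[H_\gamma]$, i.e. $E_{\gc} \grg \limsup_{\gamma<\gc} E_\gamma$. Alternatively, and perhaps more transparently here, one can use the variational characterization directly: for any normalized $\psi$ in the form domain of $H_{\gc}$ contained in $\core$, $\langle \psi | H_\gamma \psi\rangle = \langle \psi | H_{\gc}\psi\rangle + (\gc - \gamma)\langle\psi | (\cdots)|\hat{\V{x}}|^{-1}(\cdots)\psi\rangle \grg E_\gamma$, and letting $\gamma \nearrow \gc$ gives $\langle \psi|H_{\gc}\psi\rangle \grg \limsup_\gamma E_\gamma$; optimizing over $\psi$ yields \eqref{guenter1}.

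I expect the only genuine subtlety to be the justification that $\core$ is an operator core for $H_{\gc}$ in the critical case (where essential self-adjointness on $\core$ may fail for $\gamma \grg 1/2$), so that $(H_{\gc}-i)\core$ is dense. This is handled by the form-boundedness estimates from Proposition~\ref{prop-iris}: since $|\hat{\V{x}}|^{-1}$ is infinitesimally form-bounded relative to $H_{\gc}$, one shows $\core$ is a core for the quadratic form of $H_{\gc}$, and then that it is in fact a core for the operator itself; alternatively one can work entirely at the level of forms and quadratic-form resolvent convergence, which also implies strong resolvent convergence. Everything else — the explicit computation of $H_\gamma - H_{\gc}$ on $\core$, the boundedness of $\SA$, and the standard Reed–Simon convergence criterion — is routine.
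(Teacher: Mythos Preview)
Your main approach has the gap you yourself flag: you need $\core$ to be an \emph{operator} core for $H_{\gc}$, and nothing in the paper establishes this at critical coupling (Proposition~\ref{prop-iris}(i) only gives essential self-adjointness on $\core$ for $\gamma<\nf{1}{2}$). Your suggested fix---that infinitesimal form-boundedness of $|\hat{\V{x}}|^{-1}$ relative to $H_{\gc}$ upgrades the form core $\core$ to an operator core---does not follow; form cores for Friedrichs extensions need not be operator cores. So the resolvent-identity argument on $(H_{\gc}-i)\core$ is not justified as written.

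The paper sidesteps this entirely by taking your form-level alternative and making it precise via \emph{monotone} convergence of quadratic forms. The key structural input (from \cite{KMS2009b}) is that for every subcritical $\gamma$ the form domain is the \emph{same}, namely $\form(H_\gamma)=\form(|\DO|)\cap\form(\Hf)$, and this common domain is dense in $\form(H_{\gc})$ with respect to the form norm of $H_{\gc}$ (since $\core$ is a form core for the Friedrichs extension). On this common domain the expectation values $\SPn{\vp}{H_\gamma\,\vp}$ decrease monotonically to $\SPn{\vp}{H_{\gc}\,\vp}$ as $\gamma\nearrow\gc$, and a standard monotone-form convergence theorem (the paper cites \cite[Satz~9.23a]{Weidmann2000}) then yields strong resolvent convergence directly---no operator core for $H_{\gc}$ is needed. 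Your variational argument for \eqref{guenter1} is fine; the paper simply records it as a consequence of strong resolvent convergence.
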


\begin{proof}
For every $\gamma\in(0,\gc)$, we know that
$\form(H_\gamma)=\form(|\DO|)\cap\form(\Hf)
\subset\form(H_{\gc})$ \cite{KMS2009b}.
Since $\core$ is a form core for $H_{\gc}$
we thus have 
$\ol{\cap_{\gamma<\gc}\form(H_\gamma)}=\form(H_{\gc})$,
where the closure is taken with respect to the form norm
of $H_{\gc}$. Since the expectation values
$\SPn{\vp}{H_\gamma\,\vp}\searrow\SPn{\vp}{H_{\gc}\,\vp}$
converge monotonically, as $\gamma\nearrow\gc$, for every
$\vp\in \cap_{\gamma<\gc}\form(H_\gamma)=\form(|\DO|)\cap\form(\Hf)$,
it follows from \cite[Satz~9.23a]{Weidmann2000} that
$H_\gamma$ converges to
$H_{\gc}$ in the strong resolvent sense.
\end{proof}

In order to verify the assumption $\phi\not=0$ of Lemma~\ref{le-ex-allg}
we shall adapt a compactness argument from \cite{GLL2001}.
To this end we need
the infra-red bounds of the next proposition which give some
information on the localization and the weak derivatives
of ground state eigenvectors with respect to the photon
variables. In non-relativistic QED soft photon bounds
(without infra-red regularization) have been obtained first 
in \cite{BFS1999} and photon derivative bounds have been introduced
in \cite{GLL2001}. To state these bounds for our models
we recall the notation
$$
(a(k)\,\psi)^{(n)}(k_1,\dots,k_n)\,=\,
(n+1)^{\nf{1}{2}}\,\psi^{(n+1)}(k,k_1,\dots,k_n)\,,\quad n\in\NN_0\,,
$$
almost everywhere, for $\psi=(\psi^{(n)})_{n=0}^\infty\in\Fock[\HP]$,
and $a(k)\,(\psi^{(0)},0,0,\ldots\;)=0$.

\begin{proposition}[{\bf Infra-red bounds}]\label{prop-IR}
Let $e\in\RR$, $\UV>0$, and $\gamma_1\in(0,\gc)$.
Then there is some $C(e,\UV,\gamma_1)\in(0,\infty)$,
such that, for all $\gamma\in[\gamma_1,\gc)$ and every normalized
ground state eigenvector, $\phi_\gamma$, of $H_\gamma$,
we have the {\em soft photon bound},
\begin{equation}\label{eq-spb}
\|a(k)\,\phi_\gamma\|^2\klg\id_{\{|\V{k}|\klg\UV\}}
\:\frac{C(e,\UV,\gamma_1)}{|\V{k}|}\,,
\end{equation}
for almost every $k=(\V{k},\lambda)\in\RR^3\times\ZZ_2$, and
the {\em photon derivative bound},
\begin{equation}\label{pdb-kp}
\big\|a(\V{k},\lambda)\,\phi_\gamma-a(\V{p},\lambda)\,\phi_\gamma\big\|
\klg
C(e,\UV,\gamma_1)\,|\V{k}-\V{p}|\,
\Big(\frac{1}{|\V{k}|^{1/2}|\V{k}_\bot|}+
\frac{1}{|\V{p}|^{1/2}|\V{p}_\bot|}\Big),
\end{equation}
for almost every $\V{k},\V{p}\in\RR^3$
with $0<|\V{k}|<\UV$, $0<|\V{p}|<\UV$, and $\lambda\in\ZZ_2$.
(Here we use the notation \eqref{def-kbot}.)
In particular, 
\begin{equation}\label{eq-spbN}
\sup_{\gamma\in[\gamma_1,\gc)}\sum_{n=1}^\infty n\,\|\phi_\gamma^{(n)}\|^2<\infty\,,
\end{equation}
where
$\phi_\gamma=(\phi_\gamma^{(n)})_{n=0}^\infty\in
\bigoplus_{n=0}^\infty L^2(\RR^3,\CC^4)\otimes\Fock^{(n)}[\HP]$.
\end{proposition}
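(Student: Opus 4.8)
The plan is to derive the soft photon bound \eqref{eq-spb} and the photon derivative bound \eqref{pdb-kp} by testing the eigenvalue equation $H_\gamma\phi_\gamma=E_\gamma\phi_\gamma$ against $a(k)\phi_\gamma$ and against $a(\V{k},\lambda)\phi_\gamma-a(\V{p},\lambda)\phi_\gamma$, respectively, exploiting the pull-through formula. The starting point is the operator identity, valid on a suitable dense set, expressing $(H_\gamma-E_\gamma+\omega(k))\,a(k)\phi_\gamma$ in terms of a commutator: since $a(k)$ does not commute with the interaction through $\V{A}$ (nor with the non-local operators $|\DA|$ and $\SA$ appearing in our models), one has schematically
\begin{equation*}
(H_\gamma-E_\gamma+|\V{k}|)\,a(k)\,\phi_\gamma=-[a(k),H_\gamma]\,\phi_\gamma\,,
\end{equation*}
where $[a(k),H_\gamma]$ involves $\V{G}_\V{x}(k)$ (which carries the factor $|\V{k}|^{-1/2}\id_{\{|\V{k}|\klg\UV\}}$) together with $|\DA|$, $\SA$, and $\Hf$; the precise commutators are among those collected in Appendix~\ref{app-comm} and in \cite{KMS2009a,KMS2009b,MatteStockmeyer2009a}. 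Because $\gamma<\gc$, Proposition~\ref{prop-PF-sbc}(i) or~\ref{prop-np-sbc}(i) gives a spectral gap, $H_\gamma-E_\gamma\grg 0$ with $E_\gamma$ isolated, so $(H_\gamma-E_\gamma+|\V{k}|)^{-1}$ is bounded by $|\V{k}|^{-1}$ on the orthogonal complement of $\phi_\gamma$ (and $a(k)\phi_\gamma\perp\phi_\gamma$ is automatic since $a(k)$ lowers the photon number); inverting and estimating the commutator norm then yields $\|a(k)\phi_\gamma\|\klg C|\V{k}|^{-1/2}\id_{\{|\V{k}|\klg\UV\}}$ times terms like $\||\DA|^{1/2}\phi_\gamma\|$ and $\|\Hf^{1/2}\phi_\gamma\|$. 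The latter are controlled uniformly in $\gamma\in[\gamma_1,\gc)$ by Proposition~\ref{prop-iris}(ii)--(iii): from $H_\gamma\phi_\gamma=E_\gamma\phi_\gamma$ and $\sup_{\gamma}|E_\gamma|<\infty$ (by \eqref{guenter1} and semiboundedness) we get $\|\Hf\phi_\gamma\|\klg C$, hence $\||\DA|^{\ve}\phi_\gamma\|\klg C$, giving the uniform constant $C(e,\UV,\gamma_1)$. This proves \eqref{eq-spb}.

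For the photon derivative bound \eqref{pdb-kp} I would repeat the argument with $a(\V{k},\lambda)\phi_\gamma-a(\V{p},\lambda)\phi_\gamma$ in place of $a(k)\phi_\gamma$, writing
\begin{equation*}
(H_\gamma-E_\gamma+|\V{k}|)\big(a(\V{k},\lambda)\phi_\gamma-a(\V{p},\lambda)\phi_\gamma\big)
=(|\V{p}|-|\V{k}|)\,a(\V{p},\lambda)\phi_\gamma-\big([a(\V{k},\lambda),H_\gamma]-[a(\V{p},\lambda),H_\gamma]\big)\phi_\gamma\,.
\end{equation*}
The first term on the right is $\bigO(|\V{k}-\V{p}|)$ by the already-established soft photon bound, contributing the harmless piece. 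The genuinely new input is a Lipschitz-type estimate on the difference of commutators, which reduces to a Lipschitz estimate on the map $\V{k}\mapsto\V{G}_\V{x}(k)$ together with Lipschitz bounds on the polarization vectors $\veps(\V{k},\lambda)$. The polarization vectors are the source of the singular factors $|\V{k}_\bot|^{-1}$ and $|\V{p}_\bot|^{-1}$ in \eqref{pdb-kp}: from \eqref{pol-vec} one has $|\nabla_{\V{k}}\veps(\V{k},\lambda)|\klg C/|\V{k}_\bot|$ and, combined with $|\nabla_{\V{k}}(|\V{k}|^{-1/2}e^{-i\V{k}\cdot\V{x}})|\klg C\SL\V{x}\SR|\V{k}|^{-1/2}$, one obtains $\||\V{G}_\V{x}(\V{k},\lambda)-\V{G}_\V{x}(\V{p},\lambda)\|\klg C|\V{k}-\V{p}|\big(|\V{k}|^{-1/2}|\V{k}_\bot|^{-1}+|\V{p}|^{-1/2}|\V{p}_\bot|^{-1}\big)\SL\V{x}\SR$ after integrating along a path. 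The weight $\SL\V{x}\SR$ is then absorbed using the exponential localization of $\phi_\gamma$ established in Theorem~\ref{thm-exp-loc}, which is uniform in $\gamma$ — this is exactly where the main result of Section~\ref{sec-exp-loc} enters — so that $\|\SL\hat{\V{x}}\SR^k\phi_\gamma\|\klg C(e,\UV,\gamma_1)$ for the relevant powers $k$. Putting these together with the bound on $(H_\gamma-E_\gamma+|\V{k}|)^{-1}$ gives \eqref{pdb-kp}.

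Finally, \eqref{eq-spbN} follows at once from \eqref{eq-spb}: since $\sum_{n\grg1}n\|\phi_\gamma^{(n)}\|^2=\langle\phi_\gamma,\Hf^0\phi_\gamma\rangle$ where $\Hf^0=d\Gamma(\id)$ is the number operator, and $\langle\phi_\gamma,d\Gamma(\id)\phi_\gamma\rangle=\int\|a(k)\phi_\gamma\|^2\,dk\klg\int_{\{|\V{k}|\klg\UV\}}C(e,\UV,\gamma_1)|\V{k}|^{-1}\,dk<\infty$, the integral being finite because $|\V{k}|^{-1}$ is locally integrable in $\RR^3$; the bound is manifestly uniform for $\gamma\in[\gamma_1,\gc)$.

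\textbf{Main obstacle.} The hard part will be the rigorous justification of the pull-through manipulations for the \emph{non-local} operators $|\DA|$ and $\SA$: unlike in non-relativistic QED, $a(k)$ has a non-trivial commutator with $|\DA|$ and with the spectral projections $\PApm$, and one must show these commutators are well-defined, densely defined, and obey the stated $|\V{k}|^{-1/2}$-type bounds. This requires the additional commutator estimates of Appendix~\ref{app-comm} (sign functions of Dirac operators against $a^\sharp(\V{G}_\V{x})$ and against $\Hf$), and care is needed because, in the critical-adjacent regime, one cannot treat the Coulomb potential as a small form perturbation when controlling the error terms — one must instead lean on the bounds $|\DO|^{\ve}\klg\delta H_\gamma+C$ from Proposition~\ref{prop-iris}(ii) to close the estimates uniformly in $\gamma$. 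A secondary technical point is ensuring $\phi_\gamma\in\dom(\Hf)$ and the further domain properties needed to make $\langle\phi_\gamma,d\Gamma(\id)\phi_\gamma\rangle$ and the integrated pull-through formula meaningful, which again is supplied by Proposition~\ref{prop-iris}(iii).
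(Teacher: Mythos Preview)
There is a genuine infrared gap in your soft photon argument. The naive pull-through you wrote,
\[
a(k)\,\phi_\gamma=-(H_\gamma-E_\gamma+|\V{k}|)^{-1}\,[a(k),H_\gamma]\,\phi_\gamma\,,
\]
yields only $\|a(k)\phi_\gamma\|\klg C\,|\V{k}|^{-1}\cdot|\V{k}|^{-1/2}=C\,|\V{k}|^{-3/2}$, since the commutator $[a(k),|\DA|]$ (resp.\ $[a(k),\SA]$) is of order $|\V{G}_\V{x}(k)|\sim|\V{k}|^{-1/2}$. This gives $\|a(k)\phi_\gamma\|^2\klg C|\V{k}|^{-3}$, which is \emph{not} locally integrable in $\RR^3$, so neither \eqref{eq-spb} nor \eqref{eq-spbN} follows. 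Your arithmetic ``$|\V{k}|^{-1}\times|\V{k}|^{-1/2}=|\V{k}|^{-1/2}$'' is simply wrong.

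The missing idea is the gauge (dipole) subtraction. The paper invokes the representation from \cite{KMS2009a},
\[
a(k)\,\phi_\gamma=i\big(|\V{k}|\,\cR_\V{k}-1\big)\,\V{G}_{\V{0}}(k)\cdot\hat{\V{x}}\,\phi_\gamma
-\cR_\V{k}\,\valpha\cdot\wt{\V{G}}_{\hat{\V{x}}}(k)\,\SA\,\phi_\gamma+I_\gamma(k)\,,
\]
with $\wt{\V{G}}_\V{x}(k)=\V{G}_\V{0}(k)\,(e^{-i\V{k}\cdot\V{x}}-1)$ and $I_\gamma(k)$ an integral over $\RA{iy}$. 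The point is that $|\wt{\V{G}}_\V{x}(k)|\klg C\,|\V{k}|^{1/2}|\V{x}|$ gains a full power of $|\V{k}|$ relative to $\V{G}_\V{x}(k)$, at the price of a factor $|\V{x}|$; and in the first term $|\V{k}|\cR_\V{k}-1$ is bounded while the factor $\hat{\V{x}}$ again appears. Both are then absorbed via the \emph{uniform} exponential localization of Theorem~\ref{thm-exp-loc}, which is therefore essential already for \eqref{eq-spb}, not only for \eqref{pdb-kp} as you suggest. Proposition~\ref{prop-iris} alone cannot close the soft photon bound.

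Two minor points: the claim ``$a(k)\phi_\gamma\perp\phi_\gamma$ since $a(k)$ lowers photon number'' is false, because $\phi_\gamma$ is not a photon-number eigenvector; fortunately one does not need this, as $(H_\gamma-E_\gamma+|\V{k}|)^{-1}\klg|\V{k}|^{-1}$ holds on all of $\HR$. And for the no-pair model the analogous representation of $a(k)\phi_\gamma$ is taken from \cite{KMS2009b}; it has more terms but is handled in the same way.
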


\begin{proof}
First, we prove the soft photon bound \eqref{eq-spb}
for the SRPF operator. To this end we put
\begin{align*}
\RA{iy}&:=(\DA-iy)^{-1},\;\;\,y\in\RR\,,
\qquad
\cR_\V{k}:=(\PF{\gamma}-E_\gamma^\Pf+|\V{k}|\,)^{-1},
\;\;\,\V{k}\not=0\,,
\end{align*}
and (recall \eqref{def-Gphys})
\begin{align*}
\wt{\V{G}}_\V{x}(k)&:=\V{G}_\V{x}(k)-\V{G}_{\V{0}}(k)=
\V{G}_{\V{0}}(k)\,(e^{-i\V{k}\cdot{\V{x}}}-1)\,.
\end{align*}
For $\gamma\in(0,\gcPF)$, we derived the following
representation in \cite{KMS2009a},
\begin{align*}
a(k)\,\phi_\gamma&:=
i\big(|\V{k}|\,\cR_\V{k}-1\big)\,\V{G}_{\V{0}}(k)\cdot\hat{\V{x}}\phi_\gamma-
\cR_\V{k}\,\valpha\cdot\wt{\V{G}}_{\hat{\V{x}}}(k)
\,\SA\phi_\gamma+I_\gamma(k)\,,
\end{align*}
for almost every $k=(\V{k},\lambda)\in\RR^3\times\ZZ_2$, where
\begin{align*}
I_\gamma(k)&:=\int_\RR\cR_\V{k}\,\DA\,\RA{iy}\,
\valpha\cdot\wt{\V{G}}_{\hat{\V{x}}}(k)
\,\RA{iy}\,\phi_\gamma\,\frac{dy}{\pi}\,.
\end{align*}
Here the Bochner integral $I_\gamma(k)$
is actually absolutely convergent. In fact,
pick some $F\in C^\infty(\RR^3_\V{x},[0,\infty))$
such that $F(\V{x})=a|\V{x}|$, for large $|\V{x}|$,
and $|\nabla F|\klg a$. In view of Theorem~\ref{thm-exp-loc}
we may choose $a\in(0,\nf{1}{2}]$ sufficiently small
(depending on $\gamma_1$) such that
$\sup_{\gamma\in[\gamma_1,\gcPF)}\|e^F\phi_\gamma\|<\infty$.
By virtue of \eqref{iris} and Theorem~\ref{thm-exp-loc}
we then obtain, for all $\gamma\in[\gamma_1,\gcPF)$,
\begin{align*}
\|I_\gamma(k)\|&\klg\int_\RR\Big\{\big\|\,|\DA|^{\nf{1}{4}}\,\cR_\V{k}\,\big\|
\,\big\|\,|\DA|^{\nf{3}{4}}\RA{iy}\big\|
\\
&\quad\cdot
\,|\V{G}_{\V{0}}(k)|\,\sup_\V{x}\big|(e^{-i\V{k}\cdot\V{x}}-1)\,e^{-F(\V{x})}\big|
\,\|e^F\RA{iy}\,e^{-F}\|\,\|e^F\phi_\gamma\|\Big\}\,\frac{dy}{\pi}\,,
\end{align*}
Here $\|\,|\DA|^{\nf{1}{4}}\,\cR_\V{k}\,\|\klg C(e,\UV)/(1\wedge|\V{k}|)$
by \eqref{iris}, 
$\|\,|\DA|^{\nf{3}{4}}\RA{iy}\|\klg C\,\SL y\SR^{-\nf{1}{4}}$,
and, by Lemma~\ref{le-marah}
below, the composition
$e^F\RA{iy}\,e^{-F}$ is well-defined with
$\|e^F\RA{iy}\,e^{-F}\|\klg C\,\SL y\SR^{-1}$. Using also 
$|\V{G}_{\V{0}}(k)|\klg (|e|/2\pi)\,|\V{k}|^{-\nf{1}{2}}\,\id_{\{|\V{k}|\klg\UV\}}$
as well as 
$|e^{-i\V{k}\cdot\V{x}}-1|\klg|\V{k}|\,|\V{x}|$, we arrive at 
\begin{align*}
\|I_\gamma(k)\|&\klg\id_{\{|\V{k}|\klg\UV\}}
\,\frac{C'(e,\UV,\gamma_1)\,|\V{k}|^{\nf{1}{2}}}{1\wedge|\V{k}|}
\cdot\!\!\sup_{\gamma'\in[\gamma_1,\gcPF)}\|e^F\phi_{\gamma'}\|
\klg\id_{\{|\V{k}|\klg\UV\}}\,
\frac{C''(e,\UV,\gamma_1)}{|\V{k}|^{\nf{1}{2}}}\,,
\end{align*}
for almost every $k=(\V{k},\lambda)\in\RR^3\times\ZZ_2$
and $\gamma\in[\gamma_1,\gcPF)$.
Now, it is also clear how to estimate the remaining terms in
the formula for $a(k)\,\phi_\gamma$ and to get \eqref{eq-spb}.
(Notice that 
$\|e^F\,\SA\,\phi_\gamma\|\klg\|e^F\SA\,e^{-F}\|\,\|e^F\phi_\gamma\|$,
where $\|e^F\SA\,e^{-F}\|\klg1+C\,a$ by \eqref{clara2} and a simple
approximation argument.)

In a similar fashion we next
derive the photon derivative bound \eqref{pdb-kp} for the SRPF operator.
In fact,
$\|(\cR_\V{k}-\cR_\V{p})\,\psi\|
\klg|\V{p}|^{-1}|\V{k}-\V{p}|\,\|\cR_\V{k}\,\psi\|$, $\psi\in\HR$, 
by the first resolvent identity, thus
\begin{align*}
\|&I_\gamma(\V{k},\lambda)-I_\gamma(\V{p},\lambda)\|
\\
&\klg
\int_\RR\Big\{\big\|\,|\DA|^{\nf{1}{4}}\,\cR_\V{k}\,\big\|
\,\big\|\,|\DA|^{\nf{3}{4}}\RA{iy}\big\|
\\
&\quad\cdot
\sup_{\V{x}}\big\{|\wt{\V{G}}_\V{x}(\V{k},\lambda)
-\wt{\V{G}}_\V{x}(\V{p},\lambda)|e^{-F(\V{x})}\big\}
\,\|e^F\RA{iy}\,e^{-F}\|\,\|e^F\phi_\gamma\|\Big\}\,\frac{dy}{\pi}
\\
&\quad+\frac{|\V{k}-\V{p}|}{|\V{p}|}\int_\RR\Big\{
\big\|\,|\DA|^{\nf{1}{4}}\,\cR_\V{k}\,\big\|
\,\big\|\,|\DA|^{\nf{3}{4}}\RA{iy}\big\|
\\
&\qquad\qquad\qquad\cdot
\sup_{\V{x}}\big\{|\wt{\V{G}}_\V{x}(\V{p},\lambda)|e^{-F(\V{x})}\big\}
\,\|e^F\RA{iy}\,e^{-F}\|\,\|e^F\phi_\gamma\|\Big\}\,\frac{dy}{\pi}\,.
\end{align*}
Here 
$|\wt{\V{G}}_\V{x}(\V{p},\lambda)|\klg(|e|/2\pi)\,|\V{p}|^{\nf{1}{2}}|\V{x}|
\,\id_{\{|\V{p}|\klg\UV\}}$ and some
elementary estimates \cite{GLL2001}
(see also \cite[\textsection6.3]{KMS2009a})
using the special choice \eqref{pol-vec} of
the polarization vectors reveal that
\begin{align}\label{oh}
\frac{|\wt{\V{G}}_\V{x}(\V{k},\lambda)
-\wt{\V{G}}_\V{x}(\V{p},\lambda)|}{|\V{k}|}
\klg C\,(1+|\V{x}|)\,|\V{k}-\V{p}|
\,\Big(\frac{1}{|\V{k}|^{1/2}|\V{k}_\bot|}+
\frac{1}{|\V{p}|^{1/2}|\V{p}_\bot|}\Big)\,,
\end{align}
provided that $0<|\V{k}|,|\V{p}|<\UV$.
By Young's inequality, also 
$|\V{k}-\V{p}|\,|\V{k}|^{-1}|\V{p}|^{-\nf{1}{2}}$
is bounded by the RHS of \eqref{oh}.
Putting these remarks together we conclude that 
$\|I_\gamma(\V{k},\lambda)-I_\gamma(\V{p},\lambda)\|$
is bounded from above by the RHS of \eqref{pdb-kp},
for $0<|\V{k}|,|\V{p}|<\UV$.
Again we leave the treatment of the first two
terms in the formula for $a(k)\,\phi_\gamma$ to the reader;
we just note that 
$|\V{k}|^{-1}\big|\,|\V{k}|\V{G}_\V{0}(\V{k},\lambda)
-|\V{p}|\V{G}_\V{0}(\V{k},\lambda)\big|$ can bounded by the
RHS of \eqref{oh}, too; 
see \cite{GLL2001} or \cite[\textsection6.3]{KMS2009a}.

Finally, in the case of the no-pair operator
we already observed in \cite[Remark~7.2]{KMS2009b}
that the bound proven in Theorem~\ref{thm-exp-loc}(ii)
provides a proof of the infra-red bounds \eqref{eq-spb}
and \eqref{pdb-kp} with a constant independent of $\gamma\in[\gamma_1,\gcnp)$.
In fact, in \cite{KMS2009b}
we derived a formula for $a(k)\,\phi_\gamma$, when
$\phi_\gamma$ is a ground state eigenvector of $\FNP{\gamma}$,
$\gamma\in(0,\gc)$,
which comprises of more terms than in the SRPF case
but is otherwise completely analogous.
Hence, by essentially the same estimates as above we may
derive the infra-red bounds also for the no-pair model.
\end{proof}

Finally, we arrive at the principal result of this article:

\begin{theorem}[{\bf Ground states at critical coupling}]\label{thm-ex-crit}
For $e\in\RR$ and $\UV>0$, the minima of the spectra
of both $\PF{\gcPF}$ and $\NP{\gcnp}$ are eigenvalues.
\end{theorem}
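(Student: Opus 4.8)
The plan is to apply the abstract convergence criterion of Lemma~\ref{le-ex-allg} with $T_j:=H_{\gamma_j}$ for an arbitrary sequence $\gamma_j\nearrow\gc$ and $T:=H_{\gc}$, using the common notation \eqref{nora2}. By Lemma~\ref{le-srconv} we already have strong resolvent convergence, and by Propositions~\ref{prop-PF-sbc}(ii) and~\ref{prop-np-sbc}(ii) each $H_{\gamma_j}$ possesses a normalized ground state eigenvector $\phi_{\gamma_j}$ with eigenvalue $E_{\gamma_j}=\inf\spec[H_{\gamma_j}]$. The only hypothesis of Lemma~\ref{le-ex-allg} that is not yet available is that (a subsequence of) $\{\phi_{\gamma_j}\}$ converges weakly to a \emph{nonzero} limit $\phi$. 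Once this is established, Lemma~\ref{le-ex-allg} immediately yields that $E:=\lim_j E_{\gamma_j}=\inf\spec[H_{\gc}]$ is an eigenvalue of $H_{\gc}$, which is exactly the assertion of the theorem (recalling that $\NP{\gcnp}$ and $\NPneg{\gcnp}$, hence $\FNP{\gcnp}$, are unitarily equivalent, so an eigenvalue at the bottom of $\FNP{\gcnp}$ gives one at the bottom of $\NP{\gcnp}$).

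So the whole proof reduces to extracting a weakly convergent subsequence with nonzero limit, and this is where the compactness argument adapted from \cite{GLL2001} enters. First I would fix some $\gamma_1\in(0,\gc)$ and assume, without loss of generality, $\gamma_j\in[\gamma_1,\gc)$ for all $j$. Since the $\phi_{\gamma_j}$ are normalized, after passing to a subsequence we may assume $\phi_{\gamma_j}\rightharpoonup\phi$ weakly in $\HR$ for some $\phi$ with $\|\phi\|\klg1$; the task is to rule out $\phi=0$. The idea is to show that the family $\{\phi_{\gamma_j}\}$ is, in a suitable sense, \emph{relatively compact} in $\HR$, so that weak convergence upgrades to norm convergence along a further subsequence, whence $\|\phi\|=1\not=0$. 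To get compactness one controls three ``directions'' simultaneously: (i) spatial localization in $\V{x}$, (ii) localization and regularity in the photon momenta, and (iii) a bound on the photon number and the photon energy $\Hf$. Direction (i) is supplied by Theorem~\ref{thm-exp-loc}: choosing $a>0$ small enough (depending on $\gamma_1$ via $\ThPF-E_{\gamma_1}^\Pf$, resp.\ the analogous no-pair gap, together with Propositions~\ref{prop-PF-sbc}(i) and~\ref{prop-np-sbc}(i)), we have a uniform bound $\sup_j\|e^{a|\hat{\V{x}}|}\phi_{\gamma_j}\|<\infty$. Direction (iii) comes from \eqref{iris5} of Proposition~\ref{prop-iris}, which gives $\sup_j\|\Hf\phi_{\gamma_j}\|<\infty$ since $H_{\gamma_j}\phi_{\gamma_j}=E_{\gamma_j}\phi_{\gamma_j}$ and $E_{\gamma_j}$ is bounded (by \eqref{guenter1} from above and by semiboundedness from below), together with \eqref{eq-spbN} for the photon number. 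Direction (ii) is exactly the content of the infra-red bounds in Proposition~\ref{prop-IR}: the soft photon bound \eqref{eq-spb} provides an integrable majorant for $\|a(k)\phi_{\gamma_j}\|$ and the photon derivative bound \eqref{pdb-kp} gives equicontinuity of $\V{k}\mapsto a(\V{k},\lambda)\phi_{\gamma_j}$ away from the singular sets.

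Concretely, I would argue sector by sector. For each fixed $n$, the $n$-photon component $\phi_{\gamma_j}^{(n)}$ lives in $L^2(\RR^3_\V{x},\CC^4)\otimes\Fock^{(n)}[\HP]\subset L^2(\RR^{3(n+1)})$ (up to the finite-dimensional $\CC^4$ factor), and one shows that $\{\phi_{\gamma_j}^{(n)}\}_j$ is relatively compact in this $L^2$ space: spatial localization plus a bound on $\|\,|\DO|^{\ve}\phi_{\gamma_j}\|$ (from \eqref{iris}) gives compactness in the $\V{x}$-variable via a Rellich-type argument, while \eqref{eq-spb} localizes the photon momenta to a compact region $\{|\V{k}_i|\klg\UV\}$ minus a neighborhood of the coordinate axes (of arbitrarily small measure) and \eqref{pdb-kp} provides the needed equicontinuity there to apply a Fréchet--Kolmogorov / Arzelà--Ascoli argument in the photon variables. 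A diagonal extraction over $n$ then yields a subsequence with $\phi_{\gamma_j}^{(n)}\to\phi^{(n)}$ in norm for every $n$, and \eqref{eq-spbN} plus the $\Hf$-bound let one sum the tail uniformly, giving $\phi_{\gamma_j}\to\phi$ in $\HR$ and hence $\|\phi\|=1$.

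The main obstacle is precisely this compactness step, and within it the most delicate point is handling the \emph{infra-red singularities} of the polarization vectors, i.e.\ the factors $1/(|\V{k}|^{1/2}|\V{k}_\bot|)$ appearing on the right-hand side of \eqref{pdb-kp}: these blow up not only at $\V{k}=0$ but along the whole $k^{(3)}$-axis, so the equicontinuity furnished by \eqref{pdb-kp} degenerates there. The remedy, following \cite{GLL2001,BFS1999}, is to combine it with the $L^2$-in-$k$ decay from \eqref{eq-spb} (which shows the contribution of any fixed small neighborhood of the singular set to $\|\phi_{\gamma_j}\|^2$ is uniformly small) and to apply compactness only on the complement of such a neighborhood, then let the neighborhood shrink. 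A secondary technical nuisance is that Proposition~\ref{prop-IR} is stated for $\gamma\in[\gamma_1,\gc)$ with a constant depending on $\gamma_1$, which is exactly what is needed since we only take the limit along $\gamma_j\nearrow\gc$; and one must also recall that for the no-pair model everything takes place in the moving subspace $\PA\HR$, but since $\PA$ is an orthogonal projection depending continuously (in strong sense) on the data and the formula \eqref{def-FNP}/\eqref{NPFNP} lets us work with the self-adjoint operator $\FNP{\gamma}$ on the fixed space $\HR$, this causes no essential difficulty. Once the nonvanishing of $\phi$ is secured, Lemma~\ref{le-ex-allg} closes the argument.
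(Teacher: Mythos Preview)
Your overall strategy matches the paper's exactly: invoke Lemma~\ref{le-ex-allg} together with Lemma~\ref{le-srconv} and Propositions~\ref{prop-PF-sbc}(ii), \ref{prop-np-sbc}(ii), reduce to showing that a weak limit of normalized ground states $\phi_{\gamma_j}$ is nonzero, and establish this via a compactness argument built from Theorem~\ref{thm-exp-loc}, the kinetic-energy bound \eqref{iris}, the photon-number bound \eqref{eq-spbN}, and the infra-red bounds of Proposition~\ref{prop-IR}. The reduction from $\NP{\gcnp}$ to $\FNP{\gcnp}$ via \eqref{NPFNP} is also handled the same way.

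The genuine difference lies in how the compactness in each $n$-photon sector is executed. You propose to work in position space for $\V{x}$ (using exponential localization for tightness and $|\DO|^{\ve}$ for regularity) and to handle the infra-red singularity of \eqref{pdb-kp} along $\{\V{k}_\bot=0\}$ by \emph{excising} a small neighborhood of the singular set and invoking \eqref{eq-spb} to show the excised contribution is uniformly small. The paper instead takes the partial Fourier transform in $\V{x}$, so that exponential localization becomes an $L^2$-bound on $\nabla_{\vxi}\hat{\phi}^{(n)}$ and \eqref{iris} becomes tightness in $\vxi$; then \emph{all} variables live on a bounded box $Q_n=B_R\times B_\UV^n$, and the infra-red singularity is absorbed by working in $W^{1,p}(Q_n)$ for some $p<2$ (so that $|\V{k}_\bot|^{-p}$ is locally integrable) and applying Rellich--Kondrashov with the Sobolev exponent chosen so that $W^{1,p}\hookrightarrow L^2$ is compact. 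The paper explicitly remarks that this Fourier-transform manoeuvre avoids the anisotropic function spaces (fractional regularity in $\V{x}$ combined with $W^{1,p}$ in the photon momenta) that your route requires; your approach is viable---indeed it is essentially what was done in \cite{KMS2009a,KMS2009b}---but it needs more care to fuse the separate compactness statements for the $\V{x}$- and $\V{k}$-variables into compactness in the full $L^2(\RR^{3(n+1)})$. The paper's $L^p$-trick, borrowed from \cite{GLL2001}, is a cleaner packaging of the same ingredients.
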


\begin{proof}
Again we treat both models simultaneously using the 
notation \eqref{nora2}.
(Recall that in view of \eqref{NPFNP} it suffices
to show the existence of ground states for
$\FNP{\gcnp}$ instead of $\NP{\gcnp}$ in the no-pair model.)

Let $\phi_\gamma$ denote a normalized
ground state eigenvector of $H_{\gamma}$, for every $\gamma\in(0,\gc)$.
Then the family $\{\phi_\gamma\}_{\gamma\in(0,\gc)}$ contains some weakly
convergent sequence, $\{\phi_{\gamma_j}\}_{j\in\NN}$, $\gamma_j\nearrow\gc$.
We denote the weak limit of the latter by $\phi_{\gc}$. 
On account of Lemmata~\ref{le-ex-allg} and~\ref{le-srconv}
it suffices to show that $\phi_{\gc}\not=0$.

With the exponential localization
and infra-red bounds at hand the following compactness
argument is the same as in \cite{GLL2001}
(where an artificial photon mass is removed instead),
except that we first take the partial Fourier transform
with respect to $\V{x}$ before we apply the
Rellich-Kondrashov theorem.
(If one does not exchange the roles of the electronic position 
and momentum coordinates then the compactness argument
requires imbedding theorems for more exotic
function spaces since one has to deal with fractional
derivaties w.r.t. $\V{x}$ \cite{KMS2009a,KMS2009b}.
The variant of the argument below can also
be used to simplify the proofs in \cite{KMS2009a,KMS2009b}.)

Let $\ve>0$.
On account of \eqref{eq-spbN} we find some $n_0\in\NN$
such that
\begin{equation}\label{don1}
\forall\,\gamma\in[\gamma_1,\gc)\,:\quad
\sum_{n=n_0+1}^\infty\|\phi_\gamma^{(n)}\|^2<\frac{\ve}{2}\,.
\end{equation}
For $n\in\NN$, $\gamma\in(0,\gc]$, and 
$\ul{\theta}
=(\vs,\lambda_1,\dots,\lambda_n)\in\{1,2,3,4\}\times\ZZ_2^n$,
we set 
$$
\phi_{\gamma,\ul{\theta}}^{(n)}(\V{x},\V{k}_1,\dots,\V{k}_n)
:=\phi_\gamma^{(n)}(\V{x},\vs,\V{k}_1,\lambda_1,\ldots,\V{k}_n,\lambda_n)
$$
and denote the partial Fourier transform of $\phi_{\gamma,\ul{\theta}}^{(n)}$
with respect to $\V{x}$ as $\hat{\phi}_{\gamma,\ul{\theta}}^{(n)}$.
Then the soft photon bound \eqref{eq-spb} shows that
$\hat{\phi}_{\gamma,\ul{\theta}}^{(n)}(\V{\vxi},\V{k}_1,\dots,\V{k}_n)=0$, 
for almost every
$(\V{\vxi},\V{k}_1,\dots,\V{k}_n)\in\RR^{3(n+1)}$,
such that $|\V{k}_j|>\Lambda$, for some $j\in\{1,\dots,n\}$.
Moreover, pick some $s\in(0,1)$.
By virtue of \eqref{iris} we then have, for all $\gamma\in(0,\gc)$,
$n\in\NN$, and every choice of $\ul{\theta}$,
\begin{align*}
R^{s}&\int_{|\vxi|\grg R}
\|\hat{\phi}_{\gamma,\ul{\theta}}^{(n)}(\vxi,\cdot)\|^2_{L^2(\RR^{3n})}\,d^3\vxi
\klg\SPn{\phi_{\gamma,\ul{\theta}}^{(n)}}{(-\Delta)^{s/2}\,
\phi_{\gamma,\ul{\theta}}^{(n)}}
\\
&\klg
\SPn{\phi_\gamma}{H_{\gamma}\,\phi_\gamma}+C(e,\UV,s)=
E_\gamma+C(e,\UV,s)
\klg
|E_{\gc}|+\Th+C(e,\UV,s)\,.
\end{align*}
Consequently, we find some $R\grg1$ such that
\begin{equation}\label{don2}
\sum_{n=1}^{n_0}\big\|\id_{\{|\vxi|\grg R\}}\,
\hat{\phi}_{\gamma}^{(n)}\big\|^2<\frac{\ve}{2}\,.
\end{equation}
As in \cite{GLL2001} 
an application of H\"older's
inequality with respect to $d^3\V{\vxi}\,d^{3(n-1)}\V{K}$
and the photon derivative bound \eqref{pdb-kp}
yield, for 
$p\in[1,2)$ and $\gamma\in[\gamma_1,\gc)$,
\begin{align*}
\int\limits_{{|\V{k}|<\UV,\atop |\V{k}+\V{h}|<\UV}}\!\!\!\!\!\iint&
\big|\hat{\phi}_{\gamma,\ul{\theta}}^{(n)}
({\vxi},\V{k}+\V{h},\V{K})
-\hat{\phi}_{\gamma,\ul{\theta}}^{(n)}(\vxi,\V{k},\V{K})\big|^p
\,d^3\vxi\,d^{3(n-1)}\V{K}\,d^3\V{k}
\\
&\klg\,
C\sum_{\lambda\in\ZZ_2}\int\limits_{{|\V{k}|<\UV,\atop |\V{k}+\V{h}|<\UV}}\!\!\!
\big\|\,a(\V{k}+\V{h},\lambda)\,\phi_\gamma
-a(\V{k},\lambda)\,\phi_\gamma\,\big\|^p
d^3\V{k}
\\
&\klg\,
C'\,|\V{h}|^p\!\!\!\int\limits_{|(u,v)|<\UV}\!\!\!
\bigg\{\!
\int\limits_0^{|(u,v)|}\!\frac{dr}{|(u,v)|^{\nf{p}{2}}}\,+
\!\!\!
\int\limits_{|(u,v)|}^\UV\frac{dr}{r^{\nf{p}{2}}}
\bigg\}\,\frac{du\,dv}{|(u,v)|^p}
\,=\,C''\,|\V{h}|^p,
\end{align*}
where the constants $C,C',C''\in(0,\infty)$ 
depend on $p,n,e,\UV$, but not on $\gamma\in[\gamma_1,\gc)$.
Since $\phi_\gamma^{(n)}$ is permutation
symmetric with respect to the variables $k_1,\ldots,k_n$
the previous estimate implies \cite[\textsection4.8]{Nikolskii}
that the weak first order partial derivatives of 
$\hat{\phi}_{\gamma,\ul{\theta}}^{(n)}$ 
with respect to its last $3n$ variables exist on 
$Q_n:=B_R\times B_\UV^n$,
where $B_\rho$ denotes the open ball in $\RR^3$ of radius $\rho$
centered at the origin,
and that
$$
\sup_{\gamma\in[\gamma_1,\gc)}
\|\nabla_{\V{k}_i}\hat{\phi}_{\gamma,\ul{\theta}}^{(n)}\|_{L^p(Q_{n})}
<\infty\,, \quad p\in[1,2)\,,\quad i=1,\ldots,n\,,\quad n=1,\ldots,n_0\,.
$$
Finally, since 
$\sup_{\gamma\in[\gamma_1,\gc)}\|e^{a|\hat{\V{x}}|}\phi_{\gamma,\ul{\theta}}^{(n)}\|
<\infty$,
for some $a>0$,
we know that $\hat{\phi}_{\gamma,\ul{\theta}}^{(n)}$ has 
weak first order derivatives with respect to $\vxi$ and,
for all $\gamma\in[\gamma_1,\gc)$, we have
\begin{align*}
\|\nabla_{\vxi}\hat{\phi}_{\gamma,\ul{\theta}}^{(n)}\|_{L^p(Q_{n})}
&\klg C(p,n,R,\UV)\,\|\nabla_{\vxi}\hat{\phi}_{\gamma,\ul{\theta}}^{(n)}
\|_{L^2(\RR^{3(n+1)})}
\\
&= C'(p,n,R,\UV)\,
\|\hat{\V{x}}\,{\phi}_{\gamma,\ul{\theta}}^{(n)}\|_{L^2(\RR^{3(n+1)})}
\klg C''(p,n,R,\UV)\,.
\end{align*}
As observed in \cite{GLL2001} bounds
with respect to the $L^p$-norms, $p<2$, are actually sufficient
in this situation. In fact,
if we choose $p\in[1,2)$ so large that
$2<\frac{3(n_0+1)\,p}{3(n_0+1)-p}$,
then, for every $n=1,\ldots,n_0$ and every choice of $\ul{\theta}$,
we may apply the Rellich-Kondrashov theorem to show that
every subsequence of
$\{\hat{\phi}_{\gamma_j,\ul{\theta}}^{(n)}\}_{j\in\NN}$
contains another subsequence which is strongly convergent
in $L^2(Q_{n})$.
(Obviously, $Q_n$ satisfies the required cone condition.)
By finitely many repeated selections of subsequences 
we may hence assume without loss of generality
that $\{\hat{\phi}_{\gamma_j,\ul{\theta}}^{(n)}\}_{j\in\NN}$ converges strongly
in $L^2(Q_{n})$ to $\hat{\phi}^{(n)}_{\gc,\ul{\theta}}$, 
for all $n=0,\ldots,n_0$ and $\ul{\theta}$.
Taking \eqref{don1} and \eqref{don2} into account we arrive at
$$
\|\phi_{\gc}\|^2=\sum_{n=0}^\infty\|\hat{\phi}_{\gc}^{(n)}\|^2\grg
\lim_{j\to\infty}
\sum_{n=0}^{n_0}\sum_{\ul{\theta}}
\|\hat{\phi}^{(n)}_{\gamma_j,\ul{\theta}}\|_{L^2(Q_{n})}^2
\grg\lim_{j\to\infty}\|\phi_{\gamma_j}\|^2-\ve=1-\ve\,.
$$
Since $\ve>0$ is arbitrary we conclude that
$\|\phi_{\gc}\|=1$.
\end{proof}


\appendix

\section{ESTIMATES ON COMMUTATORS}\label{app-comm}

\noindent
In this appendix we derive some bounds on the operator
norms of certain commutators involving the sign
function of the Dirac operator which have been used
repeatedly in the main text.
Except for those of Lemma~\ref{le-sandra1}
all results and estimations presented here are
variants of earlier ones in \cite{MatteStockmeyer2009a}.
Nevertheless, we shall give a self-contained exposition 
for the convenience of the reader.

The following basic lemma, stating that the resolvent
of the Dirac operator,
$$
\RA{iy}:=(\DA-iy)^{-1},\qquad y\in\RR\,,
$$
stays bounded after conjugation with suitable exponential weights,
is more or less
folkloric, at least in the case of classical vector
potentials. The proof of \eqref{marah1} given, e.g.,
in \cite{MatteStockmeyer2008b} for classical vector
potentials works for quantized ones without any changes.

\begin{lemma}\label{le-marah}
Let $y\in \RR$, $a\in[0,1)$, and 
$F\in C^\infty(\RR_{\V{x}}^3,\RR)$  
such that $|\nabla F|\klg a$. Then
$iy\in\vr(D_{\V{A}}+i\valpha\cdot\nabla F)$ and
\begin{align}\label{marah0}
R^F_{\V{A}}(iy)&:= e^{F}\,R_{\V{A}}(iy)\,e^{-F}
=(\DA+i\valpha\cdot\nabla F-iy)^{-1}\quad \textrm{on}\;\dom(e^{-F})\,,
\\\label{marah1}
\|R^F_{\V{A}}(iy)\|&\klg \sqrt{6}\,(1-a^2)^{-1}\SL y\SR^{-1}.
\end{align}
\end{lemma}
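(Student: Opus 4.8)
The plan is to identify $\RAF{iy}$ with the resolvent of an explicit bounded perturbation of $\DA$ and to control that resolvent by a Neumann series. First I would record the purely algebraic identity
$$
e^F\,\DA\,e^{-F}=\DA+i\,\valpha\cdot\nabla F=:T\qquad\textrm{on}\;\core\,,
$$
which follows from $e^{F}(-i\partial_{x_j})e^{-F}=-i\partial_{x_j}+i(\partial_{x_j}F)$ together with the fact that $\V{A}$ and $\alpha_0$ commute with multiplication by the fibrewise scalar function $e^{-F}$. By the Clifford relations $(\valpha\cdot\nabla F)^2=|\nabla F|^2\,\id\klg a^2\,\id$, so $B:=\valpha\cdot\nabla F$ is a bounded self-adjoint multiplication operator with $\|B\|\klg a$; hence $T=\DA+iB$ is a bounded perturbation of the self-adjoint operator $\DA$ by the skew-adjoint bounded operator $iB$. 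In particular $T$ is closed on $\dom(\DA)$, and since $\DA$ is essentially self-adjoint on $\core$ while $e^{\pm F}$ map $\core$ bijectively onto itself (because $F\in C^\infty$), $T$ is essentially self-adjoint on $\core$ too.

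Next I would prove $iy\in\vr(T)$ with the stated norm bound. From $\spec(\DA)\cap(-1,1)=\emptyset$, i.e.\ $\DA^2\grg\id$, I get $\|(\DA-iy)\psi\|^2=\|\DA\psi\|^2+y^2\|\psi\|^2\grg(1+y^2)\|\psi\|^2$ for $\psi\in\dom(\DA)$, hence $\|\RA{iy}\|\klg\SL y\SR^{-1}\klg1$. On $\dom(\DA)$ I factor
$$
T-iy=(\DA-iy)\,\big(\id+\RA{iy}\,iB\big)\,,
$$
and since $\|\RA{iy}\,iB\|\klg a\,\SL y\SR^{-1}\klg a<1$, the second factor is invertible by a Neumann series, with inverse of norm $\klg(1-a)^{-1}$, and maps $\dom(\DA)$ bijectively onto itself. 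Therefore $iy\in\vr(T)$, $(T-iy)^{-1}=(\id+\RA{iy}\,iB)^{-1}\RA{iy}$, and
$$
\|(T-iy)^{-1}\|\klg(1-a)^{-1}\,\SL y\SR^{-1}=(1+a)(1-a^2)^{-1}\SL y\SR^{-1}\klg\sqrt6\,(1-a^2)^{-1}\SL y\SR^{-1}\,.
$$

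It remains to identify $\RAF{iy}$ with $(T-iy)^{-1}$ and thereby obtain \eqref{marah0} and \eqref{marah1}. From the identity of the first step, $e^F(\DA-iy)e^{-F}=T-iy$ on $\core$; unwinding this and using that $e^{\pm F}$ are bijections of $\core$, that $(\DA-iy)\core$ is dense (essential self-adjointness of $\DA$), and that $(T-iy)^{-1}$ is bounded, I obtain $e^F\RA{iy}e^{-F}=(T-iy)^{-1}$ on $\core$. Since $\core$ is a core for the (possibly unbounded) multiplication operator $e^{-F}$ and $e^F$ is closed, a standard approximation extends the identity to all of $\dom(e^{-F})$.

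The one point requiring care is this last step when $F$ is unbounded: one must verify that for $\psi\in\dom(e^{-F})$ the vector $\RA{iy}e^{-F}\psi$ actually lies in $\dom(e^F)$, and that the resulting operator is the whole resolvent of $T$, not a proper restriction. In the applications $F$ ranges over $\sW_a$ (so $0\klg F\klg\|F\|_\infty$, $e^{\pm F}$ bounded) and the issue is vacuous; I would dispose of that case first and pass to the general one by approximation. Note that nothing in the argument uses that $\V{A}$ is classical rather than quantized — only $\spec(\DA)\subset(-\infty,-1]\cup[1,\infty)$ and $\|\valpha\cdot\nabla F\|\klg a$ — so the proof applies verbatim to the quantized case, as the authors claim.
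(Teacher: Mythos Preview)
Your argument is correct and in fact yields a sharper constant than the one stated: the Neumann series gives $\|(T-iy)^{-1}\|\klg(1-a)^{-1}\SL y\SR^{-1}=(1+a)(1-a^2)^{-1}\SL y\SR^{-1}$ with $1+a<2<\sqrt6$. The paper does not prove this lemma itself; it refers to \cite{MatteStockmeyer2008b} and merely observes that the argument there carries over to quantized $\V{A}$ unchanged. The specific constant $\sqrt6$ presumably reflects a quadratic-form estimate in that reference rather than the Neumann series you use, so your route is if anything more elementary.

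Two small points. First, your phrase ``$T$ is essentially self-adjoint on $\core$'' is a slip: $T=\DA+iB$ is not symmetric (the perturbation $iB$ is skew-adjoint). What you mean, and what is true, is that $T$ is closed on $\dom(\DA)$ and that $\core$ is a core for it; in any case your factorisation $T-iy=(\DA-iy)(\id+\RA{iy}\,iB)$ already establishes $iy\in\vr(T)$ directly. Second, you correctly flag the unbounded-$F$ issue: for $\psi\in\core$ and general $F$ one must still check that $\RA{iy}\,e^{-F}\psi\in\dom(e^F)$. Your proposed approximation scheme (bounded $F_n\to F$ pointwise with $|\nabla F_n|\klg a$, e.g.\ $F_n=n\tanh(F/n)$) works: the uniform bound $\|e^{F_n}\RA{iy}e^{-F_n}\|\klg(1-a)^{-1}\SL y\SR^{-1}$, strong convergence of the right-hand side via the second resolvent identity, and closedness of the multiplication operator $e^F$ combine to close the gap. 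As you note, in every application within the paper $F\in\sW_a$ is bounded and the issue is vacuous.
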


The factor $(1-a^2)^{-1}$ in \eqref{marah1} will enter into many
estimates below but most of the time we will absorb it into
some constant. Henceforth, we stick to the convention
that all constants $C(a,\ldots),C'(a,\ldots),\ldots\,$
be increasing functions of $a$ when the other displayed
parameters are kept fixed.

All commutator estimates below are based on the 
following representation of $\SA=\DA\,|\DA|^{-1}$ as a 
strongly convergent principal value,
\begin{equation}\label{sgn}
\SA\,\psi=\lim_{\tau\to\infty}\int_{-\tau}^\tau\RA{iy}\,\psi\,\frac{dy}{\pi}\,,
\qquad\psi\in\HR.
\end{equation}

\begin{lemma}\label{le-sandra1}
For every bounded $F\in C^\infty(\RR^3_\V{x},\RR)$
with $|\nabla F|\klg a<1$, all $\chi\in C^\infty(\RR^3_\V{x},[0,1])$
with bounded first order derivatives,
and $\kappa\in[0,1)$,
$$
\big\|\,|\hat{\V{x}}|^{-\kappa}\,(\Hf+1)^{-\nicefrac{1}{2}}\,
[e^{F}\,\SA\,e^{-F},\,\chi]\,\big\|\klg 
C(a,e,\UV,\kappa)\,\|\nabla\chi\|_\infty\,.
$$
\end{lemma}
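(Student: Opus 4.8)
The plan is to convert the commutator into an absolutely convergent resolvent integral and then to reduce everything to a single weighted bound on the conjugated Dirac resolvent $\RAF{iy}=(\DA+i\valpha\cdot\nabla F-iy)^{-1}$. First I would combine the principal value formula \eqref{sgn} with \eqref{marah0} to write $e^F\SA e^{-F}=\lim_{\tau\to\infty}\int_{-\tau}^{\tau}\RAF{iy}\,dy/\pi$ strongly on $\HR$. Commuting with the bounded multiplication operator $\chi$ and using $[\DA,\chi]=-i\,\valpha\cdot\nabla\chi$ together with $[\valpha\cdot\nabla F,\chi]=0$ gives the resolvent commutator identity $[\RAF{iy},\chi]=i\,\RAF{iy}\,(\valpha\cdot\nabla\chi)\,\RAF{iy}$, hence
\[
[e^F\SA e^{-F},\chi]=\frac{i}{\pi}\int_{\RR}\RAF{iy}\,(\valpha\cdot\nabla\chi)\,\RAF{iy}\,dy .
\]
It is essential to absorb the conjugation into the Dirac resolvent from the outset: by \eqref{marah1} the norm $\|\RAF{iy}\|\klg\sqrt6\,(1-a^2)^{-1}\SL y\SR^{-1}$ is uniform over $F\in\sW_{a}$, which is exactly what makes the final constant independent of $\|F\|_\infty$ (estimating $[\SA,\chi]$ first and then sandwiching with $e^{\pm F}$ would lose this).

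Next I would pull out the bounded factor $\valpha\cdot\nabla\chi$ (with $\|\valpha\cdot\nabla\chi(\V x)\|=|\nabla\chi(\V x)|\klg\|\nabla\chi\|_\infty$) together with one copy of $\RAF{iy}$, so that the lemma reduces to the single estimate
\[
\big\|\,|\hat{\V x}|^{-\kappa}\,(\Hf+1)^{-1/2}\,\RAF{iy}\,\big\|\klg C(a,e,\UV,\kappa)\,\SL y\SR^{-(1-\kappa)},\qquad y\in\RR .
\]
Granting this, the integrand above multiplied on the left by $|\hat{\V x}|^{-\kappa}(\Hf+1)^{-1/2}$ has norm $\klg C\,\|\nabla\chi\|_\infty\,\SL y\SR^{-(2-\kappa)}$, which is integrable precisely because $\kappa<1$; this simultaneously shows that the $\tau$-limit converges in operator norm and thereby legitimises the manipulations.

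To prove the weighted resolvent bound I would first invoke the (spinor-valued) fractional Hardy inequality $\||\hat{\V x}|^{-\kappa}|\DO|^{-\kappa}\|\klg C_\kappa$ for $\kappa\in[0,1)$ — recall $\DO^2=(1-\Delta)\,\id_{\CC^4}$ — together with $[\,|\DO|,\Hf]=0$, reducing to a bound on $\big\||\DO|^{\kappa}(\Hf+1)^{-1/2}\RAF{iy}\big\|$. This I would obtain by complex interpolation in $\kappa$ (using that $|\DO|^{is}$ is unitary) between the endpoints $\kappa=0$, where the norm is $\klg\|\RAF{iy}\|\klg C(a)\SL y\SR^{-1}$, and $\kappa=1$. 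At the endpoint $\kappa=1$ I would split $\DO=\DA-\valpha\cdot\V A$: the term $(\Hf+1)^{-1/2}\DA\RAF{iy}$ is controlled by $\|\DA\RAF{iy}\|\klg C(a)$ via the algebraic identity $\DA\RAF{iy}=\id+iy\,\RAF{iy}-i(\valpha\cdot\nabla F)\RAF{iy}$ and \eqref{marah1}, while for $(\Hf+1)^{-1/2}(\valpha\cdot\V A)\RAF{iy}$ I would move $(\Hf+1)^{-1/2}$ through $\V A$, using that $\V A(\Hf+1)^{-1/2}$ is bounded by $c(e,\UV)$ (dual form of the standard relative bound $\|\V A\psi\|\klg c(e,\UV)\|(\Hf+1)^{1/2}\psi\|$) and that $[(\Hf+1)^{-1/2},\V A]$ is a bounded operator — a field commutator estimate of the kind developed in this appendix, stemming from $[\Hf,\V A]=\mp a^{\sharp}(\omega\V{G}_{\hat{\V{x}}})$ and the integral representation of $(\Hf+1)^{1/2}$. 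Multiplying by $\|\RAF{iy}\|\klg C(a)\SL y\SR^{-1}$ closes the endpoint, and interpolation yields the stated $\SL y\SR^{-(1-\kappa)}$ decay.

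The main obstacle is this last step: taming the quantized vector potential $\valpha\cdot\V A$ when passing from $\DA$ to the free operator $\DO$. This is exactly where the regularising factor $(\Hf+1)^{-1/2}$ in the statement is used in an essential way — without it one is left with $(\valpha\cdot\V A)\RAF{iy}$, which is not a bounded operator — and it is the reason the auxiliary bound on $[(\Hf+1)^{-1/2},\V A]$ is needed. A secondary point requiring care is keeping the constant entirely free of $\|F\|_\infty$, which is what forces the conjugation into the Dirac resolvent rather than onto $\SA$.
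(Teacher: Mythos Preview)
Your argument is correct and lands on the same key weighted resolvent estimate
\[
\big\|\,|\hat{\V x}|^{-\kappa}\,(\Hf+1)^{-1/2}\,\RAF{iy}\,\big\|\klg C(a,e,\UV,\kappa)\,\SL y\SR^{\kappa-1}
\]
that the paper isolates; once this is in hand the rest of both proofs is identical. The difference lies in how this bound is obtained. The paper avoids interpolation altogether: writing $\HT:=\Hf+1$ and using the second resolvent identity to compare $\RAF{iy}$ with the \emph{free} Dirac resolvent yields
\[
\HT^{-1/2}\,\RAF{iy}=\RO{iy}\,\big(\HT^{-1/2}-T\,\RAF{iy}\big),\qquad T:=\ol{\HT^{-1/2}\valpha\cdot(\V A+i\nabla F)},
\]
with $\|T\|\klg C(e,\UV)$. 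Now $|\hat{\V x}|^{-\kappa}$ hits $\RO{iy}$ directly, so Hardy plus $\|\,|\DO|^\kappa\RO{iy}\|\klg C\SL y\SR^{\kappa-1}$ gives the estimate for every $\kappa\in[0,1)$ in one line. Your route---Hardy first, then Stein interpolation between $\kappa=0$ and $\kappa=1$, the latter via $\DO=\DA-\valpha\cdot\V A$---works, but is heavier machinery for the same conclusion. Incidentally, at your endpoint $\kappa=1$ you do not actually need the commutator $[(\Hf+1)^{-1/2},\valpha\cdot\V A]$: since $\valpha\cdot\V A$ is symmetric and $\valpha\cdot\V A\,(\Hf+1)^{-1/2}$ is bounded, its adjoint $(\Hf+1)^{-1/2}\,\valpha\cdot\V A$ extends to a bounded operator directly.
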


\begin{proof}
To begin with we put $\HT:=\Hf+1$ and observe that
\begin{equation}\label{annalisa}
\HT^{-\nf{1}{2}}\,\RAF{iy}=\RO{iy}
\,\big(\HT^{-\nf{1}{2}}-T\,\RAF{iy}\big)\,,
\end{equation}
where $T\in\LO(\HR)$ is the closure of
$\HT^{-\nf{1}{2}}\valpha\cdot(\V{A}+i\nabla F)$
and satisfies $\|T\|\klg C(e,\UV)$.
In fact, since $\RO{iy}$ and $\HT^{-\nf{1}{2}}$ commute we obtain,
for every $\vp\in \core$,
\begin{align*}
\big\{&\HT^{-\nf{1}{2}}\,\RAF{iy}-\RO{iy}\,\HT^{-\nf{1}{2}}\big\}
\,(\DA+i\valpha\cdot\nabla F-iy)\,\vp
\\
&=-\RO{iy}\,\HT^{-\nf{1}{2}}\valpha\cdot(\V{A}+i\nabla F)\,\vp
\\
&=-\RO{iy}\,T\,\RAF{iy}\,(\DA+i\valpha\cdot\nabla F-iy)\,\vp\,.
\end{align*}
As $\DA$ is essentially self-adjoint on $\core$
we know that 
$(\DA+i\valpha\cdot\nabla F-iy)\,\core$ is dense in $\HR$
and we obtain \eqref{annalisa}.
(In fact, if $\psi\in\HR$ and $\vp_n\in\core$ converge to
$\RAF{iy}\,\psi\in\dom(\DA)$ in the graph norm of $\DA-iy$,
then $(\DA+i\valpha\cdot\nabla F-iy)\,\vp_n\to\psi$.)
Applying the generalized Hardy inequality, 
$|\hat{\V{x}}|^{-2\kappa}\klg C(\kappa)\,|\DO|^{2\kappa}$,
and $\|\,|\DO|^{\kappa}\RO{iy}\|\klg C'(\kappa)\SL y\SR^{\kappa-1}$
we deduce from \eqref{marah1} and \eqref{annalisa} that
$$
\big\|\,|\hat{\V{x}}|^{-\kappa}\,\HT^{-\nf{1}{2}}\,\RAF{iy}\big\|
\klg C''(e,\UV,\kappa)\,(1-a^2)^{-1}\,\SL y\SR^{\kappa-1}.
$$
Together with \eqref{sgn},
$[\RAF{iy},\chi]=\RAF{iy}\,i\valpha\cdot\nabla\chi\,\RAF{iy}$,
and \eqref{marah0}\&\eqref{marah1}
this permits to get
\begin{align*}
\big|&\SPb{|\hat{\V{x}}|^{-\kappa}\,\vp}{\HT^{-\nicefrac{1}{2}}\,
[e^{F}\,\SA\,e^{-F},\,\chi]\,\psi}\big|
\\
&\klg\int_\RR
\big|\SPb{|\hat{\V{x}}|^{-\kappa}\,\vp}{\HT^{-\nicefrac{1}{2}}\,
\RAF{iy}\,i\valpha\cdot\nabla\chi\,\RAF{iy}\,\psi}\big|
\,\frac{dy}{\pi}
\\
&\klg C'''(e,\UV,\kappa)\,(1-a^2)^{-2}
\int_\RR\SL y\SR^{\kappa-2}dy\cdot\|\nabla\chi\|_\infty
\,\|\vp\|\,\|\psi\|\,,
\end{align*}
for all $\vp\in\dom(|\hat{\V{x}}|^{-\kappa})$, $\psi\in\HR$, and we conclude.
\end{proof}

The bounds derived in the following lemma
are slightly more general than the corresponding ones of
\cite[Lemma~3.5]{MatteStockmeyer2009a}.

\begin{lemma}\label{le-sandra2}
Let $\kappa\in[0,1)$, $\ve>0$,
and $\chi\in C^\infty(\RR^3_{\V{x}},[0,1])$ with
$|\nabla\chi|$ bounded.
Moreover, let $F,G\in C^\infty(\RR^3_\V{x},\RR)$ be bounded
with bounded first order derivatives 
and such that $|\nabla(F-G)|\klg a<1$.
Then 
\begin{align}
\label{sandra2.2}
\big\|\,|\DA|^\kappa\,
[\chi\,e^G\,,\,\SA\,]\,e^{F-G}\,\big\|&\klg 
C(a,\kappa)\,\|(\nabla\chi+\chi\,\nabla G)\,e^F\|_\infty\,,
\\
\label{sandra2.1}
\big\|\,|\DA|^{-\ve}\,
[\chi\,e^G\,,\,|\DA|\,]\,e^{F-G}\,\big\|&\klg 
C(a,\ve)\,\|(\nabla\chi+\chi\,\nabla G)\,e^F\|_\infty\,.
\end{align}
In particular, we have, for every bounded 
$G\in C^\infty(\RR^3_\V{x},\RR)$ 
such that $|\nabla G|\klg a<1$,
\begin{equation}\label{clara2}
\big\|e^G\SA e^{-G}\big\|\klg 1+C(a)\,\|\nabla G\|_\infty\,.
\end{equation}
\end{lemma}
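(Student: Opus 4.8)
The plan is to deduce all three bounds from the principal-value representation \eqref{sgn} of $\SA$ together with the weighted resolvent estimate \eqref{marah1}. I would first note the elementary spectral bounds $\||\DA|^{\kappa}\RA{iy}\|\klg C\,\SL y\SR^{\kappa-1}$ for $\kappa\in[0,1)$ and $\||\DA|^{-\ve}\|\klg1$, both immediate from $\spec(\DA)\subset(-\infty,-1]\cup[1,\infty)$, and the identity
\begin{equation*}
[B,\SA]=\int_\RR\RA{iy}\,[\DA,B]\,\RA{iy}\,\frac{dy}{\pi}=\int_\RR\RA{iy}\,(-i\,\valpha\cdot\nabla B)\,\RA{iy}\,\frac{dy}{\pi}\,,
\end{equation*}
valid on $\core$ for any smooth bounded scalar multiplication operator $B$ with bounded gradient (using $[B,\RA{iy}]=\RA{iy}[\DA,B]\RA{iy}$ together with \eqref{sgn}).

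To prove \eqref{sandra2.2} I would take $B=\chi\,e^G$, so that $\nabla B=\big((\nabla\chi+\chi\,\nabla G)\,e^F\big)\,e^{G-F}$, multiply the above identity by $|\DA|^\kappa$ on the left and by $e^{F-G}$ on the right, and move $e^{F-G}$ inside the integral. The crucial point is to read $e^{G-F}\RA{iy}\,e^{F-G}$ as the weighted resolvent $(\DA+i\,\valpha\cdot\nabla(G-F)-iy)^{-1}$ of \eqref{marah0}; since $|\nabla(G-F)|\klg a<1$, Lemma~\ref{le-marah} bounds its norm by $C(a)\,\SL y\SR^{-1}$. Using also $\|\valpha\cdot v\|=|v|$ and $\||\DA|^\kappa\RA{iy}\|\klg C\,\SL y\SR^{\kappa-1}$, the integrand is then dominated by $C(a)\,\|(\nabla\chi+\chi\,\nabla G)\,e^F\|_\infty\,\SL y\SR^{\kappa-2}$, which is integrable in $y$ precisely because $\kappa<1$; this gives the asserted bound and, along the way, the boundedness claim. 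The estimate \eqref{clara2} is then the special case $\chi\equiv1$, $F\equiv0$, $\kappa=0$ of \eqref{sandra2.2}, because $[e^G,\SA]\,e^{-G}=e^G\,\SA\,e^{-G}-\SA$ and the hypothesis $|\nabla G|\klg a$ is precisely the hypothesis $|\nabla(F-G)|\klg a$ of \eqref{sandra2.2}.

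For \eqref{sandra2.1} I would split the commutator by means of $|\DA|=\DA\,\SA$, namely
\begin{equation*}
[\chi\,e^G,|\DA|]=\DA\,[\chi\,e^G,\SA]+[\chi\,e^G,\DA]\,\SA\qquad\text{on }\core\,,
\end{equation*}
and estimate the two contributions to $|\DA|^{-\ve}[\chi\,e^G,|\DA|]\,e^{F-G}$ separately. In the first, $|\DA|^{-\ve}\DA=\SA\,|\DA|^{1-\ve}$, so it equals $\SA\,|\DA|^{1-\ve}[\chi\,e^G,\SA]\,e^{F-G}$ and is controlled by \eqref{sandra2.2} with $\kappa=1-\ve\in(0,1)$ when $\ve\in(0,1)$, and by $\||\DA|^{1-\ve}\|\klg1$ combined with \eqref{sandra2.2} at $\kappa=0$ when $\ve\grg1$. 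In the second, $[\chi\,e^G,\DA]=i\,\valpha\cdot(\nabla\chi+\chi\,\nabla G)\,e^G$; writing $e^G=e^F\,e^{G-F}$ and pulling the scalar factor $e^{G-F}$ past the matrix-valued multiplication operator, the contribution becomes $i\,|\DA|^{-\ve}\big[\valpha\cdot(\nabla\chi+\chi\,\nabla G)\,e^F\big]\,e^{G-F}\,\SA\,e^{F-G}$, and $\|e^{G-F}\,\SA\,e^{-(G-F)}\|\klg1+C(a)\,\|\nabla(G-F)\|_\infty\klg C'(a)$ by \eqref{clara2}. Adding the two estimates, together with $\||\DA|^{-\ve}\|\klg1$, yields \eqref{sandra2.1}.

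The steps needing care are twofold. The first is the routine justification that the commutator manipulations and the interchange of $|\DA|^\kappa$ and $e^{F-G}$ with the $y$-integration are legitimate; this is handled in the usual way, by working on $\core$ (on which $\chi\,e^G$ and $e^{F-G}$ act invariantly, while $\RA{iy}$ maps into $\dom(\DA)$) and then passing to norm limits using the $\SL y\SR^{\kappa-2}$ domination and the closedness of $|\DA|^\kappa$. The second, and the genuine point, is the weight bookkeeping: one must dress each Dirac resolvent by exactly $e^{\pm(G-F)}$ so that Lemma~\ref{le-marah} applies with the available smallness $|\nabla(G-F)|\klg a<1$, while at the same time extracting precisely the factor $e^F$ that turns $\nabla(\chi\,e^G)$ into $(\nabla\chi+\chi\,\nabla G)\,e^F$. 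It is this matching that forces the sharp right-hand sides $\|(\nabla\chi+\chi\,\nabla G)\,e^F\|_\infty$ rather than a cruder product of supremum norms.
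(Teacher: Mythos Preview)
Your proof is correct and follows essentially the same route as the paper: the integral representation \eqref{sgn} together with the resolvent commutator identity \eqref{sandra-1}, the spectral bound $\||\DA|^\kappa\RA{iy}\|\klg C\SL y\SR^{\kappa-1}$, and the weighted resolvent estimate \eqref{marah1} yield \eqref{sandra2.2}; then \eqref{clara2} is the special case $\chi=1$, $F=0$, $\kappa=0$; and \eqref{sandra2.1} follows from the splitting $[\chi e^G,|\DA|]=\DA[\chi e^G,\SA]+[\chi e^G,\DA]\SA$ combined with $|\DA|^{-\ve}\DA=\SA|\DA|^{1-\ve}$ and \eqref{clara2}. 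The only (minor) differences are that the paper phrases the estimate for \eqref{sandra2.2} via inner products with $\vp\in\dom(|\DA|^\kappa)$ rather than operator integrals, and that you explicitly treat the easy case $\ve\grg1$ which the paper leaves implicit.
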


\begin{proof}
Combining \eqref{sgn}, the computation
\begin{equation}\label{sandra-1}
[\RA{iy},\chi\,e^G]\,e^{F-G}=
\RA{iy}\,i\valpha\cdot(\nabla\chi+\chi\,\nabla G)\,e^F\,
R_\V{A}^{G-F}(iy)\,,
\end{equation}
and the bounds 
$\|\,|\DA|^{\kappa}\,\RA{iy}\|\klg C'(\kappa)\,\SL y\SR^{\kappa-1}$
and $\|R_\V{A}^{G-F}(iy)\|\klg C'(a)\,\SL y\SR^{-1}$
we find, for all $\vp\in\dom(|\DA|^{\kappa})$ and $\psi\in\HR$,
\begin{align}\nonumber
\big|&\SPb{|\DA|^{\kappa}\,\vp}{[\chi\,e^G,\SA]\,e^{F-G}\,\psi}\big|
\\\nonumber
&\klg\int_\RR
\big|\SPb{|\DA|^{\kappa}\,\vp}{
\RA{iy}\,
i\valpha\cdot(\nabla\chi+\chi\,\nabla G)\,e^F\,
R_\V{A}^{G-F}(iy)
\,\psi}\big|\,\frac{dy}{\pi}
\\\label{sandra2.3}
&\klg C''(a,\kappa)\,\|(\nabla\chi+\chi\,\nabla G)\,e^F\|_\infty
\int_\RR\SL y\SR^{\kappa-2}dy\,\|\vp\|\,\|\psi\|\,,
\end{align}
which gives \eqref{sandra2.2}.
Choosing $\kappa=0$, $\chi=1$, and $F=0$ we also obtain \eqref{clara2},
$$
\big\|e^G\SA e^{-G}\big\|\klg 
\|\SA\|+\big\|\,[e^G,\SA]\,e^{-G}\|\klg1+C(a)\,\|\nabla G\|_\infty\,.
$$ 
To derive \eqref{sandra2.1} we write $|\DA|=\DA\,\SA$
and compute
$$
[\chi\,e^G,|\DA|\,]\,e^{F-G}=
i\valpha\cdot(\nabla\chi+\chi\,\nabla G)\,e^F\,\big(e^{G-F}\SA
e^{F-G}\big)+\DA\,[\chi\,e^G,\SA]\,e^{F-G}
$$
on $\core$.
(Thanks to \cite[Proof of Lemma~3.4(ii)]{MatteStockmeyer2009a}
we know that $\SA$ maps $e^{F-G}\core=\core$ into
$\dom(\DO)\cap\dom(\Hf)$ which is left invariant under
multiplication with $\chi\,e^G$.)
Using $|\DA|^{-\ve}\DA=\SA\,|\DA|^\kappa$ with $\kappa:=1-\ve<1$
we thus observe that \eqref{sandra2.1} is a consequence
of \eqref{sandra2.2} and \eqref{clara2}.
\end{proof}

The next lemma again presents
a variant of a bound from \cite[Lemma~3.5]{MatteStockmeyer2009a}.
In order to prove it we recall some technical
tool introduced in \cite{MatteStockmeyer2009a}.
First, we put 
\begin{equation}\label{def-HT}
\HT:=\Hf+K\,,\qquad
T_\nu:=[\HT^{-\nu},\valpha\cdot\V{A}]\,\HT^\nu\;\;\textrm{on}\;\core,
\end{equation}
and recall the bound 
$\|T_\nu\|\klg C(e,\UV)/K^{\nf{1}{2}}$, for $\nu\grg1/2$
and $K\grg 1$;
see \cite[Lemma~3.1]{MatteStockmeyer2009a}.
In view of \eqref{marah1}
it shows that, for a sufficiently large choice of $K\grg1$,
the Neumann series 
$\Xi^F_\nu(y):=\sum_{\ell=0}^\infty\{-\RAF{iy}\,\ol{T}_\nu\}^\ell$ converges 
and satisfies, say, $\|\Xi^F_\nu(y)\|\klg2$, for all
$\nu\grg1/2$, $y\in\RR$, and $F\in C^\infty(\RR^3_\V{x},\RR)$
with $|\nabla F|\klg a<1$.
Moreover, it is easy to verify the following
useful intertwining relation \cite[Corollary~3.1]{MatteStockmeyer2009a},
\begin{align}\label{Com1}
\HT^{-\nu}\,R^F_{\V{A}}(iy)
&=\Xi_{\nu}^F(y)\,R^F_{\V{A}}(iy)\,\HT^{-\nu}.
\end{align}

\begin{lemma}\label{le-sandra3}
Let $\nu\grg1/2$ and $\chi$, $F$, and $G$
be as in Lemma~\ref{le-sandra2}. Then
\begin{equation}\label{clara3}
\big\|(\Hf+1)^{-\nu}\,[\chi\,e^G\,,\,\SA]\,e^{F-G}\,\Hf^\nu\,\big\|
\klg C(a,e,\UV)^\nu\,\|(\nabla\chi+\chi\,\nabla G)\,e^F\|_\infty\,.
\end{equation}
\end{lemma}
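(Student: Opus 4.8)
The plan is to assemble the bound from the commutator machinery already prepared in this appendix. First I would fix $K\grg1$, depending only on $a$, $e$, and $\UV$, large enough that the Neumann series $\Xi^F_\nu(y)=\sum_{\ell\grg0}\{-\RAF{iy}\,\ol{T}_\nu\}^\ell$ converges with $\|\Xi^F_\nu(y)\|\klg2$ for all $\nu\grg1/2$, all $y\in\RR$, and every smooth $F$ with $|\nabla F|\klg a<1$; this is possible by \eqref{marah1} and the bound $\|T_\nu\|\klg C(e,\UV)/K^{\nf{1}{2}}$ recalled after \eqref{def-HT}. Put $\HT:=\Hf+K$ as in \eqref{def-HT}. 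Since $\Hf\klg\HT\klg K\,(\Hf+1)$ in the form sense, the spectral theorem gives $\|(\Hf+1)^{-\nu}\HT^{\nu}\|\klg K^{\nu}$ and $\|\HT^{-\nu}\Hf^{\nu}\|\klg1$. Writing
\[
(\Hf+1)^{-\nu}[\chi\,e^G,\SA]\,e^{F-G}\,\Hf^{\nu}=\big((\Hf+1)^{-\nu}\HT^{\nu}\big)\,\HT^{-\nu}[\chi\,e^G,\SA]\,e^{F-G}\,\HT^{\nu}\,\big(\HT^{-\nu}\Hf^{\nu}\big),
\]
it then suffices to bound the middle factor by $C(a)\,\|(\nabla\chi+\chi\,\nabla G)\,e^F\|_\infty$ with $C(a)\grg1$ depending on $a$ alone; the asserted estimate follows with $C(a,e,\UV):=K\,C(a)^{2}$, since for $\nu\grg1/2$ one has $K^{\nu}C(a)\klg K^{\nu}C(a)^{2\nu}=(K\,C(a)^2)^{\nu}$.

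For the middle factor I would use an integral representation. Combining the strongly convergent principal value \eqref{sgn} with the commutator identity \eqref{sandra-1}, and noting that $R_{\V{A}}^{G-F}(iy)=e^{G-F}\RA{iy}\,e^{F-G}$ is well defined by Lemma~\ref{le-marah} because $|\nabla(G-F)|\klg a<1$, one obtains on $\core$ the absolutely convergent representation
\[
[\chi\,e^G,\SA]\,e^{F-G}=\int_{\RR}\RA{iy}\,i\valpha\cdot(\nabla\chi+\chi\,\nabla G)\,e^F\,R_{\V{A}}^{G-F}(iy)\,\frac{dy}{\pi}.
\]
Given $\vp\in\core$, I apply the bounded operator $\HT^{-\nu}$ on the left and pull it into the integral; then for each $y$ I apply \eqref{Com1} once with the trivial weight to move $\HT^{-\nu}$ past $\RA{iy}$, use that $\HT^{-\nu}$ commutes with the $\V{x}$-multiplication operator $i\valpha\cdot(\nabla\chi+\chi\,\nabla G)\,e^F$, and apply \eqref{Com1} a second time with the weight $G-F$ so that the remaining $\HT^{-\nu}\,(\cdot)\,\HT^{\nu}$ collapses. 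This yields
\[
\HT^{-\nu}[\chi\,e^G,\SA]\,e^{F-G}\,\HT^{\nu}\vp=\int_{\RR}\Xi^{0}_\nu(y)\,\RA{iy}\,i\valpha\cdot(\nabla\chi+\chi\,\nabla G)\,e^F\,\Xi^{G-F}_\nu(y)\,R_{\V{A}}^{G-F}(iy)\,\vp\,\frac{dy}{\pi}.
\]

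Now the integrand on the right has norm at most $2\cdot\SL y\SR^{-1}\cdot\|(\nabla\chi+\chi\,\nabla G)\,e^F\|_\infty\cdot2\cdot\sqrt{6}\,(1-a^2)^{-1}\SL y\SR^{-1}\,\|\vp\|$, using $\|\Xi^{\bullet}_\nu(y)\|\klg2$, $\|\RA{iy}\|\klg\SL y\SR^{-1}$ (valid since $|\DA|\grg1$), $\|\valpha\cdot\V{v}\|=|\V{v}|$, and \eqref{marah1}; integrating $\SL y\SR^{-2}$ over $\RR$ produces exactly the desired bound for the middle factor with $C(a)=\big(4\sqrt{6}\,(1-a^2)^{-1}\big)\vee1$. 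Finally, the weightless version of the same integral estimate shows that $[\chi\,e^G,\SA]\,e^{F-G}$ extends to a bounded operator; since moreover $(\Hf+1)^{-\nu}$ is bounded and $\core$ is a core for $\Hf^{\nu}$, the uniform bound obtained on $\core$ passes to all of $\dom(\Hf^{\nu})$, which gives the claim. Most of this is bookkeeping of the prepared estimates, and I expect the only points worth spelling out to be the tracking of the factor $K^{\nu}$—which is exactly what produces the $\nu$-th power in the statement—and the verification that \eqref{Com1} is applicable with both weights $0$ and $G-F$ (the latter using the hypothesis $|\nabla(G-F)|\klg a<1$); I do not anticipate a substantive obstacle.
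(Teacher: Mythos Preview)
Your proposal is correct and follows essentially the same route as the paper: fix $K$ so the Neumann series $\Xi_\nu^\bullet$ are uniformly bounded, write $[\chi e^G,\SA]e^{F-G}$ via \eqref{sgn} and \eqref{sandra-1}, push $\HT^{-\nu}$ through both resolvents using \eqref{Com1} with the weights $0$ and $G-F$, and integrate the resulting $\SL y\SR^{-2}$ bound, absorbing the factor $K^\nu$ from $\|(\Hf+1)^{-\nu}\HT^{\nu}\|$ into the constant. The paper phrases the computation as a quadratic form estimate $\big|\SPn{\vp}{\HT^{-\nu}[\cdot]\Hf^\nu\psi}\big|$ rather than as your operator identity on $\core$, but this is purely stylistic; the ingredients and their order of use are the same.
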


\begin{proof}
We define $\HT$ by \eqref{def-HT}, for some sufficiently large
$K\grg1$ such that the remarks preceding the statement are
applicable.
By means of \eqref{sgn}, \eqref{sandra-1}, and \eqref{Com1} we then obtain
\begin{align}\nonumber
\big|&\SPb{\vp}{\HT^{-\nu}\,[\chi\,e^G,\SA]\,e^{F-G}\,\Hf^\nu\,\psi}\big|
\\\nonumber
&\klg\int_\RR
\big|\SPb{\vp}{\HT^{-\nu}\,
\RA{iy}\,i\valpha\cdot(\nabla\chi+\chi\nabla G)\,e^F\,
R_\V{A}^{G-F}(iy)\,\Hf^\nu\,\psi}\big|\,
\frac{dy}{\pi}
\\\nonumber
&\klg\int_\RR
\big|\SPb{\vp}{\Xi^0_\nu(y)\,
\RA{iy}\,i\valpha\cdot(\nabla\chi+\chi\nabla G)\,e^F\,\times
\\
&\nonumber
\qquad\qquad\qquad\qquad\qquad\qquad\quad\times\,\Xi^{G-F}_\nu(y)\,
R_\V{A}^{G-F}(iy)\,\HT^{-\nu}\,\Hf^\nu\,\psi}\big|\,\frac{dy}{\pi}
\\\nonumber
&\klg C(a)\,\sup_{y\in\RR}\{\|\Xi^0_\nu(y)\|\,\|\Xi^{G-F}_\nu(y)\|\}
\,\|\Hf^\nu\,\HT^{-\nu}\|
\,\|(\nabla\chi+\chi\,\nabla G)\,e^F\|_\infty
\int_\RR\SL y\SR^{-2}dy
\\\nonumber
&\klg 
C'(a)\,\|(\nabla\chi+\chi\,\nabla G)\,e^F\|_\infty\,,
\end{align}
for all normalized $\vp,\psi\in\core$.
This implies \eqref{clara3} since
$\|(\Hf+1)^{-\nu}\HT^{\nu}\|\klg K^\nu$, where our choice
of $K$ depends only on $a$, $e$, and $\UV$.
\end{proof}

The last lemma of this appendix is just a special case
of \cite[Lemma~3.6]{MatteStockmeyer2009a}.

\begin{lemma}\label{le-dc}
For all bounded $F\in C^\infty(\RR^3_\V{x},\RR)$ such that
$|\nabla F|\klg a<1$ and $\nu\grg1/2$,
\begin{equation}\label{clara99}
\big\|(\Hf+1)^{-\nu}\,\big[e^{-F},\,[\SA\,,\,e^F]\big]\,\Hf^\nu\,\big\|
\klg C(a,e,\UV)^\nu\,\|\nabla F\|_\infty^2\,.
\end{equation}
\end{lemma}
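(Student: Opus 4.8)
Following the pattern of the preceding lemmas, the plan is to derive an explicit integral representation of the double commutator from the principal‑value formula \eqref{sgn}, then to absorb the exponential weights into conjugated Dirac resolvents via Lemma~\ref{le-marah}, and finally to move the photon‑energy weight through the resulting resolvent factors by means of the intertwining relation \eqref{Com1}.

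First I would record the first‑order identity. Put $N:=i\valpha\cdot\nabla F$, a bounded matrix‑valued multiplication operator with $\|N\|\klg\|\nabla F\|_\infty\klg a$. Since $[\DA,e^{\pm F}]=\mp Ne^{\pm F}$ on $\core$, one obtains, exactly as in \eqref{sandra-1}, that $[\RA{iy},e^{\pm F}]=\mp\RA{iy}\,N\,e^{\pm F}\,\RA{iy}$ for $y\in\RR$, and inserting the case $+F$ into \eqref{sgn} gives the norm‑convergent integral $[\SA,e^F]=\int_\RR\RA{iy}\,N\,e^F\,\RA{iy}\,\frac{dy}{\pi}$ (the integrand being $\bigO(\SL y\SR^{-2})$). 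Applying $[e^{-F},\,\cdot\,]$ under this integral — legitimate because $e^{-F}$ is bounded and $N$ commutes with $e^{\pm F}$ — using $[e^{-F},\RA{iy}]=\RA{iy}Ne^{-F}\RA{iy}$, and absorbing $e^{\pm F}$ through the middle resolvent via $e^{\mp F}\RA{iy}e^{\pm F}=\RAmF{iy}$ and $=\RAF{iy}$ respectively (notation of Lemma~\ref{le-marah}), I would arrive at
\[
\big[e^{-F},[\SA,e^F]\big]=\int_\RR\big(\RA{iy}\,N\,\RAmF{iy}\,N\,\RA{iy}+\RA{iy}\,N\,\RAF{iy}\,N\,\RA{iy}\big)\,\frac{dy}{\pi}\,.
\]
All compositions here make sense on $\core$ because $\SA$ maps $\core$ into $\dom(\DO)\cap\dom(\Hf^\nu)$ by \cite[Proof of Lemma~3.4(ii)]{MatteStockmeyer2009a}.

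Next I would bring in the weight. With $\HT:=\Hf+K$ and the Neumann series $\Xi^F_\nu(y)$ as in the paragraph preceding Lemma~\ref{le-sandra3} — so that, for $K=K(a,e,\UV)$ large enough, $\|\Xi^F_\nu(y)\|\klg2$ for all $\nu\grg1/2$, $y\in\RR$, and $F$ with $|\nabla F|\klg a<1$ — I would push $\HT^{-\nu}$ rightwards through $\RA{iy}\,N\,\RAmF{iy}\,N\,\RA{iy}$ one factor at a time: $N$ commutes with $\HT$, while each resolvent is crossed using \eqref{Com1}, which turns $\HT^{-\nu}$ times a resolvent into $\Xi^\bullet_\nu(y)$ times the resolvent times $\HT^{-\nu}$; the trailing $\HT^{-\nu}$ then cancels the $\HT^\nu$ on the far right (again by \eqref{Com1}, producing one more $\Xi$‑factor). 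This gives
\[
\HT^{-\nu}\,\RA{iy}\,N\,\RAmF{iy}\,N\,\RA{iy}\,\HT^\nu=\Xi^0_\nu(y)\,\RA{iy}\,N\,\Xi^{-F}_\nu(y)\,\RAmF{iy}\,N\,\Xi^0_\nu(y)\,\RA{iy}\,,
\]
and likewise for the $\RAF{iy}$‑term. Using $\|\RA{iy}\|\klg\SL y\SR^{-1}$, $\|\RAF{iy}\|,\|\RAmF{iy}\|\klg C(a)\SL y\SR^{-1}$ (Lemma~\ref{le-marah}), $\|N\|\klg\|\nabla F\|_\infty$ and $\|\Xi^\bullet_\nu(y)\|\klg2$, each of the two terms has norm $\klg C(a)\,\SL y\SR^{-3}\,\|\nabla F\|^2_\infty$; since $\int_\RR\SL y\SR^{-3}\,dy<\infty$, this yields $\|\HT^{-\nu}[e^{-F},[\SA,e^F]]\HT^\nu\|\klg C'(a)\,\|\nabla F\|^2_\infty$. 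Finally, factoring $(\Hf+1)^{-\nu}(\cdot)\Hf^\nu=\big((\Hf+1)^{-\nu}\HT^\nu\big)\big(\HT^{-\nu}(\cdot)\HT^\nu\big)\big(\HT^{-\nu}\Hf^\nu\big)$ and using $\|(\Hf+1)^{-\nu}\HT^\nu\|\klg K^\nu$, $\|\HT^{-\nu}\Hf^\nu\|\klg1$ gives the asserted bound $C(a,e,\UV)^\nu\,\|\nabla F\|^2_\infty$.

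The statement being essentially \cite[Lemma~3.6]{MatteStockmeyer2009a}, I expect no serious obstacle, only bookkeeping: (i) verifying that the formal commutator manipulations are genuine operator identities on a dense domain (invariance of $\core$ under multiplication by $e^{\pm F}$ and by $N$, essential self‑adjointness of $\DA$ to pass from $\core$ to dense ranges, and the interchange of $[e^{-F},\,\cdot\,]$ with the norm‑convergent $y$‑integral), and (ii) carrying $\HT^{-\nu}$ cleanly through all three resolvent factors via \eqref{Com1} while keeping track of which function $\Xi^\bullet_\nu(y)$ appears at each step.
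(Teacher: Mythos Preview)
Your argument is correct and follows essentially the same route as the paper: compute the double commutator of the resolvent as $\RA{iy}\,N\,\{\RAF{iy}+\RAmF{iy}\}\,N\,\RA{iy}$, push $\HT^{-\nu}$ through the three resolvent factors via \eqref{Com1} (producing three $\Xi$-factors, whence the paper's $2^3$), and integrate the resulting $\SL y\SR^{-3}$ bound. There is a harmless sign slip in your intermediate line $[\RA{iy},e^{\pm F}]=\mp\RA{iy}Ne^{\pm F}\RA{iy}$ (it should be $\pm$), but your subsequent formulas and the final double-commutator identity are correct.
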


\begin{proof}
A straightforward computation yields
$$
\big[e^{-F},\,[\RA{iy}\,,\,e^F]\big]
=\RA{iy}\,i\valpha\cdot\nabla F\,
\big\{\RAF{iy}+\RAmF{iy}\big\}\,i\valpha\cdot\nabla F\,\RA{iy}\,.
$$
Together with \eqref{marah1}, \eqref{sgn}, 
and \eqref{Com1} this permits to get
\begin{align*}
\big|&\SPb{\vp}{\HT^{-\nu}\big[e^{-F},\,[\SA\,,\,e^F]\big]\,\Hf^\nu\,\psi}\big|
\\
&\klg2^3\int_\RR
\|\RA{iy}\|\,\|\nabla F\|_\infty^2\,
\big(\|\RAF{iy}\|+\|\RAmF{iy}\|\big)\,\|\RA{iy}\|\,\|\Hf^\nu\,\HT^{-\nu}\|\,
\frac{dy}{\pi}
\\
&\klg C(a)\,\|\nabla F\|_\infty^2\int_\RR\frac{dy}{\SL y\SR^3}\,,
\end{align*}
for all normalized $\vp,\psi\in\core$. We conclude 
as in the previous proof. 
\end{proof}


\bigskip

\noindent
{\bf Acknowledgement.}
This work has been partially supported by the DFG (SFB/TR12).
We thank the Erwin Schr\"odinger Institute for Mathematical Physics in Vienna,
where parts of this work have been prepared, 
for their kind hospitality.
Finally, we thank the anonymous referee for a useful remark on
the enhancement of localization due to the radiation field.


\vspace{1.5cm}

{\footnotesize

\noindent
{\sc Martin K\"onenberg}\\ 
Fakult\"at f\"ur Mathematik und Informatik\\
FernUniversit\"at Hagen\\
L\"utzowstra{\ss}e 125\\ 
58084 Hagen, Germany.\\
{\em Present address:}\\ 
Fakult\"at f\"ur Physik\\
Universit\"at Wien\\
Boltzmanngasse 5\\
1090 Vienna, Austria.\\
{\tt martin.koenenberg@univie.ac.at}
 
\bigskip

\noindent
{\sc Oliver Matte}\\ 
Fakult\"at f\"ur Mathematik\\
Technische Universit\"at M\"unchen\\
Boltzmannstra{\ss}e 3\\
85748 Garching, Germany.\\ 
{\em Present address:}\\ 
Mathematisches Institut\\
Ludwig-Maximilians-Universit\"at\\
Theresienstra{\ss}e 39\\
80333 M\"unchen, Germany.\\
{\tt matte@math.lmu.de}
}

\end{document}